\newcommand\amp{{\if\f@series b\usefont{T1}{cmr}{bx}{it}\else\usefont{T1}{cmr}{m}{it}\fi\& }}
\title{Discrete-time systems in quasi-standard form\\ and the $\h_6$ coalgebra symmetry}
\author[1]{Pavel Drozdov}
\affil[1]{Dipartimento di Scienze Matematiche, Informatiche e Fisiche, Universit\`a degli Studi di Udine, Via delle Scienze 206, 33100 Udine, Italy  \amp  INFN Sezione di Trieste, Via A. Valerio, 2,
34127  Trieste, Italy;  
\href{mailto:drozdov.pavel@spes.uniud.it}{drozdov.pavel@spes.uniud.it}  \newline 
}
\author[2]{Giorgio Gubbiotti}
\affil[2]{Dipartimento di Matematica ``Federigo Enriques'',
Universit\`a degli Studi di Milano, Via C. Saldini 50, 20133 Milano, Italy
\amp INFN Sezione Milano, Via G. Celoria 16, 20133 Milano, Italy;
\href{mailto:giorgio.gubbiotti@unimi.it}{giorgio.gubbiotti@unimi.it} 
}
\date{\today}
\renewcommand{\thesection}{\Roman{section}}
\newcommand{\galg}{\mathfrak{g}}
\newcommand{\h}{\mathfrak{h}}
\renewcommand{\vec}{\boldsymbol}
\newcommand{\vq}{\vec{q}}
\newcommand{\vQ}{\vec{Q}}
\newcommand{\vp}{\vec{p}}
\newcommand{\vP}{\vec{P}}
\newcommand{\vl}{\vec{\lambda}}
\newcommand{\Z}{\mathbb{Z}}
\newcommand{\R}{\mathbb{R}}
\newcommand{\I}{\mathbb{I}}
\DeclareMathOperator{\Span}{span}
\DeclareMathOperator{\id}{id}
\DeclareMathOperator{\Pol}{Pol}
\theoremstyle{definition}
\newtheorem{definition}{Definition}[section]
\theoremstyle{theorem}
\newtheorem{theorem}{Theorem}[section]
\newtheorem*{theorem*}{Theorem}
\newtheorem{corollary}[theorem]{Corollary}
\newtheorem{lemma}[theorem]{Lemma}
\newtheorem{proposition}[theorem]{Proposition}
\theoremstyle{remark}
\newtheorem{remark}{Remark}[section]
\numberwithin{equation}{section}
\begin{document}

\maketitle

\begin{abstract}
    \noindent
    In this paper, we characterize all discrete-time systems in
    quasi-standard form admitting coalgebra symmetry with respect to the
    Lie--Poisson algebra $\h_{6}$. The outcome of this study is a family of
    systems depending on an arbitrary function of three variables, playing
    the r\^ole of the potential. Moreover, using a direct search approach,
    we classify discrete-time systems from this family that admit an
    additional invariant at most quadratic in the physical variables. We
    discuss the integrability properties of the obtained cases, their
    relationship with known systems, and their continuum limits.
\end{abstract}

\tableofcontents

\section{Introduction}

Continuous and discrete integrable systems are of fundamental importance in
applications because their behavior can be fully understood, and they usually
form ``universality classes'' for generic systems. See for
instance~\cite{Calogero1991}, where it was shown that many integrable partial
differential equations (PDEs) are obtained from general nonlinear equations
through asymptotic analysis.

For this reason, seeking and constructing algorithmically new integrable
systems is one of the main tasks in the integrable systems theory, and it has
been pursued since its beginning, see, for instance, the work of
Darboux~\cite{Darboux1901}. With a focus on finite-dimensional integrable
systems, \textit{i.e.}\ systems of ordinary differential or difference
equations, many methods to construct integrable systems were developed
throughout the years, see for instance the
reviews~\cite{Hietarinta1987,MillerPostWinternitz2013R}. Most methods, like the
direct construction of the invariants~\cite{Fris1965} or the Jacobi
geometrization method~\cite{Drach1935}, become computationally impractical as
the size of the underlying system increases. So, many methods relying on
higher-level algebraic or geometric structures were devised.

In this paper, we characterize and study a class of discrete-time systems in
$N$-degrees of freedom admitting coalgebra symmetry with respect to the Lie
algebra $\h_{6}$~\cite{Zhang_EtAl1990.Coherent_states_Theory_applications}.
To be more precise, our object of study will be second-order difference
equations
\begin{equation}\label{discrete-system-general-form}
    q_{k} (t+h) =  F_k (\vec q(t), \vec q(t-h)),
    \quad
    \vec{q}(t) = \left( q_{1}(t),\dots,q_{N}(t) \right),
    \quad k=1,\dots,N,
\end{equation}
for an unknown sequence  $\{  \vec{q}_k \}_{k \in \mathbb{Z}} $ in
$\mathbb{R}^N $ for $N \in \mathbb{N} $, $h>0$ is a (fixed) constant, and $t\in
h\Z$.  We follow the classical notation of~\cite{McLachlan1993,
    Suris1994Garnier, Suris1994Symmetric, Suris1994inversesquare} to avoid
double indexing in the formul\ae.  From now on we will call the number $N$  \emph{degrees of freedom} of our system. We will focus on the systems arising as a discrete Euler--Lagrange equations of the following class of discrete Lagrangians~\cite{Logan1973,Gubbiotti_dcov}:
\begin{equation}
    L = \sum_{k=1}^{N} \ell_{k}\left(q_{k}(t+h) q_{k}(t)\right)-
    V(\vec{q}(t)),
    \label{eq:dLgen}
\end{equation}
where the functions $\ell_{k}=\ell_{k}(\xi)$ are smooth and locally invertible functions in a given open domain of $\R$. Throughout this paper, we will refer
this kind of systems as \emph{systems in quasi-standard form}. The term
``systems in quasi-standard form'' was introduced in~\cite{GubLat_sl2} as an
extension of the systems in the standard form used
in~\cite{Suris1989,McLachlan1993}, which is obtained for $\ell_{k}(\xi)=\xi$,
for $k=1,\dots,N$. Particular cases of systems in quasi-standard form appeared
in~\cite{Suris1994inversesquare}.

The discrete Euler--Lagrange equations in the case of the Lagrangian
\eqref{eq:dLgen} take the form: 
\begin{equation}
    \ell_k' \left( q_k(t+h) q_k(t) \right) q_k(t+h)
    +\ell_k' \left( q_k(t) q_k(t-h) \right) q_k(t-h) 
    = \frac{\partial V (\vec q (t))}{\partial q_k (t)}.
    \label{eq:dEL}
\end{equation}
Defining the canonical momenta as $p_k (t) = \ell_k' \left( q_k(t) q_k(t-h)
\right) q_k(t-h),$ $k=1,\dots, N$, the discrete Euler--Lagrange
equations~\eqref{eq:dEL} can be written in canonical form as:
\begin{subequations} \label{eq:dEL-quasistandard}
    \begin{align}
        \ell_k' \left( q_k(t+h) q_k(t) \right) q_k(t+h) + p_k(t)
        = \frac{\partial V (\vec q(t)) }{ \partial q_k (t)},
        \\
        p_k(t+h) = \ell_k' \left( q_k(t+h) q_k(t) \right) q_k(t), 
        \label{eq:dEL-quasistandard-b}
    \end{align}
\end{subequations}
see~\cite{Bruschi_EtAl1991.Integrable_symplectic_maps}.

Our construction will make use of the notion of coalgebra symmetry for
discrete-time systems that was recently introduced by one of the authors
in~\cite{Gubbiotti_EtAl2023.Coalgebra_symmetry_discrete_systems}. We produce
several examples, including some new generalizations, of (super)integrable
discrete systems and quasi-integrable discrete systems in arbitrary degrees of
freedom. This paper extends a previous work~\cite{GubLat_sl2}, where an
analogous classification and study was carried out in the case of the Lie
algebra $\mathfrak{sl}_{2}(\R)$, a Lie subalgebra of $\h_{6}$.

The coalgebra symmetry approach is an algebraic method that was introduced
in~\cite{Ballestreros_EtAl1996.Ndimensional_classical_integrable_systems_Hopf_algebras},
and later developed
in~\cite{Ballesteros_Ragnisco1998.systematic_construction_completely_integrable_Hamiltonians_coalgebras},
to algorithmically construct (classical) Liouville integrable systems in
arbitrary degrees of freedom. The idea behind this approach is to extend dynamical systems from one to $N$ degrees of freedom by interpreting the ``base case'' as a symplectic realization of a Poisson (co)algebra, and then
embedding it into $N$-degrees of freedom through tensor products, and through a coassociative map called the coproduct. The obtained systems in higher degrees of freedom are naturally endowed with constants of motion arising as the images
of the Casimir invariants of the Poisson (co)algebra.

Throughout the years this method has been extended to many other cases, and
many integrable and superintegrable systems have been understood within this
framework, see for instance the review~\cite{Ballesteros_EtAl2009.Super_integrability_coalgebra_symmetry_Formalism_applications}. Additional
applications and examples of the coalgebra symmetry approach include
superintegrable systems defined on non-Euclidean spaces
\cite{BallesterosHerranz2007,Ballesteros_et_al2008PhysD,Ballesteros_et_al2009AnnPhys,
    BallesterosHerranz2009, Ballesteros_et_al2011AnnPhys, Riglioni2013,
PostRiglioni2015}, models with spin-orbital interactions
\cite{Riglioni_et_al2014}, discrete quantum mechanical systems
\cite{LatiniRiglioni2016}, and superintegrable systems related to the
generalized Racah algebra $R(n)$  \cite{DeBie_et_al2021, Latini2019,
Latini_et_al2020embedding, Latini_et_al2021}. We also mention some extensions of
the method given in~\cite{Ballesteros_et_al2002} to comodule algebras, in
\cite{Musso2010loop} to loop coproducts, and last but not least the extension of the method to the discrete setting in~\cite{GubLat_sl2,Gubbiotti_EtAl2023.Coalgebra_symmetry_discrete_systems}.

The structure of the paper is the following: in \cref{sec:background} we give
some background material, including the notion of integrability and
quasi-integrability of difference equations, and we present the coalgebra
symmetry construction for the Lie algebra $\h_{6}$. \Cref{sec:main-result}
contains the main result of this paper: a characterization of all systems in
quasi-standard form admitting coalgebra symmetry with respect to the Lie
coalgebra $\h_6$. In particular, we show that those systems reduce to standard
form, \emph{i.e.} $\ell_{k}(\xi)=\xi$.  In \cref{sec:integrability-props}, we
impose the existence of a polynomial invariant for the system on the generators
of the coalgebra and we isolate six different potentials admitting invariants.
Some potential depends on arbitrary functions, \emph{i.e.} they are
\emph{singular} in a sense that will be specified later.  In \cref{sec:sys}, we
study in detail the integrability properties of these systems in standard form
in an arbitrary number of degrees of freedom, proving that there is one
maximally superintegrable system, three superintegrable systems, and two
quasi-integrable systems. The quasi-integrable systems are the singular ones.
Finally, in \cref{sec:conclusions}, we give some conclusions and an
outlook for future works and potential further directions. We also present
in \cref{tab:resuming} a summary of the systems we found and their properties.

\section{Background material}
\label{sec:background}

In this section, we give a brief outlook of the background material that we need for our discussion. In particular, we will review the notion of
integrability for discrete systems, with a special focus on the formulation of the coalgebra symmetry for discrete systems. Then, we will present the
coalgebra symmetry construction in the case of $\h_{6}$ algebra, demonstrating its universal invariants.

\subsection{Discrete-time integrable systems} 
\label{subsec:def-s, discrete time}

We will need several definitions from integrable systems theory. First of all, for us, an $N$-dimensional system of equations (either continuous or discrete) is \textbf{integrable} if it possesses enough constants of motion (or
invariants), \textit{i.e.} \textit{functionally independent conserved
quantities} $I(t)$ such that along the evolution of the system
 \begin{equation}\label{invariants-general-def-t+h}
 	I (t+h) = I(t). 
 \end{equation}
 In the discrete system case, the functional relation \eqref{discrete-system-general-form} can be interpreted as a \textit{finite-translation map}
 \begin{equation}\label{finite-translation-map} 
 	\Phi_h: x_k (t) \mapsto x_k(t+h). 
 \end{equation}
 We call \eqref{finite-translation-map} a \textbf{map form of the difference equation \eqref{discrete-system-general-form}}. In the map language, \eqref{invariants-general-def-t+h} can be rewritten as 
 \begin{equation}\label{invariants-general-def-pullback}
 	\Phi_h^* I (t)= I (t).
 \end{equation}
 
In the theory of continuous integrable systems, one usually considers
Hamiltonian systems, defined by the time flow of a Hamiltonian function
$\mathcal{H}$ on a symplectic manifold $(\mathcal{M}, \omega)$, or more
generally a \textit{Poisson manifold} $(\mathcal{P}, \{ \, , \, \} )$.  When no
possibility of confusion arises, we will drop the explicit reference to the
symplectic form $\omega$, and to the Poisson bracket $\left\{ , \right\}$.

The analog of this concept in the discrete setting can be understood as follows:
take an initial condition $x(0)\in\mathcal{M}$. Then, its time evolution through a
Hamiltonian flow:
\begin{align}
    \Phi: \  \mathcal{M}& \to \mathcal{M} \notag 
    \\ 
    x(0)	& \mapsto x(t)
\end{align}
leaves invariant the symplectic form: $\Phi^* \omega =\omega $, \emph{i.e.} it
is a \textit{symplectic map}. Iterating such a map, we get a discrete-time
system. So, it is natural to reformulate the notion of integrability for
discrete-time systems in the language of symplectic maps. That is, we have the following definition of integrability:
 
\begin{definition}[Liouville integrability of discrete systems~\cite{Bruschi_EtAl1991.Integrable_symplectic_maps}] 
    Let $\Phi_h$ be a $2N$-dimensional symplectic map, \textit{i.e.} map form
    of a system of difference equations in $N$ degrees of freedom. Hence
    $\Phi_h$ is \textbf{Liouville integrable} if it possesses $N$ functionally
    independent invariants in involution with respect to the associated Poisson
    bracket for any $h>0$. 	
\end{definition}
 
 \begin{remark}
        We recall that a non-degenerate Poisson manifold $\mathcal{P}$ is isomorphic 
        to a symplectic manifold $\mathcal{M}$. Hence, the associated Poisson bracket for $f,g \in \mathcal{C}^\infty( \mathcal{P}) $ is given by 
 	\begin{equation}
 		\{f , g \} = \omega (X_f, X_g),
 	\end{equation}
        where $X_f, X_g $ are the corresponding Hamiltonian vector fields,
        see for instance the textbook~\cite[Chap. 6]{Olver1986}. 
 	
        In the degenerate case, one can slice a Poisson manifold into symplectic
        leaves (with each of them being a symplectic manifold), using the
        $N$-dimensional  \textbf{symplectic realization} 
 	\begin{equation}
 		\mathrm{D}: \ (\mathcal{M}, \omega) \to (\mathcal{P},  \{ \, , \, \} ),
 	\end{equation}
        given, for example, in terms of \textit{Darboux variables} 
        $\{q_i,p_j\} = \delta_{ij} $: 
 	\begin{equation}
            \mathrm{D}: x \mapsto x (q_1, \dots, q_N, p_1, \dots, p_N). 
 	\end{equation}
        The number of invariants to integrate the system now depends on the
        rank of the Poisson bracket. 
 \end{remark}

We also recall a few definitions for special cases of integrable systems: 
\begin{definition}\label{def:S-QMS-MS}
    Let $\Phi_h$ be a $2N$-dimensional symplectic Liouville integrable map. Then:
    \begin{enumerate}[(i)]
            \item If $\Phi_h$ possesses $k$ additional invariants in involution with respect to the Poisson bracket for any $h>0$ (\textit{i.e.} $N+k$ in total), it is said to be a \textbf{superintegrable map}. 
            \item If $\Phi_h$ possesses $N-2$ additional invariants in involution with respect to the Poisson bracket for any $h>0$ (\textit{i.e.} $2N-2$ in total), it is said to be a \textbf{quasi-maximally superintegrable (QMS)} map.\label{def:QMS} 
            \item  If $\Phi_h$ possesses $N-1$ additional invariants in involution with respect to the Poisson bracket for any $h>0$ (\textit{i.e.} $2N-1$ in total), it is said to be a \textbf{maximally superintegrable (MS)} map. \label{def:MS}  
    \end{enumerate}
\end{definition}

We also recall the following definition concerning systems that are ``almost''
integrable:

\begin{definition}\label{def:quasi-integrable}
    Let $\Phi_h$ be a $2N$-dimensional symplectic map. Then $\Phi_{h}$ is
    \textbf{quasi-integrable (QI)} when it possesses $N-1$ commuting
    invariants.  
\end{definition}
 
Finally, we recall the notion of coalgebra symmetry for discrete-time
systems as introduced
in~\cite{Gubbiotti_EtAl2023.Coalgebra_symmetry_discrete_systems}.  The concept
of coalgebra was formulated in quantum group theory
\cite{ChariPressley1994Book,Drinfeld1987}. Precisely, a \emph{coalgebra} is a
pair of objects $(\mathscr{U},\Delta)$ where $\mathscr{U}$ is a unital, associative
algebra and $\Delta\colon \mathscr{U} \rightarrow \mathscr{U}\otimes
\mathscr{U}$ is a \emph{coassociative} map.  That is, $\Delta$ satisfies the
following condition:
\begin{equation}
    (\Delta \otimes \id) \circ \Delta=(\id \otimes \Delta) \circ \Delta
    \iff
    \begin{tikzpicture}[baseline={(0,0)},thick]
        \node (a1) at (0,0){$\mathscr{U}$};
        \node (a2) at ($({2*cos(30)},{2*sin(30)})$) {$\mathscr{U}\otimes\mathscr{U}$};
        \node (a3) at ($({2*cos(30)},{-2*sin(30)})$) {$\mathscr{U}\otimes\mathscr{U}$};
        \node (a4) at (4,0){$\mathscr{U}\otimes\mathscr{U}\otimes\mathscr{U}$};
        \draw[->] (a1) edge node[above left]{$\Delta$} (a2) (a2) edge node[above right]{$\Delta\otimes\id$}(a4) ;
        \draw[->] (a1) edge node[below left]{$\Delta$} (a3) (a3) edge node[below right]{$\id\otimes\Delta$} (a4);
    \end{tikzpicture}
    \label{eq:commdiag}
\end{equation}
and it is an algebra homomorphism from $\mathscr{U}$ to  
$\mathscr{U} \otimes \mathscr{U}$:
\begin{equation}
    \Delta (X \cdot  Y) = \Delta (X)  \cdot \Delta (Y) \qquad \forall \, X, Y \in \mathscr{U} \, .
    \label{eq:homalg}
\end{equation}
The map $\Delta$ is called the \emph{coproduct map}. When there is no possible
confusion on the coproduct map, it is customary to denote the coalgebra simply
by $\mathscr{U}$. Our case of interest are \emph{Poisson coalgebras},
\emph{i.e.} the algebra $\mathscr{U}$ is a commutative associative algebra,
admitting a bilinear derivation $\left\{ \, , \, \right\}$, satisfying the
Jacobi identity, the \emph{Poisson bracket}. Additionally, we also require that
the coproduct map is a \emph{Poisson homomorphism}, with respect to the
standard induced Poisson bracket on $\mathscr{U}\otimes\mathscr{U}$:
\begin{equation}
    \left\{X \otimes Y, W \otimes Z\right\}_{\mathscr{U} \otimes \mathscr{U}}
    \coloneqq 
    \left\{X, W\right\}_{\mathscr{U}} \otimes YZ
    +XW \otimes \left\{Y,Z\right\}_{\mathscr{U}} 
    \, , \quad X,Y,Z,W \in \mathscr{U}.
    \label{eq:pstrucatensa}
\end{equation}

Then, we give the following definition:
\begin{definition}[Coalgebra symmetry for the discrete-time systems, \cite{Gubbiotti_EtAl2023.Coalgebra_symmetry_discrete_systems}]  \label{def-coalg-symmetry-discrete} 
    Let us assume we are given the following:
    \begin{itemize}
        \item  $\Phi_h$ a $2N$-dimensional symplectic map;
        \item $(\mathscr{U}, \Delta)$ a Poisson coalgebra with generators
            $\Set{J_1, \dots, J_k}$;
        \item a symplectic realization $\mathrm{D}$ of $\mathscr{U}$ in $N$ degrees of freedom,
    \end{itemize}
    Then, $\Phi_h$ is said to possess \textbf{coalgebra symmetry} with respect to $(\mathscr{U}, \Delta)$, if for any $N \in \mathbb{N}$ the evolution of the generators in $\mathrm{D}$ is:
        \begin{enumerate}[(i)]
            \item closed in the Poisson coalgebra:
            \begin{equation}
                \Phi_h^* J_i (t) = J_i (t+h) = 
                a_i \left( J_1(t), \dots, J_k(t) \right), \qquad i=1, \dots, k, 
            \end{equation}
            with $a_i \in \mathcal{C}^\infty (\mathscr{U}) $;
            \item $T_h$ is a Poisson map with respect to $\mathscr{U}$: 
            \begin{equation}
                \left\{J_i (t+h),\
                 J_j (t+h) \right\}_\mathscr{U} = \Phi_h \left( \{J_i(t),\ J_j (t) \}_\mathscr{U} \right), \qquad 
                i,j=1,\dots, k;
            \end{equation}
            \item assuming that $\mathscr{U}$ admits $r$ independent Casimir functions $C_1(t), \dots, C_r (t)$, they are preserved as invariants by $\Phi_h$: 
            \begin{equation}
                \Phi_h^*  C_i \left( t \right) \equiv C_i (t+h) = C_i(t), \qquad
                i=1,\dots, r. 
            \end{equation}
            \label{def-iii} 
        \end{enumerate}
\end{definition}

Then, under this assumption it is possible to prove the following theorem:

\begin{theorem}[\cite{Gubbiotti_EtAl2023.Coalgebra_symmetry_discrete_systems}] 
   A $2N$-dimensional symplectic $\Phi_{h}$ map admitting coalgebra symmetry
   with respect to the coalgebra $(\mathscr{U},\Delta)$ admits a set of $2(N-1)r$
   invariants of which $(N-1)r$ are commuting.
   \label{thm:sumup}   
\end{theorem}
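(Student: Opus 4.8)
The plan is to adapt the classical Ballesteros--Ragnisco coproduct-Casimir construction to the discrete setting governed by \cref{def-coalg-symmetry-discrete}. First I would introduce the iterated coproduct $\Delta^{(m)}\colon\mathscr{U}\to\mathscr{U}^{\otimes m}$, defined recursively from $\Delta$ by coassociativity, and compose it with the symplectic realization $\mathrm{D}$ to produce, for each Casimir $C_i$ ($i=1,\dots,r$) and each $m=2,\dots,N$, a \emph{left} coproduct Casimir $C_i^{(m)}$ — the image of $C_i$ under the coproduct acting on the first $m$ tensor factors with the identity on the remaining $N-m$ — and, symmetrically, a \emph{right} coproduct Casimir $C_{i,(m)}$ built from the last $m$ factors. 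This yields $N-1$ left and $N-1$ right objects for each of the $r$ Casimirs, hence the $2(N-1)r$ candidate invariants of the statement, the $(N-1)r$ left ones forming the distinguished family that we aim to show is in involution.

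Next I would establish the purely algebraic backbone, independent of the dynamics. Using that the coproduct is a Poisson homomorphism (see \eqref{eq:pstrucatensa}) and that each $C_i$ is central, one shows the fundamental commutation $\{C_i^{(m)},\,J_\ell^{(N)}\}=0$ with every full generator $J_\ell^{(N)}$, together with the mutual involution of the left family, $\{C_i^{(m)},\,C_j^{(n)}\}=0$ for all $i,j$ and $m,n$. The engine here is the generalized coassociativity $\Delta^{(N)}=(\Delta^{(m)}\otimes\Delta^{(N-m)})\circ\Delta$, which exhibits the first $m$ factors as a sub-coalgebra in which $C_i^{(m)}$ is itself a Casimir; the homomorphism property then propagates centrality up to the full realization. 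This step settles the ``commuting'' half of the claim once invariance is known.

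The heart of the argument, and the step I expect to be the main obstacle, is to promote the invariance of the \emph{total} Casimirs $C_i^{(N)}$ — granted directly by condition (iii) of \cref{def-coalg-symmetry-discrete} — to the invariance under $\Phi_h$ of \emph{all} partial coproduct Casimirs. The subtlety is that the partial generators $J_\ell^{(m)}$ do \emph{not} evolve in closed form under the $N$-particle map: the universal functions $a_\ell$ of condition (i) govern only the full generators, and the coupling through the potential makes $\Phi_h^*J_\ell^{(m)}$ depend on the entire collection $\vec J^{(N)}$. Thus the conservation of $C_i^{(m)}$ cannot come from naive closure; it must arise as a genuine cancellation rooted in centrality, exactly the discrete counterpart of the continuous identity $\{C_i^{(m)},\mathcal{H}\}=0$. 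My plan is to combine condition (ii) (that $\Phi_h$ is a Poisson map) with condition (i): applying $\Phi_h^*$ to $\{C_i^{(m)},J_\ell^{(N)}\}=0$ gives $\{\Phi_h^*C_i^{(m)},\,a_\ell(\vec J^{(N)})\}=0$, and invertibility of the generator map forces $\Phi_h^*C_i^{(m)}$ to remain central with respect to every full generator. To pin down $\Phi_h^*C_i^{(m)}=C_i^{(m)}$ rather than merely an arbitrary function of the Casimirs, I would run an induction on $N$ that exploits the ``for any $N\in\mathbb N$'' clause: the nesting of the coproduct identifies the partial Casimir at level $m$ with a total Casimir of a lower-dimensional coalgebra-symmetric realization, whose invariance is supplied by condition (iii) at that level. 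Controlling this identification — ensuring that the induced lower-level map is again coalgebra-symmetric — is precisely the delicate point.

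Finally I would assemble the count. The $(N-1)r$ left Casimirs are then invariants in involution, by the algebraic backbone together with the invariance step, and the $(N-1)r$ right Casimirs furnish the remaining invariants, for a total of $2(N-1)r$ of which $(N-1)r$ commute, as claimed. Since the statement asserts only the existence of such a set and not its functional independence, no further genericity analysis is required.
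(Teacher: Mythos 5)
Your candidate invariants coincide with the ones the paper points to: the left and right coproduct Casimirs of \cref{rmk:casimirs}, realized for $\h_{6}$ in~\eqref{left-and-right-casimirs} (note the paper does not actually prove \cref{thm:sumup}; it imports it from the cited reference and only exhibits this construction). Your algebraic backbone is also sound: each $C_i^{(m)}$ Poisson-commutes with every full generator, the left family is in involution, and conditions (i)--(ii) of \cref{def-coalg-symmetry-discrete} together with symplecticity do show that $\Phi_h^{*}C_i^{(m)}$ stays in the Poisson commutant of the full generators (granted the invertibility of the Jacobian of the $a_i$, which you assume implicitly).

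The genuine gap is the invariance step, exactly where you locate ``the delicate point'', and as planned it fails: there is no induced lower-level map to which condition (iii) can be applied. Closure holds \emph{only} for the full generators $\vec{J}^{(N)}$; under the $N$-particle map the first $m$ degrees of freedom do not evolve autonomously, because the coupling functions are evaluated on the full generators. For the $\h_{6}$ systems of this paper, writing $A_{\pm}^{(m)}$, $B_{\pm}^{(m)}$, $M^{(m)}$ for the sums~\eqref{eq:h6reaN} truncated to the first $m$ degrees of freedom, the map~\eqref{eq:cangenh6} yields
\begin{equation}
    \Phi_h^{*}A_{-}^{(m)}
    = M^{(m)}\frac{\partial V}{\partial A_{-}}
    +2A_{-}^{(m)}\frac{\partial V}{\partial B_{-}}
    -A_{+}^{(m)},
\end{equation}
where $\partial V/\partial A_{-}$ and $\partial V/\partial B_{-}$ are functions of the \emph{full} generators $(A_{-}^{(N)},B_{-}^{(N)},M)$. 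Hence the ``restriction'' of $\Phi_h^{(N)}$ to the first $m$ particles is not a map of those particles at all, let alone the coalgebra-symmetric $m$-particle map $\Phi_h^{(m)}$; condition (iii) at level $m$ is a statement about $\Phi_h^{(m)}$ and gives no information on $\Phi_h^{(N)*}C_i^{(m)}$. Your induction cannot start.

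Worse, the gap cannot be repaired by any argument that, like yours, uses only conditions (i)--(iii): these constrain $\Phi_h$ only through the full generators, and fixing the full generators does not fix the (non-abelian) commutant elementwise. Concretely, take $\vl=(1,0,\dots,0)$ and, for each $N\geq 4$, the symplectic map $(\vq,\vp)\mapsto(R\vq,R\vp)$ with $R\in SO(N)$ a rotation mixing the third and fourth axes: it fixes $\vl$, hence fixes every full generator in~\eqref{eq:h6reaN}, so (i)--(iii) hold trivially with $a_i=\id$; yet it moves $C^{[3]}$, which for this $\vl$ reduces by~\eqref{eq:left-casimirs} to $(p_2q_3-p_3q_2)^{2}$. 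A valid proof must therefore invoke structure of the map beyond the three axioms. For the standard-form systems at hand such structure is available: the map~\eqref{eq:cangenh6} factors as the time-one Hamiltonian flow of $V(A_{-},B_{-},M)$ followed by the time-$(-\pi/2)$ flow of $\tfrac{1}{2}(B_{+}+B_{-})$, and both Hamiltonians are functions of the full generators, so every function commuting with all of them --- in particular each left and right Casimir --- is invariant under $\Phi_h$. That factorization, or the direct verification carried out in the cited reference, is the ingredient your proposal is missing.
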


\begin{remark}
    The proof of the previous theorem involves the explicit construction of two
    sets of invariants, called the \emph{left and right Casimir functions}.
    Here, for the sake of brevity, we will not present the general formulas,
    but we will show in the next subsection how the construction works in the
    case of interest, \emph{i.e.} the $\h_{6}$ algebra. Even more importantly,
    we observe that \emph{not all invariants} produced with this procedure will
    be functionally independent. So, the number $2(N-1)r$ has to be interpreted
    as an \emph{upper bound} on the number of \emph{bona fide} invariants
    obtainted through the coalgebra symmetry method.
    \label{rmk:casimirs}
\end{remark}

Let us now observe that in~\cite{GubLat_sl2} it was shown how to use the
\Cref{def-coalg-symmetry-discrete} constructively to characterize all systems
in quasi-standard form admitting coalgebra symmetry with respect to the
$\mathfrak{sl}_2 (\mathbb{R})$ Lie algebra.  In this paper, we make the following logical step in developing the theory of coalgebra symmetry for
discrete systems, that is finding an analogous result in the case of the more
general $\mathfrak{h}_6$ algebra, which contains $\mathfrak{sl}_2 (\mathbb{R})$
as a subalgebra. 

\subsection{Coalgebra symmetry application to the two-photon $\mathfrak{h}_6$ Lie algebra}
\label{subsec:h6}

The two-photon Lie algebra, $\mathfrak{h}_{6}$ algebra is spanned by the six generators 
$\vec J \coloneqq   \left\{ K,A_{+},A_{-},B_{+},B_{-},M \right\}$
with the following commutation rules:
\begin{equation}
    \begin{array}{lll}
        [K,A_{+}] = A_{+}, & [K,A_{-}] =-A_{-}, &
        [A_{-},A_{+}] = M,
        \\{}
        [K,B_{+}], = 2 B_{+}, &
        [K,B_{-}] = -2 B_{-} &
        [B_{-},B_{+}] = 4K +2M,
        \\{}
        [A_{+},B_{-}] = -2 A_{-}, &
        [A_{+},B_{+}] = 0,
        &
        [M, \phantom{A} \cdot\ ] = 0,
        \\{}
        [A_{-},B_{+}] = 2 A_{+}, &
        [A_{-},B_{-}] =0.
    \end{array}
    \label{eq:h6comm}
\end{equation}
This is a non-semisimple Lie algebra with a central element $M$.
            
\begin{remark}
    The $\h_{6}$ algebra admits as Lie subalgebras many other notable Lie
    algebras: the already discussed $\mathfrak{sl}_{2}(\R)$; its central
    extension $\mathfrak{gl}_{2}(\R)$; the oscillator algebra $\h_{4}$; and the
    Heisenberg algebra $\h_{3}$. Finally, $\h_{6}$ itself is isomorphic to the
    $(1+1)$ Schr\"odinger Lie algebra~\cite{PhysRevD.5.377,
    niederer1972maximal}.
    \label{rmk:subalg}
\end{remark}

To get a Poisson algebra from a Lie algebra, we consider its \emph{symmetric
algebra}. The symmetric algebra $S\galg$ of a  Lie algebra $\galg$ is
constructed from the polynomials on the dual of a Lie algebra: $S\galg\cong
\Pol (\galg^* )$, where we denoted by $\Pol(\galg^{*})$ the polynomials in the
coordinate functions of $\galg$, see for instance~\cite[Chap.
7]{laurent-gengouxPoissonStructures2013}\footnote{In fact over the real numbers
    this construction extends easily to the \emph{algebra of functions} of
    $\galg$, but for our scopes in this paper it will be enough to consider the
symmetric algebra.}. Indeed, on the coordinate functions, which by abuse of
notation we will still denote with the same letters of the basis of the Lie
algebra, we define a linear Poisson bracket $\{ \ , \ \}$, called the
Lie--Poisson bracket. This Lie--Poisson bracket is induced by the structure
constants of the original Lie algebra $\galg$.  The Lie--Poisson bracket is
explicitly given by:
\begin{equation}
    \{ J_i, \ J_j \} = \sum_k s^k_{ij} J_k,
\end{equation}
where $J_i, J_j$ are the basis of $\galg$, and $s^k_{ij} $ are the structure
constants of Lie algebra $\galg$. The bracket is then extended to all elements of $S\galg$ by the Leibnitz rule.
 
In our case of study, the symmetric algebra $S\h_{6}$ is endowed with the following quartic Casimir element
\cite{Ballesteros_EtAl2009.Ndimensional_integrability_twophoton_coalgebra_symmetry}:    
\begin{equation}
    C_0 = (M B_+ - A_+^2) (MB_- - A_-^2) - 
    (MK- A_+ A_- + M^2/2)^2,
\end{equation}
which can be factorized as: 
\begin{equation}\label{eq:Cash6} 
    C \coloneqq \frac{C_0}{M} = 
        M B_+ B_- 
        - B_+ A_-^2 
        - B_- A_+^2 
        - M \left( K + \frac{M}{2} \right)^2
        + 2 A_- A_+ \left( K + \frac{M}{2} \right),
\end{equation}
since the central element $M$ is a trivial Casimir element.

We recall how to use the coalgebra symmetry approach to $S\h_{6}$ to obtain two
families of $N-2$ new Casimir elements for a given dimension $N \in \mathbb{N}_{>2}$, following the construction made in
\cite{ballesterosTwoPhotonAlgebraIntegrable2001a,
Ballesteros_EtAl2009.Ndimensional_integrability_twophoton_coalgebra_symmetry},
 
First, the Lie--Poisson algebra $S\h_{6}$ can be endowed with a coalgebra
structure using the \textit{primitive coproduct}
\cite{Tjin1992.introduction_quantized_Lie_groups_algebras} $\Delta\colon S\h_6
\to S\h_6 \otimes S\h_6 $, acting as 
 \begin{equation}
 	\Delta(J)= J \otimes 1 + 1 \otimes J, 
 	\qquad
 	J \in \vec J. 
\end{equation}
The primitive coproduct is a Poisson algebra homomorphism. Moreover, for $m>2$
by recursion one can define the \emph{left and right} higher-order $m$th
\emph{coproduct} $\Delta^{(m)}_{R}\colon \h_6 \rightarrow \h_6^{\otimes m}$,
and $\Delta^{(m)}_{L}\colon \h_6 \rightarrow \h_6^{\otimes m}$, acting as:
\begin{equation}\label{higher-order-coproducts}
    \Delta_{L}^{(m)}:=
    \bigl(\overbrace{\id\otimes \id\otimes \dots \otimes \id}^{m-2}\otimes
    \Delta\bigr)\circ\Delta^{(m-1)},
    \quad
    \Delta^{(m)}_{R}:=
        \bigl(\Delta\otimes\overbrace{\id\otimes \id\otimes \dots \otimes \id}^{m-2}\bigr)\circ\Delta^{(m-1)}.
\end{equation}
By induction on $m$, it follows that the $m$th coproducts $\Delta^{(m)}_{L}$ and
$\Delta_{R}^{(m)}$ are also Poisson maps on $\h_6^{\otimes
m}$. 

Now, to discuss integrable systems, we need a realization of $S\h_{6}$ into a
symplectic manifold. It is easy to check that a one-degree-of-freedom symplectic realization
of $S\h_{6}$ is given by:
\begin{equation}
    \begin{aligned}
 	\mathrm{D}\colon S\h_{6} & \to \mathcal{C}^{\infty}(\R^{2}_{(q,p)}) 
 	\\ 
	(M, A_+, A_-, B_+, B_-, K)  	&
 \mapsto
	\big( M, A_+ (q, p) , A_-( q,  p) , B_+ (q, p) , B_-(q, p) , K ( q,  p)    \big)
    \end{aligned}
    \label{eq:h6repr1d}
\end{equation}
where $(q,p)$ are Darboux coordinates, and the explicit expressions 
of the functions in~\eqref{eq:h6repr1d} are given by:
\begin{equation}
    \begin{array}{lll}
        \mathrm{D} (  A_{+}) = \lambda p, &	\mathrm{D}( A_{-} )= \lambda q,
        &
        \mathrm{D} ( K )= qp -\displaystyle\frac{\lambda^{2}}{2},
        \\
        \mathrm{D}( B_{+}) = p^{2}, & 	\mathrm{D} (B_{-} )= q^{2}, & 	\mathrm{D}(M) =
        \lambda^{2},
    \end{array}
    \label{eq:h6rea}
\end{equation}
with $\lambda\in\R\setminus\Set{0}$ a non-vanishing parameter.

So, acting with the primitive coproduct, we obtain the following $N$-degrees of
freedom in Darboux coordinates $(\vec{q},\vec{p})$:
\begin{equation}
    \begin{aligned}
	\mathrm{D}: \quad S\h_{6} & 
	\to 
        \mathcal{C}^{\infty}(\R^{2N}_{(\vec{q},\vec{p})})  
	\\ 
(M, A_+, A_-, B_+, B_-, K) 
& \mapsto
	\big( M, A_+ (\vec q, \vec p) , A_-(\vec q, \vec p) , B_+ (\vec q, \vec p) , B_-(\vec q, \vec p) , K (\vec q, \vec p)    \big),
    \end{aligned}
    \label{Nrealization-map}  
\end{equation}
which is explicitly given by             
\begin{equation} 
    \begin{array}{lll}
        A_{+} = \displaystyle\sum_{i=1}^{N}\lambda_{i} p_{i}, & 
        A_{-} = \displaystyle\sum_{i=1}^{N}\lambda_{i} q_{i},
        &
        K = \displaystyle\sum_{i=1}^{N}\left[q_{i}p_{i}
        -\displaystyle\frac{\lambda_{i}^{2}}{2}\right],
        \\
        B_{+} =\displaystyle\sum_{i=1}^{N} p^{2}_{i}, & 
        B_{-} =\displaystyle \sum_{i=1}^{N}q^{2}_{i}, & 
        M =\displaystyle\sum_{i=1}^{N} \lambda^{2}_{i},
    \end{array}
\label{eq:h6reaN}          
\end{equation}      
where we omit the realization symbol $\mathrm{D}(\ )$ for the sake of brevity.
Note that, formula~\eqref{eq:h6reaN} can also be written in in a vector notation as:
\begin{equation}  
    \begin{array}{lll}
        A_{+} =  (\vec \lambda \cdot \vec p), & 
        A_{-} = (\vec \lambda \cdot \vec q),
        &
        K = (\vec q \cdot \vec p) -\displaystyle\frac{\vec\lambda^{2}}{2},
        \\
        B_{+} =\vec p^2, & 
        B_{-} = \vec q^2, & 
        M =\vec \lambda^2,
    \end{array}
\end{equation}  
where $\vec q \coloneqq (q_1, \dots, q_N ) $, $\vec p \coloneqq (p_1, \dots,
p_N) $ and $\vl \coloneqq (\lambda_1, \dots, \lambda_N) $. 

In the realization~\eqref{eq:h6reaN} for $N=1,2$ the Casimir element~\eqref{eq:Cash6} becomes trivial: $\mathrm{D}(C) =0 $.
However, using higher-order coproducts~\eqref{higher-order-coproducts} for $m=2, \dots, N$, one can construct the following $2N-3$ functionally
independent Casimir functions, left and right respectively:
\begin{equation}
    C^{[m]} \coloneqq \Delta^{(m)}_L (C), 
    \qquad 
    C_{[m]} \coloneqq \Delta^{(m)}_R (C), 
\end{equation} 
which, after some calculations, can be explicitly written in the following form:
\begin{subequations}\label{left-and-right-casimirs} 
 	 \begin{align} \label{eq:left-casimirs} 
 	C^{[m]} = \sum_{1 \leq i < j < k}^m 
 	&\big( \lambda_i (p_j q_k - p_k q_j) 
 	+\lambda_j (p_k q_i - p_i q_k ) 
 	+\lambda_k (p_i q_j - p_j q_i)
 	\big)^2, 
 	\\  
 	C_{[m]} = \sum_{N-m+1 \leq i < j  < k }^N &	\big( \lambda_i (p_j q_k - p_k q_j) 
 	+\lambda_j (p_k q_i - p_i q_k ) 
 	+\lambda_k (p_i q_j - p_j q_i)
 	\big)^2.  \label{eq:right-casimirs} 
\end{align}
\end{subequations}
According to \Cref{thm:sumup}, all systems admitting coalgebra symmetry with
respect to the coalgebra $S\h_{6}$ will admit those invariants.  For this
reason, we call these invariants the \emph{universal
invariants}~\cite{BallesterosHerranz2007} for the systems admitting coalgebra
symmetry with respect to the $S\h_{6}$ coalgebra.
 
\begin{remark}\label{rmk-2N-3}
    We remark that for $m < N$, the invariants $C^{[m]} $ (resp. $C_{[m]} $)
    depend only on the variables $\{(q_1, p_1), \dots, (q_m, p_m)\}$ (resp.
    $\{(q_{N-m+1}, p_{N-m+1}), \dots, (q_N, p_N)\}$. Hence, these
    invariants are \textit{local}. Moreover, for $m=N$ the invariants
    $C^{[N]}$ and $ C_{[N]}$ are \textit{non-local} because they depend on
    all variables.  However, we have: 
    \begin{equation}
        C^{[N]} = C_{[N]} \mathop{=}^\eqref{eq:Cash6} C. 
    \end{equation}
    So, in the end, we have that through the coalgebra symmetry construction, applied to the $S\h_{6}$ algebra, we obtain $2N-5$ functional independent invariants.
 \end{remark}

\section{Classification results}
\label{sec:main-result}
The aim of this section is to prove a classification theorem for systems in a quasi-standard form admitting coalgebra symmetry with respect to the
$S\h_{6}$ algebra. This is the content of the following:

\begin{theorem}    \label{thm:h6} 
    A system in quasi-standard form admits coalgebra symmetry
    with respect to the Lie--Poisson coalgebra $\h_6$ if and
    only if
    \begin{equation}\label{eq:main-theorem-eq}
        \ell_k(\xi) = \xi,
        \quad
        V(\vq) = V(\vec \lambda \cdot \vec q , \vec q^2, \vl^2 ).
    \end{equation}
\end{theorem}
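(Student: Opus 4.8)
The plan is to impose \Cref{def-coalg-symmetry-discrete} directly on the map form $\Phi_h$ of the discrete Euler--Lagrange equations \eqref{eq:dEL-quasistandard}. Abbreviating $q_k=q_k(t)$, $p_k=p_k(t)$, $Q_k=q_k(t+h)$, $P_k=p_k(t+h)$ and $\mu_k=\ell_k'(Q_kq_k)$, the system reads $P_k=\mu_kq_k$ and $\mu_kQ_k=\partial V/\partial q_k-p_k$; multiplying the latter by $q_k$ shows that $\mu_k$ depends on the momenta only through the single combination $\eta_k:=q_k(\partial V/\partial q_k-p_k)$, via the implicit relation $\xi_k\ell_k'(\xi_k)=\eta_k$ with $\xi_k=Q_kq_k$. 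Because $\Phi_h$ is the flow of a discrete Lagrangian it is automatically symplectic, so (ii) holds for free, and once the generators close (iii) reduces to checking that the induced closure map fixes the Casimir \eqref{eq:Cash6}. Hence the entire content of coalgebra symmetry is the closure (i): each of $A_\pm,B_\pm,K,M$ of \eqref{eq:h6reaN} must evolve into a function of $A_\pm,B_\pm,K,M$. A short computation gives $M\mapsto M$, $K\mapsto\sum_kq_k\,\partial V/\partial q_k-K-M$ (since $Q_kP_k=\eta_k$), together with $A_+\mapsto\sum_k\lambda_kq_k\,m_k(\eta_k)$, $B_+\mapsto\sum_kq_k^2\,m_k(\eta_k)^2$ and the analogous $A_-,B_-$, where $\mu_k=m_k(\eta_k)$.

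For the necessity part I would first reduce to standard form. Closure of $K$ already forces $\sum_kq_k\,\partial V/\partial q_k$ to be a function of the $\vq$-generators $A_-,B_-,M$. The crucial step is the closure of the purely momentum-built generators $A_+$ and $B_+$. Here I would exploit that, for fixed positions, $m_k(\eta_k)$ depends on the momentum $p_k$ alone, so the function $A_+(t+h)$ (and likewise $B_+(t+h)$) has vanishing mixed momentum second derivatives $\partial^2/\partial p_j\partial p_{j'}$ for $j\ne j'$; expressing these same functions through the generators—whose only momentum-carrying members are $A_+,B_+,K$—produces an overdetermined linear system for the second derivatives of the closure functions which, combined with the single-momentum derivatives (e.g. $-\lambda_jq_j^2m_j'(\eta_j)$ for $A_+$), forces $m_j'\equiv0$, i.e. $\ell_k''=0$. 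Matching the resulting degree-one-in-$\vq$ contribution against $A_-$ then makes the common slope coincide across $k$, and after the trivial rescaling one obtains $\ell_k(\xi)=\xi$. I expect this to be the main obstacle, precisely because the extra generators $A_\pm$ and the central $M$ of the non-semisimple $\h_6$ enlarge the space of admissible closure functions and so must be shown \emph{not} to relax the constraint that already appears for $\mathfrak{sl}_2(\R)$ in~\cite{GubLat_sl2}.

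With $\ell_k(\xi)=\xi$ the map collapses to $P_k=q_k$, $Q_k=\partial V/\partial q_k-p_k$, whence $A_+\mapsto A_-$, $B_+\mapsto B_-$, $M\mapsto M$, while $B_-\mapsto\sum_k(\partial V/\partial q_k)^2-2\sum_kp_k\,\partial V/\partial q_k+B_+$. The binding constraint is now closure of $B_-$: apart from the admissible summand $B_+$, its only momentum dependence is the linear term $-2\sum_kp_k\,\partial V/\partial q_k$, and the only generators linear in $\vp$ are $A_+$ and $K$; differentiating the closure requirement in $p_j$ therefore forces $\partial V/\partial q_j=\alpha\,\lambda_j+\beta\,q_j$ with $\alpha,\beta$ functions of $A_-,B_-,M$ only. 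Equivalently the gradient of $V$ lies in the span of $\vl$ and $\vq$, so $V=V(\vl\cdot\vq,\vq^2,\vl^2)$, and the remaining closures of $A_-$ and $K$ are then automatic.

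Sufficiency is this computation read backwards: substituting $\ell_k(\xi)=\xi$ and $V=V(A_-,B_-,M)$, so that $\partial V/\partial q_k=V_{A_-}\lambda_k+2V_{B_-}q_k$, each of the six evolutions closes explicitly on $A_\pm,B_\pm,K,M$, which gives (i); (ii) is the symplecticity already noted; and (iii) follows by inserting the explicit closure map into \eqref{eq:Cash6} and verifying $C(t+h)=C(t)$ by direct substitution. Collecting the two implications yields the stated equivalence.
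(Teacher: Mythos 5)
Your proposal is correct, and it reaches \Cref{thm:h6} by a genuinely different route than the paper. The difference is sharpest in how the form of $V$ is forced. The paper never uses the closure of $B_{-}$ to constrain $V$: after reducing to standard form (\Cref{lem:ellk}) and writing the evolution (\Cref{prop:system}), it imposes the \emph{Casimir} condition (iii), $C(t+h)=C(t)$, which produces the overdetermined PDE system \eqref{eq:pdesys}; solving that system requires the matrix formulation and rank estimate of \Cref{prop:rank}, the determinant identity of \Cref{lemma:det}, and the verification in \Cref{prop:solsys} (the short geometric argument via cross products works only for $N=3$). You instead extract the potential from the closure condition (i) applied to $B_{-}(t+h)=(\grad V)^2-2\vp\cdot\grad V+B_{+}$: since among the generators only $A_{+}$ and $K$ are linear in $\vp$ (and $B_{+}$ enters as an admissible summand), differentiating the closure requirement in $p_j$ forces $\partial V/\partial q_j=\alpha\lambda_j+\beta q_j$ with $\alpha,\beta$ functions of $A_{-},B_{-},M$, i.e.\ $\grad V\in\Span(\vl,\vq)$, hence $V=V(\vl\cdot\vq,\vq^2,\vl^2)$. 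This is more elementary, works uniformly in $N$, and bypasses the entire rank machinery; it also demotes the Casimir condition from the engine of necessity to a pure sufficiency check. Two caveats. First, your reduction to standard form is heavier than the paper's: \Cref{lem:ellk} gets $\ell_k''=0$ in one line by demanding independence of $B_{+}(t+h)$ from the shifted variables, whereas your mixed-momentum-derivative argument, as stated, only yields that the closure function is affine in $A_{+},B_{+},K$; to conclude $m_j'\equiv 0$ you still need the successive differentiations (giving first $m_j'''=0$, then elimination of the quadratic and linear coefficients by varying the remaining variables against the symmetric coefficient functions), which you assert rather than carry out. Second, the sufficiency step you dispatch as ``direct substitution'' of the closure map into \eqref{eq:Cash6} is the one genuine computation left: it is exactly the content of the paper's \Cref{prop:pdesys} together with \Cref{prop:solsys}, and it must be done explicitly to close the ``if'' direction.
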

\begin{remark}  \label{rem:potabstr} 
    We observe that, \Cref{thm:h6} implies that the potential $V$ in the
    abstract form (\textit{i.e.}  in terms of the generators of $\h_6$
    algebra) \eqref{eq:main-theorem-eq} reads as 
    \begin{equation}
        \label{eq:corollary-potential} 
	V = V (A_-, B_-, M ). 
    \end{equation}
\end{remark}

We divide the proof of the main  \Cref{thm:h6} into several steps.  The first
one is the following lemma, which tells us that in fact the systems we are
looking for, are in standard form:
\begin{lemma}
    Let us assume that a system in quasi-standard from admits coalgebra
    symmetry with respect to the algebra $S\h_{6}$. Then, $\ell_k(\xi)=\xi$.
    \label{lem:ellk}
\end{lemma}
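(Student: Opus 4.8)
The plan is to write the discrete Euler--Lagrange map explicitly and then impose closure of the $\h_6$ generators under it, extracting the shape of $\ell_k$ from the \emph{momentum-dependence} of the update. First I would solve the canonical equations \eqref{eq:dEL-quasistandard}. Setting $Q_k := q_k(t+h)$, $P_k := p_k(t+h)$ and $\phi_k(\xi) := \xi\,\ell_k'(\xi)$ (which is locally invertible, since $\ell_k$ is), a short manipulation gives the $\ell$-independent product $Q_k P_k = q_k\big(\partial V/\partial q_k - p_k\big)$ together with $\xi_k := Q_k q_k = \phi_k^{-1}(W_k)$ and $P_k = q_k W_k/\xi_k$, where $W_k := q_k\big(\partial V/\partial q_k - p_k\big)$. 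The two structural facts I would stress are that each $Q_k$ and $P_k$ depends on the momenta \emph{only} through $p_k$, and that the potential enters $W_k$ solely as the momentum-independent offset $\partial V/\partial q_k$, so that $\partial W_k/\partial p_k = -q_k$ regardless of $V$.

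Next I would impose \Cref{def-coalg-symmetry-discrete}(i) on the two generators $A_-(t+h) = \sum_k \lambda_k Q_k$ and $A_+(t+h) = \sum_k \lambda_k P_k$, writing each as a function $a$, respectively $\tilde a$, of the six generators. Because $Q_k,P_k$ depend on the momenta only through $p_k$, all mixed second derivatives $\partial_{p_j}\partial_{p_l}$ with $j\ne l$ of the left-hand sides vanish; evaluating the same derivatives of the right-hand sides on the realization \eqref{eq:h6reaN} forces the Hessian block of $a$ (and of $\tilde a$) in the ``momentum generators'' $(A_+,K,B_+)$ to vanish. Hence $a$ and $\tilde a$ are \emph{affine} in $(A_+,K,B_+)$ with coefficients depending only on $(A_-,B_-,M)$. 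Differentiating the closure of $A_-$ three times in $p_j$ then gives $(\phi_j^{-1})'''\equiv 0$, while the surviving quadratic coefficient is forced to vanish because it equals $\lambda_j q_j\cdot\tfrac12(\phi_j^{-1})''$, which must be one and the same function of $(A_-,B_-,M)$ for every $j$ as $q_j$ is varied freely. This leaves $\phi_k^{-1}(w) = b_k w + c_k$, and matching the first $p_j$-derivatives fixes a universal slope $b_k = b$.

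The last step is to exclude the constant $c_k$, which is precisely the logarithmic correction to $\ell_k$. Here I would use the closure of $A_+$: since $\tilde a$ is affine in $(A_+,K,B_+)$, one has $\partial_{p_j}^3[A_+(t+h)] = \lambda_j\,\partial_{p_j}^3 P_j = 0$, whereas a direct computation from $P_j = q_j^2 s_j/(b q_j s_j + c_j)$, with $s_j := \partial V/\partial q_j - p_j$, gives $\partial_{p_j}^3 P_j = -6 b^2 q_j^4 c_j\,(b q_j s_j + c_j)^{-4}$. As $b\ne 0$ and $q_j\ne 0$ generically, this forces $c_k = 0$. Therefore $\phi_k(\xi) = \xi/b$, i.e.\ $\xi\,\ell_k'(\xi) = \xi/b$, so $\ell_k'(\xi)$ is the constant $1/b$. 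Rescaling the Lagrangian \eqref{eq:dLgen} by $b$ (which leaves \eqref{eq:dEL} unchanged after the corresponding rescaling of $V$) and dropping the irrelevant additive constant yields $\ell_k(\xi) = \xi$, as claimed.

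The main obstacle I anticipate is the rigor of the genericity arguments: at several points one passes from ``a combination of the realization coordinates vanishes for all $j\ne l$ (or for all attainable values of $p_j$)'' to ``the corresponding coefficient functions vanish identically''. Making this precise requires exploiting that $N$ is arbitrary and the initial data free, so that the individual pairs $(q_j,p_j)$ can be varied independently while the symmetric generators $A_\pm, B_\pm, K, M$ still sweep out enough of their range to force pointwise vanishing of the Hessian entries and of the quadratic coefficient. The potential nuisance --- that $V$ is still unknown at this stage --- is sidestepped by the observation that differentiation in the momenta annihilates the $V$-offset in $W_k$, so the functional shape of $\phi_k^{-1}$, and hence of $\ell_k$, is pinned down without having to determine $V$.
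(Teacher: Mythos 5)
Your proposal is correct in substance, but it takes a genuinely different route from the paper's. The paper's proof is much shorter: it writes the evolution of a \emph{single} generator, $B_+(t+h)=\sum_i [\ell_i'(\xi_i)]^2 q_i^2(t)$, and then---invoking the same principle used in~\cite{GubLat_sl2}---demands that this expression be independent of the shifted variables $q_l(t+h)$; differentiating gives $\ell_k'(\xi_k)\,\ell_k''(\xi_k)\,q_k^3(t)=0$, hence $\ell_k''\equiv 0$ at once, after which only cosmetic steps remain (dropping the additive constant, equalizing the $c_k$'s, rescaling $V$). You instead solve the map explicitly, exploit that the update is separable in the momenta ($Q_k$, $P_k$ depend on $\vec{p}$ only through $p_k$), prove via the mixed derivatives $\partial_{p_j}\partial_{p_l}$, $j\neq l$, that any closure function must be affine in the momentum generators $(A_+,B_+,K)$, and then pin down $\ell_k$ by matching higher $p_j$-derivatives. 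What your route buys is self-containedness: the paper's key step (``independence from the shifted variables'') is justified only by reference to the $\mathfrak{sl}_2(\R)$ paper, and your derivative-matching argument is essentially what a from-scratch justification of that step looks like. What it costs is length and the need for \emph{two} generators rather than one: your $A_-$-closure analysis only yields $\phi_k(\xi)=(\xi-c_k)/b$, i.e.\ it still admits the logarithmic correction $\ell_k(\xi)=(\xi-c_k\log\xi)/b$, which is strictly weaker than the paper's immediate $\ell_k''=0$, so you genuinely need the $A_+$-closure (the third $p_j$-derivative of $P_j$) to kill $c_k$. The individual computations you outline (the formula $Q_kP_k=W_k$, the derivatives $\partial_{p_j}^m A_-(t+h)$, and $\partial_{p_j}^3P_j=-6b^2q_j^4c_j(bq_js_j+c_j)^{-4}$) all check out.

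Two minor points. First, your parenthetical claim that $\phi_k(\xi)=\xi\ell_k'(\xi)$ is locally invertible \emph{because} $\ell_k$ is, is not a valid inference: $\ell_k(\xi)=\log\xi$ is locally invertible, yet $\phi_k\equiv 1$. What is true is that $\phi_k'(\xi_k)=\partial^2 L/\partial q_k(t)\,\partial q_k(t+h)$ for the Lagrangian~\eqref{eq:dLgen}, so local invertibility of $\phi_k$ is exactly the non-degeneracy needed for~\eqref{eq:dEL} to define a second-order difference equation and a symplectic map $\Phi_h$ as required by \Cref{def-coalg-symmetry-discrete}; it is a standing assumption of the setup, not a consequence of the invertibility of $\ell_k$. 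Second, the genericity issues you flag (varying $(q_j,p_j,q_l,p_l)$, or $q_j$ alone, while holding the generators fixed) are real but harmless: since coalgebra symmetry is required for every $N$, one may work at large $N$, where the fibers of the realization map~\eqref{eq:h6reaN} have large dimension and the required independent variations exist, so the pointwise vanishing of the Hessian entries and of the quadratic coefficients follows.
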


\begin{proof}
    Let us work in an $N$ degrees of freedom symplectic realization of
    $\mathfrak{h}_6$~\eqref{eq:h6reaN} with respect to the discrete canonical
    variables $( \vec q(t), \vec p(t) )$. Then the time evolution of the
    generator $B_{+}(t)$ under the system~\eqref{eq:dEL-quasistandard} is given
    by: 
    \begin{equation}\label{eq:Bp-evolution-lemma}
        B_{+}(t+h) = \sum_{i=1}^N p_i^2 (t+h) 
                   \mathop{=}^{ \eqref{eq:dEL-quasistandard-b} } \sum_{i=1}^N \left[ \ell_i' (\xi_i) \right]^2 q_i^2(t). 
    \end{equation}  
    Now, similarly to~\cite{GubLat_sl2}, we demand the independence of the
    right-hand side from the shifted variables $\vec q_k (t+h)$, \textit{i.e.}
    the derivative with respect to $q_l (t+h)$ can be put equal to zero: 
    \begin{equation}
        \ell_k' (\xi_k) \, \ell_k''(\xi_k) \, q_k^3(t) =0, 
        \qquad k=1, \dots, N,
    \end{equation}
    where $\xi_k \coloneqq q_k(t) q_k (t+h)$.
    Since $\ell_k' (\xi_k) \ne 0 $, we have 
    \begin{equation} \label{eq:ODE-lemma1}
        \ell_k'' (\xi_k) = 0. 
    \end{equation}
    The solution of \eqref{eq:ODE-lemma1}, including two constants, is
    \begin{equation}
        \ell_k (\xi_k) = c_k   \xi_k + d_k, \qquad k=1, \dots, N.
    \end{equation}
    We can safely put the constant $d_k$ to zero since it does not contribute to the
    equations of motion. Using the same argument as in the~\cite[Theorem
    3.1]{GubLat_sl2}, we can obtain that $c_k=c$ for all $k=1,\ldots,N$. Then, using the
    scaling 
    \begin{equation}
        V (\vec q) \mapsto \sqrt{c} V(\vec q), 
    \end{equation}
    we can assume without loss of generality $c=1$. Bringing everything
    together, we have $\ell_k (\xi) = \xi$, which proves the lemma.
\end{proof}

\begin{proposition}
    For generic $V$ the form of the associate system is: 
    \begin{subequations} \label{system}
    	\begin{align}
    	 K(t+h) & = -K(t) + \vec q(t) \cdot \nabla V - M(t),  \label{eq:Khs} 
    		\\ 
    		A_{+}(t+h) &=A_{-} (t),
            \label{eq:Aphs}
            \\ 
             A_{-}(t+h) &=\vec \lambda \cdot  \nabla V - A_{+}(t) 
             \label{eq:Amhs}
             \\ 
              B_{+}(t+h) &= B_{-} (t) 
            \label{eq:Bphs}
            \\ 
             B_{-}(t+h) &=\left( \nabla V \right)^2 
            - 2 \vec p(t)\cdot  \nabla V + B_{+} (t) 
            \label{eq:Bmhs}
            \\ 
            M(t+h) &= M(t). 
            \label{eq:Mhs}
    	\end{align}
        \label{eq:syss}
    \end{subequations}
    \label{prop:system}
\end{proposition}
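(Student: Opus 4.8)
The plan is to compute directly the one-step evolution of each of the six generators in the realization~\eqref{eq:h6reaN}, using the equations of motion. First I would invoke \Cref{lem:ellk} to set $\ell_k(\xi)=\xi$, so that $\ell_k'\equiv 1$ and the canonical system~\eqref{eq:dEL-quasistandard} collapses to the explicit map
\begin{equation}
    p_k(t+h) = q_k(t), \qquad q_k(t+h) = \frac{\partial V(\vec q(t))}{\partial q_k(t)} - p_k(t), \qquad k=1,\dots,N.
\end{equation}
Writing $\nabla V$ for the gradient of $V$ taken with respect to $\vec q(t)$, these two update rules are the only input needed: the whole statement reduces to substituting them into the expressions for $A_\pm$, $B_\pm$, $K$, $M$. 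The phrase \emph{generic} $V$ simply signals that no restriction on the potential is imposed at this stage, so the combinations $\vec\lambda\cdot\nabla V$, $\vec q(t)\cdot\nabla V$, and $(\nabla V)^2-2\vec p(t)\cdot\nabla V$ appearing on the right-hand sides are, a priori, arbitrary functions of the Darboux variables rather than functions of the generators.

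Second, I would dispatch the generators for which the momentum update $p_k(t+h)=q_k(t)$ alone suffices. Since $M=\vec\lambda^2$ is constant, \eqref{eq:Mhs} is immediate. For $A_+(t+h)=\vec\lambda\cdot\vec p(t+h)$ and $B_+(t+h)=\vec p(t+h)^2$, replacing $\vec p(t+h)$ by $\vec q(t)$ yields at once $A_+(t+h)=\vec\lambda\cdot\vec q(t)=A_-(t)$ and $B_+(t+h)=\vec q(t)^2=B_-(t)$, giving \eqref{eq:Aphs} and \eqref{eq:Bphs}. For the lowered generators I would instead use the position update: $A_-(t+h)=\vec\lambda\cdot\vec q(t+h)=\vec\lambda\cdot\nabla V-\vec\lambda\cdot\vec p(t)$ reproduces \eqref{eq:Amhs}, and expanding the square in $B_-(t+h)=\vec q(t+h)^2=(\nabla V-\vec p(t))^2$ reproduces \eqref{eq:Bmhs}.

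The only step requiring a genuine manipulation — and hence the main (mild) obstacle — is the evolution of $K$. Substituting both updates into $K(t+h)=\vec q(t+h)\cdot\vec p(t+h)-\vec\lambda^2/2$ gives $\vec q(t+h)\cdot\vec p(t+h)=(\nabla V-\vec p(t))\cdot\vec q(t)=\vec q(t)\cdot\nabla V-\vec q(t)\cdot\vec p(t)$. To close the right-hand side in the generators, I would re-express the leftover bilinear term through the realization of $K$ itself, namely $\vec q(t)\cdot\vec p(t)=K(t)+M(t)/2$, using also $\vec\lambda^2=M(t)$. This produces $K(t+h)=\vec q(t)\cdot\nabla V-\bigl(K(t)+M(t)/2\bigr)-M(t)/2=-K(t)+\vec q(t)\cdot\nabla V-M(t)$, which is exactly \eqref{eq:Khs}, completing the system~\eqref{eq:syss}. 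I expect no difficulty beyond careful bookkeeping, since the entire computation is a direct substitution; the sole subtlety is recognizing that the two constant pieces $\vec\lambda^2/2$ combine to $M(t)$ in the $K$ equation, and keeping track of which of the two update rules applies to each factor.
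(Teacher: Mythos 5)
Your proposal is correct and follows essentially the same route as the paper: invoke \Cref{lem:ellk} to reduce to the standard-form map $p_k(t+h)=q_k(t)$, $q_k(t+h)=\partial V/\partial q_k(t)-p_k(t)$, then substitute into the realization~\eqref{eq:h6reaN} of each generator. In fact, your write-up is more complete than the paper's, which only displays the computations for $B_+$, $A_+$, and $B_-$ and leaves the remaining generators (including the slightly less immediate $K$ equation, which you handle correctly via $\vec q\cdot\vec p=K+M/2$) to ``direct computations in the same fashion.''
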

\begin{proof}
    We insert the result of the Lemma~\ref{lem:ellk} back to the equation
    \eqref{eq:Bp-evolution-lemma}:
    \begin{equation}
        B_+(t+h) =  \sum_{i=1}^N q_i^2(t) 
                \mathop{=}  B_- (t).
    \end{equation}
    Then we can proceed with all other generators: 
    \begin{subequations}
        \begin{equation}
            A_{+}(t+h) = \sum_{i=1}^N \lambda_i p_i (t+h)
            = \sum_{i=1}^N \lambda_i \ell_i'(\xi_i) q_i(t)
            =  \sum_{i=1}^N \lambda_i q_i (t) = A_{-}(t), 
        \end{equation}
        \begin{align}
            B_{-}(t+h) =  \sum_{i=1}^N q_i^2(t+h) 
            = \sum_{i=1}^N \left[ \frac{\partial V (\vec q (t)) }{\partial q_i(t)}  - p_i (t) \right]^2
            = (\nabla V)^2 - 2 \vec p \cdot \nabla V + B_{+}(t). 
        \end{align}
    \end{subequations}
        By direct computations, in the same fashion one can obtain the
        equations of motion for all the generators, thus ending
        the proof of the proposition.
\end{proof}

\begin{proposition}
    The potential $V=V(\vq)$ satisfies the following overdetermined
    system of partial differential equations:
    \begin{subequations}
        \begin{align}
            [\vl^2\vq^2-(\vl\cdot\vq)^2](\grad V)^2=
            [\vl (\vq\cdot\grad V) - \vq (\vl\cdot\grad V)]^2  
            \label{eq:pdesysnlin}
            \\
            \left[ \vl^2\vq^2 - (\vl\cdot\vq)^2 \right] \frac{\partial V}{\partial q_j}
            = \left[ \vl^2 q_j - (\vl \cdot \vq) \lambda_j \right] (\vq \cdot \nabla V) 
            + \left[ \vq^2 \lambda_j - (\vl \cdot \vq) q_j  \right] (\vl \cdot \nabla V) 
            \label{eq:pdesyslin}
        \end{align}
        \label{eq:pdesys}
    \end{subequations}
    \label{prop:pdesys}
\end{proposition}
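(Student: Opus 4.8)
The plan is to impose the closure requirement~(i) of \Cref{def-coalg-symmetry-discrete} on the evolution equations~\eqref{eq:syss} collected in \Cref{prop:system}. Of the six equations, those for $M$, $A_+$ and $B_+$ are already closed on the generators, whereas \eqref{eq:Khs}, \eqref{eq:Amhs} and \eqref{eq:Bmhs} involve the potential through the combinations $\vq\cdot\nabla V$, $\vl\cdot\nabla V$, $(\nabla V)^2$ and $\vp\cdot\nabla V$. First I would record these four combinations and ask precisely when each can be written as a function of the generators $\{K,A_\pm,B_\pm,M\}$ in the realization~\eqref{eq:h6reaN}.

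The key organizing idea is to separate this requirement by degree in the momenta $\vp$. Since $V=V(\vq)$, the three quantities $\vq\cdot\nabla V$, $\vl\cdot\nabla V$ and $(\nabla V)^2$ are independent of $\vp$, and the only $\vp$-independent generators are $A_-=\vl\cdot\vq$, $B_-=\vq^2$ and $M=\vl^2$; closure therefore forces each of them to be a function of $A_-,B_-,M$ alone. The decisive constraint comes from the momentum-linear term $\vp\cdot\nabla V$ in \eqref{eq:Bmhs}: the only generators linear in $\vp$ are $A_+=\vl\cdot\vp$ and $\vq\cdot\vp=K+M/2$, so $\vp\cdot\nabla V$ can close on the generators only if $\nabla V$ lies pointwise in the plane spanned by $\vl$ and $\vq$, i.e. $\nabla V=\alpha\,\vl+\beta\,\vq$ with coefficients $\alpha,\beta$ that are functions of $\vq$.

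It remains to rewrite the geometric constraint $\nabla V\in\Span\{\vl,\vq\}$ in the two stated forms. Taking the scalar product of $\nabla V=\alpha\vl+\beta\vq$ with $\vl$ and with $\vq$ produces a $2\times2$ linear system for $(\alpha,\beta)$ whose matrix is the Gram matrix of $\vl$ and $\vq$, with determinant $\vl^2\vq^2-(\vl\cdot\vq)^2=MB_--A_-^2$; solving by Cramer's rule and substituting the result into the component identity $\partial V/\partial q_j=\alpha\lambda_j+\beta q_j$ gives~\eqref{eq:pdesyslin}. Dotting the same relation with $\nabla V$ instead, and eliminating $\alpha,\beta$, yields~\eqref{eq:pdesysnlin}; equivalently, \eqref{eq:pdesysnlin} is the Lagrange identity asserting that the component of $\nabla V$ orthogonal to $\Span\{\vl,\vq\}$ vanishes, so that $(\nabla V)^2$ coincides with the squared norm of its projection. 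Thus the two PDEs are nothing but the linear and quadratic expressions of the single condition that $\nabla V$, $\vl$ and $\vq$ be coplanar.

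The step I expect to be the main obstacle is the justification, within the second paragraph, that a function of the generators which is $\vp$-independent must in fact depend only on $A_-,B_-,M$, and that its $\vp$-linear part is spanned by $A_+$ and $\vq\cdot\vp$. This rests on the functional independence of the six generators in the realization~\eqref{eq:h6reaN}, valid for $N\ge2$, together with a degree-by-degree separation in $\vp$ at fixed $\vq$. One must also treat with care the degenerate locus where the Gram determinant $\vl^2\vq^2-(\vl\cdot\vq)^2$ vanishes, i.e. where $\vl$ and $\vq$ become parallel: there both sides of \eqref{eq:pdesyslin} and \eqref{eq:pdesysnlin} degenerate to $0=0$, so the genuine content of the proposition lives on the generic stratum, and the Cramer's-rule inversion used above is licit only away from this locus.
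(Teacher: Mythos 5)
Your proposal is correct, but it takes a genuinely different route from the paper's. The paper never invokes the closure condition~(i): it imposes the Casimir-preservation condition~(\ref{def-iii}) of \Cref{def-coalg-symmetry-discrete}, substitutes the evolution~\eqref{eq:syss} into $C(t+h)=C(t)$, and, using that $V$ is independent of $\vp$, reads off the coefficients of the powers of $p_j$ in the resulting identity: the $\vp$-independent part is exactly \eqref{eq:pdesysnlin}, and the coefficient of each $p_j$ is \eqref{eq:pdesyslin}. That is a single direct computation producing both equations simultaneously in the stated form, with no functional-independence argument needed. You instead use condition~(i): the momentum-linear term $\vp\cdot\grad V$ in \eqref{eq:Bmhs} can close on the generators only if $\grad V\in\Span(\vl,\vq)$ pointwise, because the $\vp$-linear combinations of generators are spanned by $A_+=\vl\cdot\vp$ and $\vq\cdot\vp=K+M/2$; the two PDEs then follow by Cramer's rule and by the Gram-determinant (Lagrange) identity. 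Both conditions are part of the standing hypothesis, so either derivation is legitimate. What your route buys is the geometric content in arbitrary dimension: both \eqref{eq:pdesyslin} and \eqref{eq:pdesysnlin} are equivalent, where $\vl$ and $\vq$ are independent, to linear dependence of $\{\vl,\vq,\grad V\}$ --- note that the paper obtains this interpretation only for $N=3$ via cross products and explicitly remarks that that strategy fails for $N>3$, whereas your Gram-determinant formulation is precisely the missing $N$-dimensional analog of \eqref{eq:triplesquared}. What it costs is the step you yourself flag as the main obstacle: showing that a $\vp$-affine function of the generators has its linear part spanned by $A_+$ and $K+M/2$ requires the fiber argument at fixed $\vq$, which needs $\vq$, $\vl$, $\vp$ to be generically linearly independent, hence $N\geq3$ --- not $N\geq2$ as you claim, since for $N=2$ three vectors in $\R^{2}$ are never independent and the six generators are functionally dependent ($\mathrm{D}(C)\equiv0$). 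This slip is harmless for the proposition itself: for $N\leq2$ both equations in \eqref{eq:pdesys} are identically satisfied by any $V$ (all their terms degenerate as on your degenerate locus), and \Cref{def-coalg-symmetry-discrete} demands closure for every $N$, so the meaningful constraint is exactly the one your argument yields for $N\geq3$; but the restriction should be stated correctly if the argument is written out in full.
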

 
\begin{proof}   
    Now let us impose the condition (\ref{def-iii}) of the Definition~\ref{def-coalg-symmetry-discrete}: 
    \begin{equation}
        C(t+h) = C(t).
        \label{Casimir-interm}
    \end{equation}
    Substituting the equations~\eqref{system} to~\eqref{Casimir-interm}, we
    have, after some algebraic manipulations: 
    \begin{equation}
        \begin{aligned}
	    C(t+h) &= C(t) + \left[ \vec \lambda^2 \vec q^2 
  - (\vec \lambda \cdot \vec q)^2 \right] 
  (\nabla V)^2 
  - \left[  
  \vec \lambda  (\vec q \cdot \nabla V) 
  - \vec q (\vec \lambda \cdot \nabla V) 
  \right]^2 
        \\ 
         &+ 2 \sum_{j=1}^{N}  \bigg[ \left[  (\vec \lambda \cdot \vec q)^2
         - \vec q^2 \vec \lambda^2 \right] \frac{\partial V}{\partial q^j} 
       + \left[  \vec \lambda^2  q_j 
       - (\vec \lambda \cdot \vec q)  \lambda_j  
       \right]
         (\vec q \cdot \nabla V) 
       +\left[  \vec q^2 \lambda_j 
       - (\vec \lambda \cdot \vec q) q_j
       \right]
       (\vec \lambda \cdot \nabla V) 
       \bigg] \, p_j. 
        \end{aligned}
        \label{eq:Ctcond}
    \end{equation}
    Since $V$ does not depend on $\vec p$, in equation~\eqref{eq:Ctcond} we can
    take the coefficients with respect to the powers of $p_{j}$. For instance
    the constant term is:
    \begin{align} \label{p0-coeffts}
        \left[ \vec \lambda^2 \vec q^2 
  - (\vec \lambda \cdot \vec q)^2 \right] 
  (\nabla V)^2 
  - \left[  
  \vec \lambda  (\vec q \cdot \nabla V) 
  - \vec q (\vec \lambda \cdot \nabla V) 
  \right]^2  = 0, 
    \end{align}
    while the coefficients with respect to $p_j$ for $j=1,\dots,N$ are:
    \begin{align} \label{p1-coeffts}
    \left[  (\vec \lambda \cdot \vec q)^2
             - \vec q^2 \vec \lambda^2 \right] \frac{\partial V}{\partial q^j} 
           + \left[  \vec \lambda^2  q_j 
           - (\vec \lambda \cdot \vec q)  \lambda_j  
           \right]
             (\vec q \cdot \nabla V) 
           +\left[  \vec q^2 \lambda_j 
           - (\vec \lambda \cdot \vec q) q_j
           \right]
           (\vec \lambda \cdot \nabla V)  = 0.
    \end{align}
    That is, we obtained that the potential $V$ must satisfy the
    system~\eqref{eq:pdesys}, and conclude the proof of the proposition.
\end{proof}

Before considering the solution of the system~\eqref{eq:pdesys} in general,
we observe that for $N=3$ it is easy to prove that the potential must be of
the form $V=V(\vl\cdot\vq,\vq^2,\vl^2)$ through some simple geometric
considerations. Indeed, for $N=3$ note that we can use the triple vector product identity and Lagrange's identity~\cite{morse1953methods}:
\begin{equation}
    (\vec{a}\times\vec{b})^2 = \vec{a}^2\vec{b}^2-(\vec{a}\cdot\vec{b})^2,
    \quad
    \vec{a}\times\vec{b}\times\vec{c} = (\vec{a}\cdot\vec{c})\vec{b}-
    (\vec{a}\cdot\vec{b})\vec{c},
    \label{eq:valg}
\end{equation}
so that:
\begin{equation}
    \vl^2\vq^2-(\vl\cdot\vq)^2 = (\vl\times\vq)^2,
\end{equation}
and
\begin{equation}
    \vl (\vq\cdot\grad V) - \vq (\vl\cdot\grad V)=\vl\times\vq\times\grad V.
\end{equation}
Using Lagrange's identity once more:
\begin{equation}
    \left[\vl (\vq\cdot\grad V) - \vq (\vl\cdot\grad V)\right]^2 = 
    (\vl\times\vq)^2(\grad V)^2-[(\vl\times\vq)\cdot\grad V]^2,
    \label{eq:triplesquared}
\end{equation}
brings \cref{eq:pdesysnlin} into the simplified form:
\begin{equation}
    (\vl\times\vq)\cdot\grad V = 0.
\end{equation}
This means that the gradient of $V$ must be orthogonal to the vector
$\vl\times\vq$, i.e. $\grad V \in \Span(\vl,\vq)$. This implies that
$V=V(\vl\cdot\vq,\vq^2)$ or, adding the constant term $\vl^2$, $V=V(\vl\cdot\vq,\vq^2,\vl^2)$. Inserting this solution into the three equations
in formula \eqref{eq:pdesyslin}, we see that these are identically satisfied, thus ending the proof for $N=3$.

One can think that this kind of proof could be generalized to arbitrary $N$.
However, by direct inspection, it is possible to see the same strategy fails for $N>3$, since there is no direct analog of \cref{eq:triplesquared}. 
So, we will present a different proof, of which we don't have such a simple geometric interpretation. For the generic case, we proceed
by studying the linear condition given by the system~\eqref{eq:pdesyslin}, for which we prove the following proposition.

\begin{proposition}
    The system~\eqref{eq:pdesyslin} can be written as
    \begin{equation}
        \mathcal{M}_N(\vq,\vl)\grad V = \vec{0},
        \quad
        \mathcal{M}_N(\vq,\vl) = [(\vl\cdot\vq)^2-\vq^2\vl^2]\I_N+
        \mathcal{N}_N(\vq,\vl),
        \label{eq:sysmat}
    \end{equation}
    where
    \begin{equation}
        \mathcal{N}_N(\vq,\vl)
        =
        \left( q_i(\vl^2q_j-\vl\cdot\vq\lambda_j) +
        \lambda_i(\vq^2\lambda_j-\vl\cdot\vq q_j)\right)_{i,j=1}^{N}
    \end{equation}
    and the following inequality holds: 
    \begin{equation}
        N-2\leq\rank \mathcal{M}_N(\vq,\vl) \leq N.
    \end{equation}
    \label{prop:rank}
\end{proposition}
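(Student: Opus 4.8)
The plan is to first cast \eqref{eq:pdesyslin} into the advertised matrix form and then to control $\rank\mathcal{M}_N$ by exploiting the fact that $\mathcal{N}_N$ is a very low-rank matrix. The opening step is pure bookkeeping: transposing all terms of \eqref{eq:pdesyslin} to one side (renaming its free index $i$) and using $\vq\cdot\grad V=\sum_j q_j\,\partial V/\partial q_j$ and $\vl\cdot\grad V=\sum_j\lambda_j\,\partial V/\partial q_j$, the $i$-th equation reads $\sum_{j=1}^N(\mathcal{M}_N)_{ij}\,\partial V/\partial q_j=0$ with
\begin{equation*}
(\mathcal{M}_N)_{ij}=\bigl[(\vl\cdot\vq)^2-\vq^2\vl^2\bigr]\delta_{ij}
+q_i\bigl(\vl^2 q_j-(\vl\cdot\vq)\lambda_j\bigr)
+\lambda_i\bigl(\vq^2\lambda_j-(\vl\cdot\vq)q_j\bigr),
\end{equation*}
which is exactly \eqref{eq:sysmat}. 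This step is routine index manipulation.

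The heart of the argument is the structure of $\mathcal{N}_N$. Introducing the two vectors $\vec u:=\vl^2\vq-(\vl\cdot\vq)\vl$ and $\vec w:=\vq^2\vl-(\vl\cdot\vq)\vq$, one recognizes $\mathcal{N}_N$ as a sum of two rank-one matrices (dyads),
\begin{equation*}
\mathcal{N}_N=\vq\,\vec u^{\top}+\vl\,\vec w^{\top},
\end{equation*}
so that $\rank\mathcal{N}_N\le 2$ and, for every $\vec x$, $\mathcal{N}_N\vec x=(\vec u\cdot\vec x)\,\vq+(\vec w\cdot\vec x)\,\vl\in\Span(\vq,\vl)$.

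The bounds then follow quickly. The upper bound $\rank\mathcal{M}_N\le N$ is automatic for an $N\times N$ matrix. For the lower bound I write $\mathcal{M}_N=-\alpha\,\I_N+\mathcal{N}_N$ with $\alpha:=\vl^2\vq^2-(\vl\cdot\vq)^2$ and apply the subadditivity estimate $\rank(A+B)\ge\rank A-\rank B$ with $A=-\alpha\I_N$ and $B=\mathcal{N}_N$, obtaining $\rank\mathcal{M}_N\ge\rank(-\alpha\I_N)-2$. By the Cauchy--Schwarz inequality $\alpha\ge 0$, with $\alpha>0$ precisely when $\vq$ and $\vl$ are linearly independent; in that case $-\alpha\I_N$ is invertible with rank $N$, and therefore $\rank\mathcal{M}_N\ge N-2$.

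The main obstacle — and the point I expect to require care in the write-up — is the degenerate locus $\vq\parallel\vl$. There $\alpha=0$ and in fact a direct computation gives $\vec u=\vec w=\vec 0$, so $\mathcal{M}_N$ collapses to the zero matrix; thus the lower bound cannot hold pointwise everywhere and must be understood on the open dense set where $\vq$ and $\vl$ are independent (equivalently, as the generic rank). On that set I would also note that the estimate is in fact sharp: using $\vec u\cdot\vq=\alpha$, $\vec w\cdot\vq=0$, $\vec u\cdot\vl=0$ and $\vec w\cdot\vl=\alpha$, one checks $\mathcal{M}_N\vq=\mathcal{M}_N\vl=\vec 0$, so the kernel is exactly $\Span(\vq,\vl)$ and $\rank\mathcal{M}_N=N-2$. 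This identification of the kernel is precisely what drives the conclusion $\grad V\in\Span(\vq,\vl)$, hence $V=V(\vl\cdot\vq,\vq^2,\vl^2)$, in the remaining steps of the main theorem.
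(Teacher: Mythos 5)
Your proof is correct, and it takes a genuinely different route from the paper's. The paper bounds the rank from below by evaluating at the single point $\vl_0=(1,0,\dots,0)^{T}$, $\vq_0=(1,\dots,1)^{T}$, where $\mathcal{M}_N$ becomes a constant matrix whose relevant minors are computed through a technical determinant lemma (\Cref{lemma:det}, proved by strong induction on matrices with constant diagonal and constant off-diagonal entries); since the rank of a matrix of functions is at least its rank at any evaluation point, this certifies $\rank\mathcal{M}_N\geq N-2$ in the generic (functional) sense. You instead exhibit $\mathcal{N}_N=\vq\,\vec{u}^{\top}+\vl\,\vec{w}^{\top}$ as a sum of two dyads, so that $\rank\mathcal{N}_N\leq 2$, and conclude by rank subadditivity --- no induction, no determinant computations. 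Your route also buys strictly more: the bound holds pointwise on the whole open dense set where $\vq$ and $\vl$ are linearly independent, not merely at one point; you identify the kernel exactly, $\ker\mathcal{M}_N=\Span(\vq,\vl)$, so the rank there is identically $N-2$ (in particular the stated upper bound $N$ is never attained); and this kernel identification is precisely what drives \Cref{prop:solsys}, since $\grad V\in\Span(\vq,\vl)$ is the condition characterizing $V=V(\vl\cdot\vq,\vq^{2},\vl^{2})$ --- in effect you recover, for arbitrary $N$, the geometric conclusion that the paper obtains only for $N=3$ via cross-product identities and then abandons. Finally, your remark that $\mathcal{M}_N\equiv 0$ on the locus $\vq\parallel\vl$ is a legitimate clarification rather than a defect: the inequality in the proposition cannot hold pointwise there, and both your argument and the paper's establish it only as a statement about the generic rank, a caveat the paper leaves implicit.
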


This result is proven using the following technical lemma:

\begin{lemma}
    Consider the $n\times n$ matrix:
    \begin{equation}
        M_{n}(\mu,\nu)
        =
        \begin{pmatrix}
            \mu & \nu & \cdots & \nu
            \\
            \nu & \ddots & \ddots & \vdots
            \\
            \vdots & \ddots & \mu & \nu
            \\
            \nu & \cdots & \nu & \mu
        \end{pmatrix}
        \label{eq:Mmunu}
    \end{equation}
    then:
    \begin{equation}
        \det M_n (\mu,\nu)
        =
        (\mu-\nu)^{n-1}[\mu+(n-1)\nu].
        \label{eq:sigman}
    \end{equation}
    \label{lemma:det}
\end{lemma}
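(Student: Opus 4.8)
The plan is to recognize that $M_n(\mu,\nu)$ is a rank-one perturbation of a multiple of the identity. Writing $J_n$ for the $n\times n$ matrix all of whose entries equal $1$, one has $M_n(\mu,\nu) = (\mu-\nu)\I_n + \nu J_n$. I would then compute the determinant by elementary row operations, since these are valid irrespective of whether $\mu,\nu$ are regarded as real numbers or as the polynomial expressions in $\vq,\vl$ that arise in \Cref{prop:rank}.

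First I would add rows $2,\dots,n$ to the first row. Every column sum of $M_n$ equals $\mu+(n-1)\nu$, so after this operation the first row becomes $(\mu+(n-1)\nu)(1,1,\dots,1)$, while the determinant is unchanged. Factoring the scalar $\mu+(n-1)\nu$ out of the first row leaves a matrix whose first row is $(1,\dots,1)$ and whose remaining rows are untouched. Next I would perform $R_i \mapsto R_i - \nu R_1$ for $i=2,\dots,n$; since the first row is now all ones, each off-diagonal entry $\nu$ in rows $2,\dots,n$ becomes $0$ and each diagonal entry $\mu$ becomes $\mu-\nu$. The result is upper triangular with diagonal $(1,\mu-\nu,\dots,\mu-\nu)$, so its determinant is $(\mu-\nu)^{n-1}$. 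Multiplying back the factored scalar gives $\det M_n(\mu,\nu) = (\mu-\nu)^{n-1}[\mu+(n-1)\nu]$, as claimed.

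As a cross-check I would note the spectral picture: $J_n$ is symmetric with eigenvalue $n$ (on the vector $(1,\dots,1)$) and eigenvalue $0$ of multiplicity $n-1$, so the eigenvalues of $M_n=(\mu-\nu)\I_n+\nu J_n$ are $\mu+(n-1)\nu$ once and $\mu-\nu$ with multiplicity $n-1$; their product is the stated determinant. Since both sides of \eqref{eq:sigman} are polynomials in $\mu$ and $\nu$, establishing the identity for all real $\mu,\nu$ by this eigenvalue argument already forces it as a polynomial identity, covering in particular the degenerate case $\mu=\nu$ and the later substitution of the $\vq,\vl$-dependent expressions. There is essentially no genuine obstacle here: the statement is a classical determinant evaluation, and the only point requiring care is the bookkeeping of the elementary operations, in particular keeping the factored scalar $\mu+(n-1)\nu$ separate from the triangular $(\mu-\nu)$-block so that the exponent $n-1$ and the linear factor are assembled correctly.
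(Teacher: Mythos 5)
Your proof is correct, but it takes a genuinely different route from the paper. The paper proves the lemma by strong induction on $n$: the base case $n=2$ is computed directly, and for $n>2$ the matrix is split into blocks containing $M_{n-2}(\mu,\nu)$ and $M_2(\mu,\nu)$, after which the block-determinant (Schur complement) formula reduces $\det M_n(\mu,\nu)$ to $\det M_{n-2}$ evaluated at shifted arguments $\mu-\tfrac{2\nu^2}{\mu+\nu}$, $\nu-\tfrac{2\nu^2}{\mu+\nu}$, and the inductive hypothesis closes the argument. Your approach instead exploits the rank-one structure $M_n(\mu,\nu)=(\mu-\nu)\I_n+\nu J_n$ directly, either by row reduction or by the spectral decomposition of $J_n$. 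This buys you two things: the argument is shorter and induction-free, and it is manifestly a polynomial identity valid over any commutative ring, so the degenerate cases are automatic. By contrast, the paper's use of $M_2^{-1}(\mu,\nu)$ tacitly requires $\mu+\nu\neq 0$, and strictly speaking the identity must afterwards be extended to that locus by density or by a polynomial-identity argument (a point the paper glosses over); your row-operation version, in which the factor $\mu+(n-1)\nu$ is extracted without any division, sidesteps this entirely. Both proofs deliver exactly what \Cref{prop:rank} needs, namely the vanishing of $\det M_{N-1}(-(N-2),1)$ and the nonvanishing of $\det M_{N-2}(-(N-2),1)$.
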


\begin{proof}
    We prove this lemma via strong induction. Indeed, for
    $n=2$ by direct computation we have:
    \begin{equation}
        \det M_2 (\mu,\nu)
        =
        \det
        \begin{pmatrix}
            \mu & \nu
            \\
            \nu & \mu
        \end{pmatrix}
        =
        (\mu-\nu)(\mu+\nu)=(\mu-\nu)^{2-1}[\mu+(2-1)\nu],
        \label{eq:sigma2}
    \end{equation}
    that is~\cref{eq:sigman} for $n=2$.
    For $n>2$ consider the following decomposition of the matrix $M_n$:
    \begin{equation}
        M_{n}(\mu,\nu)
        =
        \begin{pmatrix}
            M_{n-2}(\mu,\nu) & \nu A
            \\
            \nu A^{T} & M_{2}(\mu,\nu)
        \end{pmatrix},
    \end{equation}
    where:
    \begin{equation}
        A = 
        \begin{pmatrix}
            1 & 1
            \\
            \vdots & \vdots
            \\
            1 & 1
        \end{pmatrix}.
        \label{eq:Amat}
    \end{equation}
    So, from the determinant of the block matrix formula we have:
    \begin{equation}
        \det M_{n}(\mu,\nu)
        =
        \det M_{2}(\mu,\nu)
        \det (M_{n-2}(\mu,\nu) - \nu^2 A M_{2}^{-1}(\mu,\nu)A^{T}).
        \label{eq:Mnexp}
    \end{equation}
    From a direct computation:
    \begin{equation}
        A M_{2}^{-1}(\mu,\nu)A^{T}
        =
        \frac{2}{\mu+\nu}
        \begin{pmatrix}
            1 & \cdots & 1
            \\
            \vdots & & \vdots
            \\
            1 & \cdots & 1
        \end{pmatrix}.
    \end{equation}
    So, inserting it in \eqref{eq:Mnexp} we have:
    \begin{equation}
        \det M_{n}(\mu,\nu)
        =
        \det M_{2}(\mu,\nu)
        \det M_{n-2}\left(\mu -\frac{2\nu^2}{\mu+\nu},\nu-\frac{2\nu^2}{\mu+\nu}\right).
    \end{equation}
    Using the inductive hypothesis and \cref{eq:sigma2} we have:
    \begin{equation}
        \begin{aligned}
        \det M_{n}(\mu,\nu)
        &= (\mu+\nu)
        (\mu-\nu)^{n-2}
        \left[\mu -\frac{2\nu^2}{\mu+\nu}
        +(n-3)\left(\nu-\frac{2\nu^2}{\mu+\nu}\right)\right]
        \\
        &=(\mu-\nu)^{n-1}
        \left[\mu +(n-1)\nu\right],
        \end{aligned}
    \end{equation}
    that is \cref{eq:sigman}. This concludes the proof of the lemma.
\end{proof}

\begin{proof}[Proof of \Cref{prop:rank}]
    Since the system \eqref{eq:pdesyslin} is linear, it always possible to write in matrix form as in \cref{eq:sysmat}.
    The shape of matrix $\mathcal{M}_N(\vq,\vl)$ follows from
    \cref{eq:pdesyslin} noting that all entries $\partial V/\partial q_i$
    are multiplied by the same factor, and using the properties of
    the scalar products.

    To estimate from below the rank of the matrix $\mathcal{M}_N(\vq,\vl)$
    we show that there exists a point in the space of parameters $\vl$ and 
    $\vq$ such that the matrix has \emph{exactly} rank $N-2$. Indeed, let us consider the points:
    \begin{equation}
        \vl_0 = (1,0,\ldots,0)^T,
        \quad
        \vq_0 = (1,\ldots,1)^T.
    \end{equation}
    Then we have
    \begin{subequations}
        \begin{gather}
            (\vl_0\cdot\vq_0)^2-\vq_0^2\vl_0^2 = -(N-1),
            \\
            \mathcal{N}_N(\vq_0,\vl_0)_{i,j} = 1 - \lambda_{0,i}-
            \lambda_{0,j}
            +N\lambda_{0,i}\lambda_{0,j}
            =
            \begin{cases}
                N-1 & i=j=1,
                \\
                1 & i,j\neq 1,
                \\
                0 & \text{otherwise},
            \end{cases}
        \end{gather}
    \end{subequations}
    so that:
    \begin{equation}
        \mathcal{M}_N(\vq_0,\vl_0) =
         \begin{pmatrix}
            0 & 0 & 0 & \cdots & 0
            \\
            0 &-(N-2) & 1 & \cdots & 1
            \\
            0 &1 & \ddots & \ddots & \vdots
            \\
            \vdots &\vdots & \ddots & -(N-2) & 1
            \\
            0 &1 & \cdots & 1 & -(N-2)
        \end{pmatrix}
    \end{equation}
    Clearly $\rank\mathcal{M}_N(\vq_0,\vl_0)<N$ since $\det\mathcal{M}_N(\vq_0,\vl_0)=0$. Consider now the submatrix $\mathcal{M}_N(\vq_0,\vl_0)_{[1,1]}$ obtained
    removing the first row and column of $\mathcal{M}_N(\vq_0,\vl_0)$.
    Then we can compute its determinant using \Cref{lemma:det}:
    \begin{equation}
        \det \mathcal{M}_N(\vq_0,\vl_0)_{[1,1]}
        =
        \det M_{N-1}(-(N-2),1) = 0.
    \end{equation}
    All the other $(N-1)\times(N-1)$ will have determinant zero because
    they will necessarily have one row and/or one column full of zeroes,
    so $\rank\mathcal{M}_N(\vq_0,\vl_0)<N-1$.
    Consider now the submatrix $\mathcal{M}_N(\vq_0,\vl_0)_{[\{1,2\},\{1,2\}]}$ obtained
    removing the first and second row and column of $\mathcal{M}_N(\vq_0,\vl_0)$.
    Then we can compute again its determinant using \Cref{lemma:det}:
    \begin{equation}
        \det \mathcal{M}_N(\vq_0,\vl_0)_{[\{1,2\},\{1,2\}]}
        =
        \det M_{N-2}(-(N-2),1) = (-1)^N (N-1)^{N-3} \neq 0.
    \end{equation}
    This gives us $\rank\mathcal{M}_N(\vq_0,\vl_0)=N-2$, and hence
    we obtained the desired lower bound for $\mathcal{M}_N(\vq,\vl)$. 
    This ends the proof of the proposition.
\end{proof}

With the following proposition, we are able to characterize the non-trivial
solution of the system~\eqref{eq:pdesys}, thus giving the final step to
prove \Cref{thm:h6}.

\begin{proposition}
    The only non-trivial solution of the system~\eqref{eq:pdesys}
    is $V=V(\vl\cdot\vq,\vq^2,\vl^2)$.
    \label{prop:solsys}
\end{proposition}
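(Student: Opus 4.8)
The plan is to show that the linear subsystem \eqref{eq:pdesyslin} already pins down the functional form of $V$, and that the nonlinear equation \eqref{eq:pdesysnlin} is then automatically satisfied. Since \Cref{prop:rank} has rewritten \eqref{eq:pdesyslin} as $\mathcal{M}_N(\vq,\vl)\grad V = \vec{0}$, the whole question reduces to identifying the kernel of $\mathcal{M}_N(\vq,\vl)$.

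First I would compute $\mathcal{M}_N(\vq,\vl)\,\vq$ and $\mathcal{M}_N(\vq,\vl)\,\vl$ directly from the explicit form \eqref{eq:sysmat}. Using only the contractions $\vq^2$, $\vl^2$, and $\vl\cdot\vq$, one finds that the contribution of $\mathcal{N}_N(\vq,\vl)$ cancels exactly against the diagonal term $[(\vl\cdot\vq)^2-\vq^2\vl^2]\I_N$, so that both $\vq$ and $\vl$ lie in $\ker\mathcal{M}_N(\vq,\vl)$ for every $\vq,\vl$; this is a routine but pleasant computation. Now I invoke \Cref{prop:rank}: since $\rank\mathcal{M}_N(\vq,\vl)\geq N-2$, the kernel has dimension at most $2$. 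Hence at every point where $\vq$ and $\vl$ are linearly independent — the generic situation — the pair $\vq,\vl$ is a basis of the kernel, i.e. $\ker\mathcal{M}_N(\vq,\vl)=\Span(\vq,\vl)$. This is exactly where the lower bound of \Cref{prop:rank} does the real work: it forbids the kernel from being larger and thereby forces $\grad V$ into the two-plane $\Span(\vq,\vl)$.

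With $\grad V \in \Span(\vq,\vl)=\Span\bigl(\grad(\vq^2),\grad(\vl\cdot\vq)\bigr)$ at every generic point, the next step is to integrate this condition. Along any curve lying in a common level set of the pair $(\vq^2,\vl\cdot\vq)$ the tangent vector is orthogonal to both $\vq$ and $\vl$, hence to $\grad V$, so $V$ is constant on each such level set. For $N\geq 3$ these level sets — intersections of a sphere with a hyperplane — are connected, so constancy along each fiber defines a genuine function of the fiber labels, giving $V=V(\vq^2,\vl\cdot\vq)$ on a dense open set and, by smoothness, everywhere; adjoining the constant parameter $\vl^2$ yields the claimed form $V=V(\vl\cdot\vq,\vq^2,\vl^2)$.

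It then remains to check consistency with the nonlinear equation \eqref{eq:pdesysnlin}, which has not yet been used. Writing $\grad V=\alpha\vq+\beta\vl$ and expanding $[\vl(\vq\cdot\grad V)-\vq(\vl\cdot\grad V)]^2$ with the Lagrange-type identities, each of the coefficients of $\alpha^2$, $\alpha\beta$, $\beta^2$ carries the common factor $\vl^2\vq^2-(\vl\cdot\vq)^2$, and collecting them reproduces exactly $[\vl^2\vq^2-(\vl\cdot\vq)^2](\grad V)^2$; thus \eqref{eq:pdesysnlin} holds identically on solutions of the linear system and the two equations of \eqref{eq:pdesys} are compatible. I expect the only delicate point to be the integration step: one must verify that membership of $\grad V$ in the plane field $\Span(\vq,\vl)$ genuinely integrates to dependence on the two invariants $\vq^2$ and $\vl\cdot\vq$, which relies on connectedness of the fibers for $N\geq 3$ (and would fail for $N=2$). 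The algebraic cancellations, by contrast, are routine once the rank bound of \Cref{prop:rank} is in hand.
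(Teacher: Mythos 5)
Your proposal is correct and follows the same route as the paper: both arguments rest on the rank bound of \Cref{prop:rank} together with the observation that $\vq$ and $\vl$ annihilate $\mathcal{M}_N(\vq,\vl)$, and both then check that the nonlinear equation \eqref{eq:pdesysnlin} holds identically once $\grad V\in\Span(\vq,\vl)$. The differences are ones of completeness rather than strategy. The paper verifies the linear system \eqref{eq:pdesyslin} on the test functions $v_a=\vl\cdot\vq$ and $v_b=\vq^2$ (which is exactly your computation $\mathcal{M}_N\vq=\mathcal{M}_N\vl=\vec{0}$ in different clothing) and substitutes the chain-rule form $\grad V=\partial_{A_-}V\,\vl+2\,\partial_{B_-}V\,\vq$ into \eqref{eq:pdesysnlin}; as written, this reads as a verification that $V=V(\vl\cdot\vq,\vq^2,\vl^2)$ \emph{is} a solution, with the ``only'' direction left implicit in the rank bound. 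Your write-up makes that direction explicit: the kernel has dimension at most two by \Cref{prop:rank}, hence equals $\Span(\vq,\vl)$ wherever $\vq$ and $\vl$ are independent, and you supply the integration step (constancy of $V$ on the connected $(N-2)$-sphere fibers of $(\vq^2,\vl\cdot\vq)$ for $N\geq3$) that converts the pointwise condition $\grad V\in\Span(\vq,\vl)$ into the claimed functional form of $V$ --- an inference the paper only sketches in its separate $N=3$ geometric discussion. Your coefficient-matching check of \eqref{eq:pdesysnlin} with $\grad V=\alpha\vq+\beta\vl$ is equivalent to the paper's computation via \eqref{eq:solutions-nonlinear1}--\eqref{eq:solutions-nonlinear3}. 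In short: same skeleton, but your version fills in the two logical gaps where the uniqueness claim actually lives.
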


\begin{proof}
    First, let us check the proposition for the equation \eqref{eq:pdesyslin}. Since this equation is linear, it is enough to check it for the test functions $v_a = (\vec \lambda \cdot \vec q) $ and $v_b = \vec q^2 $. Hence, we get 
    \begin{align}
         \nabla v_a = \vec \lambda, 
        \qquad 
        & \nabla v_b = 2 \vec q, 
        \\
        \frac{\partial v_a}{\partial q_j} = \lambda_j, \qquad 
        & \frac{\partial v_b}{\partial q_j}=
        2q_j. 
    \end{align}
    Now it is easy to see that \eqref{eq:pdesyslin} becomes identity in case of $V= v_a$: 
    \begin{align}
       -  \left[ \vl^2\vq^2 - (\vl\cdot\vq)^2 \right] \lambda_j 
            + \left[ \vl^2 q_j - (\vl \cdot \vq) \lambda_j \right] (  \vec \lambda \cdot \vq ) 
            + \left[ \vq^2 \lambda_j - (\vl \cdot \vq) q_j  \right] 
            \vl^2  =0,
    \end{align}
    and $V= v_b$: 
\begin{align}
    - \left[ \vl^2\vq^2 - (\vl\cdot\vq)^2 \right] \cdot  2q_j 
            + \left[ \vl^2 q_j - (\vl \cdot \vq) \lambda_j \right] 
            2 \vq^2  
            + \left[ \vq^2 \lambda_j - (\vl \cdot \vq) q_j  \right] (\vl \cdot \vq ) =0. 
\end{align}
The equation \eqref{eq:pdesysnlin} is non-linear, hence one has to check it for the full expression 
\begin{align}
      \nabla V \left( (\vec \lambda \cdot \vec q),
    \, 
    \vec q^2, \, \vec \lambda^2
    \right)
    &= \frac{\partial V}{\partial ( \vec \lambda \cdot \vec q ) } 
    \frac{\partial ( \vec \lambda \cdot \vec q ) }{ \partial \vec q}
    + \frac{\partial V}{\partial (\vec q^2) }
    \frac{\partial \vec q^2}{\partial \vec q} 
   = \vec \lambda \frac{\partial V}{\partial A_-} + 2 \vec q \frac{\partial V}{\partial B_-}. 
\end{align}
Here we used \eqref{eq:h6reaN}. Hence we have 
\begin{align}
    (\nabla V)^2 & = \vec \lambda^2 \left(\frac{\partial V}{\partial A_-} \right)^2 
    +4 (\vec \lambda \cdot \vec q) 
    \frac{\partial V}{\partial A_-}
    \frac{\partial V}{\partial B_-}
    +4 \vec q^2 \left( \frac{\partial V}{\partial B_-} \right)^2 
\notag 
\\
    &= M \left(\frac{\partial V}{\partial A_-} \right)^2 
    +4 A_- 
    \frac{\partial V}{\partial A_-}
    \frac{\partial V}{\partial B_-}
    +4 B_- \left( \frac{\partial V}{\partial B_-} \right)^2, 
    \label{eq:solutions-nonlinear1}
\end{align}
\begin{equation}
 (\vec q \cdot \nabla V) = 
 (\vec \lambda \cdot \vec q) \frac{\partial V}{\partial A_-} + 2 \vec q^2 \frac{\partial V}{\partial B_-}
 =
 A_- \frac{\partial V}{\partial A_-} + 2 B_- \frac{\partial V}{\partial B_-},
 \label{eq:solutions-nonlinear2}
\end{equation}

\begin{equation}
    (\vl \cdot \nabla V) = \vl^2 \frac{\partial V}{\partial A_-}
    + 2 ( \vl \cdot \vq) \frac{\partial V}{\partial B_-} 
    = 
    M \frac{\partial V}{\partial A_-} 
    + 2 A_- \frac{\partial V}{\partial B_-}.
    \label{eq:solutions-nonlinear3}
\end{equation}
    Now, substituting \eqref{eq:solutions-nonlinear1},
    \eqref{eq:solutions-nonlinear2} and \eqref{eq:solutions-nonlinear3} to
    \eqref{eq:pdesysnlin} and performing some algebra, we again obtain an identity.
\end{proof}

The final step of the proof of \Cref{thm:h6} follows from the following
proposition, that we will also be used extensively in
\cref{sec:integrability-props} when looking for polynomial invariants is the
following:
\begin{proposition}
    Let us assume that a system in quasi-standard form admits coalgebra
    symmetry with respect to the Lie--Poisson algebra $\h_{6}$. Then, the associated
    dynamical system on the generators of the algebra has the following form:
    \begin{subequations}
        \begin{align}
            K(t+h) &=  
            \label{eq:Khf}
            -K(t) + A_{-}(t) \frac{\partial V}{\partial A_{-}} 
            +2B_{-}(t) \frac{\partial V}{\partial B_{-}}  - M(t), 
            \\
            A_{+}(t+h) &=  A_{-}(t), 
            \label{eq:Aphf}
            \\
            A_{-}(t+h) &= M (t) \frac{\partial V}{\partial A_{-}} 
            +2 A_{-} (t) \frac{\partial V}{\partial B_{-}}
            - A_{+} (t),  
            \label{eq:Amhf}
            \\
            B_{+}(t+h) &= B_{-} (t), 
            \label{eq:Bphf}
            \\
            B_{-}(t+h) &= M (t) \left( \frac{\partial V}{\partial A_{-}}   \right)^2 
            +4 A_{-}(t) \frac{\partial V}{\partial A_{-}}  \frac{\partial V}{\partial B_{-}}
            -4 A_{-}(t) \frac{\partial V}{\partial B_{-}}
            +4 B_{-}(t) \left(\frac{\partial V}{\partial B_{-}} \right)^2 
            \label{eq:Bmhf}
            \\ \notag 
            &\phantom{=} -2 A_{+}(t) \frac{\partial V}{\partial A_{-}} 
            +B_{+} (t)
            \\
            M(t+h) &= M(t). 
            \label{eq:Mhf}
        \end{align}
        \label{eq:sysfin}
    \end{subequations}
    \label{cor:system}
\end{proposition}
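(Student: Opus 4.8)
The plan is to prove \Cref{cor:system} by pure substitution: I would take the generic evolution equations of \Cref{prop:system} and insert into them the conclusion of \Cref{thm:h6} (equivalently \Cref{prop:solsys}), namely that the potential depends only on the generators through $V=V(A_-,B_-,M)$ with $A_-=\vl\cdot\vq$, $B_-=\vq^2$, $M=\vl^2$. Since \Cref{prop:system} already packages the dynamics compactly in terms of the five quantities $\nabla V$, $\vq\cdot\nabla V$, $\vl\cdot\nabla V$, $(\nabla V)^2$ and $\vp\cdot\nabla V$, the entire task reduces to rewriting each of these as a polynomial in the generators and the partial derivatives $\partial V/\partial A_-$, $\partial V/\partial B_-$ via the chain rule.

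For four of the six equations this is immediate and essentially already done. The identities $(\nabla V)^2$, $\vq\cdot\nabla V$ and $\vl\cdot\nabla V$ in terms of the generators were computed in the proof of \Cref{prop:solsys} as \eqref{eq:solutions-nonlinear1}, \eqref{eq:solutions-nonlinear2} and \eqref{eq:solutions-nonlinear3}, starting from $\nabla V=\vl\,\partial V/\partial A_- + 2\vq\,\partial V/\partial B_-$. Substituting \eqref{eq:solutions-nonlinear2} into \eqref{eq:Khs} yields \eqref{eq:Khf}, and \eqref{eq:solutions-nonlinear3} into \eqref{eq:Amhs} yields \eqref{eq:Amhf}; meanwhile \eqref{eq:Aphf}, \eqref{eq:Bphf} and \eqref{eq:Mhf} are copied verbatim from \eqref{eq:Aphs}, \eqref{eq:Bphs} and \eqref{eq:Mhs}, since those equations contain no potential and are unchanged.

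The only delicate equation is the evolution of $B_-$, \eqref{eq:Bmhs}, because of the term $-2\,\vp\cdot\nabla V$. Unlike the other four scalar products, this one pairs the momenta with $\nabla V$ and produces $\vl\cdot\vp$ and $\vq\cdot\vp$. The first is again a generator, $\vl\cdot\vp=A_+$, but the product $\vq\cdot\vp$ is \emph{not} a generator: here one must invoke the explicit realization of $K$ in \eqref{eq:h6reaN}, namely $K=\vq\cdot\vp-M/2$, to write $\vq\cdot\vp=K+M/2$. This gives $\vp\cdot\nabla V = A_+\,\partial V/\partial A_- + 2(K+M/2)\,\partial V/\partial B_-$, and feeding this together with $(\nabla V)^2$ from \eqref{eq:solutions-nonlinear1} and $B_+(t+h)=B_+(t)$ into \eqref{eq:Bmhs}, then collecting terms, produces the remaining equation \eqref{eq:Bmhf}. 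Correctly re-expressing this momentum pairing through the realization of $K$ rather than treating it as a pure chain-rule term is the one point requiring care; everything else is direct substitution, which ends the proof.
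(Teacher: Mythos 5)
Your overall route is the same as the paper's: its proof of \Cref{cor:system} is literally a one-line ``direct computation'', substituting the form $V=V(A_-,B_-,M)$ from \Cref{prop:solsys} into the system~\eqref{eq:syss}, and your chain-rule identities \eqref{eq:solutions-nonlinear1}--\eqref{eq:solutions-nonlinear3}, together with the observation that $\vq\cdot\vp$ is not a generator and must be rewritten through the realization~\eqref{eq:h6reaN} as $\vq\cdot\vp=K+M/2$, are exactly that computation. That observation is indeed the only non-immediate step.

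However, your final claim --- that collecting terms ``produces the remaining equation \eqref{eq:Bmhf}'' --- does not hold as written, and this is the one genuine issue. Your (correct) identity $\vp\cdot\nabla V = A_+\,\partial V/\partial A_- + (2K+M)\,\partial V/\partial B_-$ turns \eqref{eq:Bmhs} into
\begin{equation*}
    B_-(t+h) = M\left(\frac{\partial V}{\partial A_-}\right)^{2}
    +4A_-\frac{\partial V}{\partial A_-}\frac{\partial V}{\partial B_-}
    +4B_-\left(\frac{\partial V}{\partial B_-}\right)^{2}
    -2A_+\frac{\partial V}{\partial A_-}
    -2\left(2K+M\right)\frac{\partial V}{\partial B_-}
    +B_+,
\end{equation*}
whereas the printed \eqref{eq:Bmhf} has $-4A_-\,\partial V/\partial B_-$ in place of $-2(2K+M)\,\partial V/\partial B_-$; these are not equal, since $2K+M=2\vq\cdot\vp\neq 2A_-$ in general. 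The discrepancy resolves in your favor: for $V=B_-$ the map is $q_k(t+h)=2q_k(t)-p_k(t)$, $p_k(t+h)=q_k(t)$, whence $B_-(t+h)=4B_--4\vq\cdot\vp+B_+=4B_--(4K+2M)+B_+$, matching your formula and contradicting the printed one, which would give $4B_--4A_-+B_+$. Moreover, the paper's own downstream computations use your version: the term $-2\beta_-K\,\partial V_1/\partial B_-$ appearing in the proof of \Cref{prop:linear}, whose coefficient in $K$ yields \eqref{eq:systgen10c}, can only arise from $-2(2K+M)\,\partial V/\partial B_-$ in the evolution of $B_-$. So \eqref{eq:Bmhf} contains a typo ($A_-$ in place of $K+M/2$), your computation is sound, but a complete proof must flag and correct this mismatch rather than assert that the terms collect to the printed equation. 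A minor further slip: you invoke ``$B_+(t+h)=B_+(t)$'', which is neither true (by \eqref{eq:Bphs}, $B_+(t+h)=B_-(t)$) nor needed, since the term $B_+(t)$ already sits in \eqref{eq:Bmhs}.
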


\begin{proof}
    Direct computation using the solution found in \Cref{prop:solsys}
    into the system~\eqref{system} from \Cref{prop:system}.
\end{proof}

\begin{proof}[proof of \Cref{thm:h6}]
    The proof follows from \Cref{cor:system}, noting that the discrete-time
    evolution of the generators of the coalgebra~\eqref{eq:sysfin} is closed in
    $\h_{6}$ and satisfy the Poisson map condition. Finally, the condition on
    the Casimir elements follows from~\eqref{eq:Mhf} and again by
    \Cref{prop:solsys}.
\end{proof}

We conclude this section, with the following corollary, which will be used
extensively in \cref{sec:sys}:

\begin{corollary}
    The Lagrangian of the system in quasi-standard form, possessing coalgebra symmetry with respect to the Lie--Poisson coalgebra $\mathfrak{h}_6$ can be written as follows: 
    \begin{equation}
        L = \sum_{k=1}^N  q_k(t+h) q_k (t) -  V(\vec \lambda \cdot \vec q , \vec q^2, \vl^2 ), 
        \label{eq:Lgenh6}
    \end{equation}
    whose discrete Euler--Lagrange equations are given by:
    \begin{equation}  
     q_k(t+h)
    + q_k(t-h) 
    = 2\frac{\partial  V(\vec \lambda \cdot \vec q , \vec q^2, \vl^2 )}{\partial \vec{q} (t)}q_{k}(t)
+ \frac{\partial  V(\vec \lambda \cdot \vec q , \vec q^2, \vl^2 )}{\partial (\vec{\lambda}\cdot \vec{q} (t))}\lambda_{k},
        \quad
        k=1,\dots,N.
        \label{eq:dELgenh6}
    \end{equation}
    Finally, the canonical form of the equation of motion is the following one:
    \begin{subequations}        \label{eq:cangenh6} 
        \begin{align}
            q_{k}(t+h) + p_{k}(t) &= 2\frac{\partial  V(\vec \lambda \cdot \vec q , \vec q^2, \vl^2 )}{\partial \vec{q} (t)}q_{k}(t)
    + \frac{\partial  V(\vec \lambda \cdot \vec q , \vec q^2, \vl^2 )}{\partial \left(\vec{\lambda}\cdot \vec{q} (t)\right)}\lambda_{k},
            \label{eq:qkgenh6}
            \\
            p_{k}(t+h) &= q_{k}(t).
            \label{eq:pkgenh6}
        \end{align}
    \end{subequations}
    \label{cor:genh6}
\end{corollary}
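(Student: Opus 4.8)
The plan is to obtain all three displayed formulas by direct substitution of the conclusions of \Cref{thm:h6} into the general expressions already derived in \cref{sec:background}, supplemented by a single chain-rule computation. First I would invoke \Cref{thm:h6}, which asserts that $\ell_k(\xi)=\xi$ and $V=V(\vl\cdot\vq,\vq^2,\vl^2)$. Inserting $\ell_k(\xi)=\xi$ directly into the general quasi-standard Lagrangian \eqref{eq:dLgen} immediately collapses each factor $\ell_k(q_k(t+h)q_k(t))$ to $q_k(t+h)q_k(t)$, yielding the Lagrangian \eqref{eq:Lgenh6}.

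Next I would treat the Euler--Lagrange form. Setting $\ell_k'(\xi)\equiv 1$ in the general discrete Euler--Lagrange equations \eqref{eq:dEL} reduces them to $q_k(t+h)+q_k(t-h)=\partial V/\partial q_k(t)$. The only step with any content is rewriting the right-hand side: since $V$ depends on $\vq$ only through the combinations $\vl\cdot\vq$ and $\vq^2$ (its third argument $\vl^2$ being independent of $\vq$), the chain rule gives
\begin{equation}
    \frac{\partial V}{\partial q_k}
    = \lambda_k\,\frac{\partial V}{\partial(\vl\cdot\vq)}
    + 2\,q_k\,\frac{\partial V}{\partial(\vq^2)},
\end{equation}
which is precisely the right-hand side of \eqref{eq:dELgenh6}, once one identifies $\partial V/\partial\vq(t)$ in that formula with the derivative of $V$ with respect to its second argument $\vq^2$.

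Finally, for the canonical form I would specialize the equations \eqref{eq:dEL-quasistandard}. With $\ell_k'(\xi)\equiv 1$, the momentum definition $p_k(t)=\ell_k'(q_k(t)q_k(t-h))\,q_k(t-h)$ simplifies to $p_k(t)=q_k(t-h)$, so shifting by $h$ gives $p_k(t+h)=q_k(t)$, which is \eqref{eq:pkgenh6}. Substituting the same simplification $\ell_k'\equiv1$ into the first equation of \eqref{eq:dEL-quasistandard} and applying the chain-rule identity above then reproduces \eqref{eq:qkgenh6}, completing the canonical system \eqref{eq:cangenh6}.

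I do not anticipate any genuine obstacle: the statement is a bookkeeping consequence of \Cref{thm:h6}, and the proof amounts to careful substitution. The only points demanding attention are applying the chain rule correctly through the composite arguments of $V$, and keeping track of the time-shift in the momentum definition when passing from $p_k(t)$ to $p_k(t+h)$.
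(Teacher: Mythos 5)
Your proposal is correct and matches the paper's own (implicit) reasoning: the paper states this corollary without a separate proof, treating it exactly as you do, namely as a direct substitution of the conclusions of \Cref{thm:h6} ($\ell_k(\xi)=\xi$ and $V=V(\vl\cdot\vq,\vq^2,\vl^2)$) into the general formulas \eqref{eq:dLgen}, \eqref{eq:dEL}, and \eqref{eq:dEL-quasistandard}, combined with the chain rule for the composite arguments of $V$. Your reading of the notation $\partial V/\partial\vq(t)$ in \eqref{eq:dELgenh6} as the derivative of $V$ with respect to its second argument $\vq^2$ is also the correct interpretation of the paper's formula, and the time-shift giving $p_k(t+h)=q_k(t)$ agrees with \eqref{eq:dEL-quasistandard-b}.
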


\section{Search for polynomial invariants}
\label{sec:integrability-props}

The coalgebra symmetry approach allows us to obtain $2N-5$ functionally
independent Casimir functions~\eqref{left-and-right-casimirs} of which $N-2$
commute. Such number is not enough to ensure Liouville integrability even in
the continuous setting, see~\cite{blascosanzIntegrabilidadSistemasNo2009},
since upon addition of the Hamiltonian one obtains only $N-1$ commuting
invariants. Moreover, since in the discrete setting we do not have a
straightforward equivalent of a Hamiltonian, see for instance the discussion
in~\cite{Gubbiotti_EtAl2023.Coalgebra_symmetry_discrete_systems} and
\Cref{rmk-2N-3}, we are actually missing not just one, but two invariants. 

In this section, we look for admissible potentials $V$ possessing  polynomial
invariants in the non-trivial generators of the two-photon algebra
$\mathfrak{h}_{6}$. We will look for the existence of invariants for the system
on the generators of the coalgebra~\eqref{eq:sysfin}, since they will be
invariants for the underlying symplectic map~\eqref{eq:cangenh6}, see
also~\cite[\S 7]{Gubbiotti_EtAl2023.Coalgebra_symmetry_discrete_systems}.
Motivated by the known examples,
see~\cite{Gubbiotti_EtAl2023.Coalgebra_symmetry_discrete_systems}, we look for
invariants either linear or quadratic in the realization~\eqref{eq:h6reaN}.
Moreover, we have the consistency condition on the potential: 
\begin{equation}  \label{eq:conscondpot}
    \frac{\partial V}{\partial A_- } \ne 0.
\end{equation}
Indeed, if a potential is such that $\partial V / \partial A_{-}=0$,
then $V=V(B_{-},M)$. That is, $V$ is a potential expressible in terms of the
generators of Lie subalgebra $\mathfrak{gl}_2(\R)\subset \mathfrak{h}_{6}$,
which is a trivial central extension of $\mathfrak{sl}_2(\R)$, the case was already
studied in~\cite{GubLat_sl2}.

An invariant for the system~\eqref{eq:sysfin} satisfies the
condition~\eqref{invariants-general-def-t+h}, where in the left hand side we
substitute the explicit time evolution. This yields a mixed system of algebraic
equations, linear PDEs, and nonlinear PDEs, whose complexity increases greatly
with the degree of the polynomial. In our discussion, we separate solutions
that specify the pair potential-invariant(s) uniquely and solutions that do not
specify completely the potentials, that is where the potential depends on
arbitrary functions of $A_-$, $B_{-}$ and/or their functional combinations. We
will address the first kind of solution of the problem as \emph{non-singular
solutions}, and we will denote the second kind of as \emph{singular
solutions}. We underline that we also consider the possibility of having more
than one invariant of the same kind associated with a potential.

\subsection{Linear invariants} \label{sect:linear-invariants}

Let us first consider the case of linear invariant 
\begin{equation}\label{eq:invariant-linear}
    I_1 = \varkappa(M) K + \alpha_+(M) A_+ + \alpha_-(M) A_- 
    + \beta_+(M) B_+ + \beta_-(M) B_-,
\end{equation}
and set $ V  = V_1 (A_-, B_-, M )$ in
\eqref{eq:corollary-potential} and \eqref{system}. From now for the sake of simplicity, we will omit the explicit dependence of the arbitrary
constants on the generator $M=\vec{\lambda}^{2}$.
Then, the following proposition holds: 
\begin{proposition} \label{prop:linear}
    If the system~\eqref{system} admits a linear polynomial invariant, then
    there exists a unique non-singular solution given by:
   \begin{subequations}
   	 \begin{align} 
		\label{linear-potential-solution-united} 
                V_{1} (A_-, B_-, M) &= 
                    - \alpha_+ \, A_- - \frac{\varkappa}{2} B_-,
                    \\ 
                     \label{linear-invariants-two}
                I_{1} & =\varkappa \, K + \alpha_+ \, (A_+ + A_- ) 
	            + B_+ + B_-;
   	 \end{align}
	\label{eq:V1I1together}%
   \end{subequations}
   and a unique singular solution, given by:
   \begin{subequations}
       \begin{align}
            V^{(s)}_{1} &= B_{-} + F(M, \, M B_{-}-A_{-}^{2})                  
           \label{eq:V1s}; 
           \\ 
           I_{1}^{(s)} &=A_{+}-A_{-},
           \label{eq:I1s}
            \end{align}
       \label{eq:linsing}
   \end{subequations}
   where $F=F(X,Y)$ is an arbitrary function of two arguments.
\end{proposition}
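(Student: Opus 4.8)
The plan is to substitute the linear ansatz~\eqref{eq:invariant-linear} into the invariance condition $I_1(t+h)=I_1(t)$, using the explicit time-evolution of the generators given in \Cref{cor:system}, specialized to a potential of the form $V=V_1(A_-,B_-,M)$. Since $M(t+h)=M(t)$ by~\eqref{eq:Mhf}, the coefficient functions $\varkappa,\alpha_\pm,\beta_\pm$ are preserved under the map, so I may treat them as constants with respect to the dynamics. Writing $I_1(t+h)-I_1(t)=0$ and inserting the formulas~\eqref{eq:Khf}--\eqref{eq:Mhf} produces a single expression in the generators $K,A_\pm,B_\pm,M$ and in the partial derivatives $\partial V_1/\partial A_-$ and $\partial V_1/\partial B_-$.

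The key simplification is that the generators $A_+(t)$, $A_-(t)$, $B_+(t)$, $B_-(t)$, $K(t)$ are functionally independent (they descend from independent coordinate functions on the symplectic realization), so the vanishing of $I_1(t+h)-I_1(t)$ must hold coefficient-by-coefficient once it is organized as a polynomial in these generators. First I would collect the coefficient of $K(t)$: because only~\eqref{eq:Khf} contributes a $-K(t)$ and the identity operator in $I_1$ contributes $+\varkappa K(t)$, matching forces a relation on $\varkappa$. Then I would match the coefficients of $A_+(t)$ and $B_+(t)$, whose evolutions~\eqref{eq:Aphf},\eqref{eq:Bphf} are simply shifts $A_+\mapsto A_-$, $B_+\mapsto B_-$; these give linear algebraic constraints tying $\alpha_+$ to $\alpha_-$ and $\beta_+$ to $\beta_-$. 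The remaining terms, involving $\partial V_1/\partial A_-$ and $\partial V_1/\partial B_-$ multiplied by various generators, yield a coupled system of first-order PDEs for $V_1$.

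The main obstacle is the analysis of this residual PDE system, and in particular the bifurcation it exhibits. Generically the equations force the derivatives $\partial V_1/\partial A_-$ and $\partial V_1/\partial B_-$ to be constants, which integrates to the affine potential~\eqref{linear-potential-solution-united} and simultaneously fixes the invariant~\eqref{linear-invariants-two}; this is the \emph{non-singular} branch. However, when $\varkappa=0$ and the surviving constraints only demand that $V_1-B_-$ be annihilated by the vector field $M\,\partial_{A_-}+2A_-\,\partial_{B_-}$ (equivalently, that it be constant along the characteristic whose invariants are $M$ and $MB_- - A_-^2$), one obtains instead a one-parameter family of potentials depending on the arbitrary function $F$, which is the \emph{singular} branch~\eqref{eq:V1s} with invariant $I_1^{(s)}=A_+-A_-$. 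I would verify the singular case directly by checking that $A_+-A_-$ is conserved for every such $V_1^{(s)}$ using~\eqref{eq:Aphf},\eqref{eq:Amhf}: the evolution of $A_+-A_-$ reduces to $A_-(t)-\bigl(M\,\partial V/\partial A_- + 2A_-\,\partial V/\partial B_- - A_+(t)\bigr) + A_+(t)$, and the bracketed derivative terms cancel precisely when $V_1^{(s)}-B_-$ is a function of $M$ and $MB_--A_-^2$ alone. The care needed to separate these two branches cleanly, and to confirm that no further solutions arise, is where the argument requires the most attention.
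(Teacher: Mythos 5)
Your overall strategy coincides with the paper's: impose $I_1(t+h)=I_1(t)$ using the explicit evolution of the generators, extract coefficients with respect to the generators $K$, $A_+$, $B_+$ on which $V$ does not depend, and solve the resulting algebro-differential system by splitting into a non-singular and a singular branch. Your treatment of the singular branch is also essentially right: conservation of $A_+-A_-$ is equivalent to the inhomogeneous linear PDE $M\,\partial V/\partial A_- + 2A_-\,\partial V/\partial B_- = 2A_-$, whose general solution is exactly $V=B_-+F(M,\,MB_--A_-^2)$, matching \eqref{eq:V1s}--\eqref{eq:I1s}.

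There is, however, a concrete step that fails. You assert that ``only \eqref{eq:Khf} contributes a $-K(t)$'', so that matching the $K(t)$ coefficients relates $-\varkappa$ to $+\varkappa$. This is false: the evolution of $B_-$ also depends on $K(t)$, because $B_-(t+h)=(\grad V)^2-2\vec p\cdot\grad V+B_+$ by \eqref{eq:Bmhs}, and $\vec p\cdot\grad V = A_+\,\partial V/\partial A_- + 2\left(K+M/2\right)\partial V/\partial B_-$. (The printed \eqref{eq:Bmhf} omits this $K$-term; it is a typo, as one can check against \eqref{eq:Bmhs} and against the displayed equation in the paper's own proof, which carries the $-2\beta_-K$ contribution.) Followed literally, your matching gives $-2\varkappa=0$, i.e.\ $\varkappa\equiv 0$, which destroys the non-singular solution \eqref{linear-potential-solution-united} with $\varkappa\neq0$ --- contradicting the very family you claim to recover. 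The correct matching gives the PDE $2\beta_-\,\partial V/\partial B_-+\varkappa=0$, the paper's \eqref{eq:systgen10c}. Similarly, the $A_+$ coefficient is not a ``linear algebraic constraint tying $\alpha_+$ to $\alpha_-$'' but the PDE $2\beta_-\,\partial V/\partial A_-+\alpha_++\alpha_-=0$, since $A_+$ enters the $B_-$ evolution through $-2A_+\,\partial V/\partial A_-$; only the $B_+$ matching is purely algebraic ($\beta_+=\beta_-$). Two consequences follow. First, your branch dichotomy is misplaced: the singular branch is characterized by $\beta_+=0$, not by $\varkappa=0$ (taking $\varkappa=0$ with $\beta_+\neq0$ just gives $\partial V/\partial B_-=0$, a special case of the non-singular family). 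Second, uniqueness of the non-singular pair is not established: after solving the two linear PDEs one must substitute the affine potential into the residual nonlinear equation (the paper's \eqref{eq:systgen11}), which produces two coefficient families, one of which ($\alpha_-=-\alpha_+$, $\varkappa=-2\beta_+$) must be discarded by the consistency condition \eqref{eq:conscondpot}; it is the surviving family $\alpha_-=\alpha_+$ that pins down the invariant \eqref{linear-invariants-two}. Without this compatibility step, your ``generic'' branch determines neither the potential nor the invariant uniquely.
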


\begin{proof}
    Let us take the most generic invariant linear in the generators of
    $\mathfrak{h}_{6}$ with $\{\varkappa, \alpha_+, \alpha_-, \beta_+,
    \beta_- \}  $ being undetermined functions, depending on $M$, not all
    identically zero. Now let us impose that~\eqref{eq:invariant-linear} is
    an invariant for the system~\eqref{system}, \emph{i.e.}\ $I_1(t+h) =
    I_1(t)$ on the solutions of~\eqref{system}. This yields the following
    condition:
    \begin{align}
		& \ 4\beta_- B_- \left(\frac{\partial V}{\partial B_-}\right)^{2}
		+\beta_- M \left(\frac{\partial V}{\partial A_-}\right)^{2} 
		+ 4 \beta_- A_- \frac{\partial V}{\partial A_-} 
	\frac{\partial V_1}{\partial B_-} 
	\notag \\ 
	 &+ 	2\left(  \alpha_- A_- + \varkappa B_- -\beta_- M- 2 \beta_- K  \right) \frac{\partial V_1}{\partial B_- }  
	+( \varkappa A_-+ \alpha_- M-2\beta_- A_+ ) \frac{\partial V_1 }{\partial A_-} 
	\notag \\ 
	&-2 \varkappa K  - \left( \alpha_- +\alpha_+\right) A_+
	+\left(\alpha_+ -\alpha_-  \right) A_-
	+\beta_-   B_+ + \beta_+ B_- 
	- \varkappa M=0. 
    \end{align}
    Since $V$ does not depend on $K, \, A_+, \, B_+$, we can take coefficients with respect to these generators. So, we obtain the following algebro-differential
    system: 
    \begin{subequations}  \label{eq:systgen10} 
    	\begin{align}  \label{eq:systgen10a}
    		 \beta_- &= \beta_+,
	   		 \\  \label{eq:systgen10b}
            2 \beta_{-}\frac{\partial V(A_-, B_-, M) }{\partial A_-} + \alpha_+ + \alpha_- &=0,
          \\  \label{eq:systgen10c}
            2 \beta_{-}\frac{\partial V(A_-, B_-, M) }{\partial B_-} + \varkappa &=0,
              	\end{align}
              \begin{equation}
              	  \begin{aligned}
 \beta_- M \left( \frac{\partial V}{\partial A_-}  \right)^2 
		&
		+ 4 \beta_- B_- \left( \frac{\partial V_1 }{\partial B_-} \right)^2  
		+ 4 \beta_- A_- \frac{\partial V}{\partial A_-} \frac{\partial V}{\partial B_-} 
                +2 (   \alpha_- A_- +\varkappa B_- - \beta_- M) 
		\frac{\partial V}{\partial B_-}
                \\
		&+ (\varkappa A_- + \alpha_- M ) \frac{\partial V}{\partial A_-}
		+ (\alpha_+ - \alpha_-) A_-
		 - \varkappa M + (\beta_+ - \beta_- ) B_- =0.
            \end{aligned} 
            \label{eq:systgen11}
              \end{equation}
               \label{eq:sys1degcoeff}
    \end{subequations}
    
    From the three relations~\eqref{eq:systgen10a}-\eqref{eq:systgen10c} we
    have that  $\beta_+ \neq 0$ makes the two linear partial differential
    equations non-singular.  So, the non-singular solution of the
    system~\eqref{eq:sys1degcoeff} is obtained by solving iteratively the two
    linear PDEs in~\eqref{eq:systgen10b}-\eqref{eq:systgen10c} and use the
    nonlinear one~\eqref{eq:systgen11} as a compatibility condition.  That is,
    from~\eqref{eq:systgen10a}-\eqref{eq:systgen10c} we obtain the following
    expression for the potential:
    \begin{equation}
	V (A_-, B_-, M) = f(M) - \frac{(\alpha_+ + \alpha_-) }{2 \beta_+ } 
	A_- 
        - \frac{\varkappa}{2 \beta_+ } B_-, 
        \label{eq:V10}
    \end{equation}
    where $f(M) $ is an arbitrary smooth function of its argument.
    Inserting~\eqref{eq:V10} in~\eqref{eq:systgen11}, we obtain two different solutions
    for the functions $\alpha_{+}$, $\alpha_{-}$, and $\varkappa$ as follows:
    \begin{subequations}
        \begin{align}
	    \alpha_- = \alpha_+, \quad
            &\alpha_{+}=\text{arbitrary},
            \quad
            \varkappa =\text{arbitrary}, 
            \quad 
            \beta_+  =\text{arbitrary}; 
            \label{eq:deg1sol1}
            \\
	    \alpha_- = - \alpha_+, 
            \quad
          & \alpha_{+}=\text{arbitrary},
	    \quad 
	    \varkappa = -2 \beta_+, 
	    \quad  
	    \beta_+  =\text{arbitrary}.  
            \label{eq:deg1sol2}
        \end{align}
    \end{subequations}
    Now, we note that putting~\eqref{eq:deg1sol2} into~\eqref{eq:V10}
    the term in $A_{-}$ vanishes, \emph{i.e.} such a solution does not satify
    the consistency condition~\eqref{eq:conscondpot}.
    So, only the first solution~\eqref{eq:deg1sol1} is acceptable: 
     \begin{equation}
 	V (A_-, B_-, M) = f (M) 
 	- \frac{\alpha_+}{\beta_+} A_- 
 	- \frac{\varkappa}{2 \beta_+ } B_-.
    \end{equation}
    Noting that the arbitrary function $f$ does not enter the equations of
    motion, we can safely put it to zero, and we obtain the first part of the
    statement.  Moreover, since we assumed $\beta_+ \ne 0 $, we can remove such
    parameter by rescaling the coefficients as  
    \begin{equation}
    	         \alpha_+ \mapsto \alpha_+ \beta_+,
    	         	\qquad  
    	         \varkappa \mapsto \varkappa \beta_+. 
    	     \end{equation} 
  So, we obtain
    the form of the potential and the invariant presented in
    equations~\eqref{eq:V1I1together}.

    Let us now consider the singular case, \emph{i.e.} $\beta_{+}\equiv0$
    in~\eqref{eq:sys1degcoeff}. Then, inserting $\beta_{+}=0$ into the
    algebro-differential system~\eqref{eq:systgen10}, and eliminating
    the already satisfied equation~\eqref{eq:systgen10a}, we obtain
    two constraints on the parameters $\alpha_{-}=-\alpha_{+}$ and $\varkappa=0$,
    and a single \emph{linear} PDE from~\eqref{eq:systgen11}:
    \begin{equation}
            \alpha_{+}\left[2 A_-
		\frac{\partial V}{\partial B_-}
                + M  \frac{\partial V}{\partial A_-}
            - 2 A_-\right]=0.
            \label{eq:systgen11sing}
    \end{equation}
    Then, noting that $\alpha_{+}=0$ yields a trivial solution, we are
    left to satisfy equation~\eqref{eq:systgen11sing} for $V$. Solving it,
    we obtain~\eqref{eq:V1s} concluding the proof of the proposition.
\end{proof}

\subsection{Invariants quadratic in the physical variables}
\label{sect-quadratic-in-realiz}

Following what we said in the introduction of this section, in this subsection
we characterize potentials, admitting invariants of degree $2$ in the
realization \eqref{eq:h6reaN}, \textit{i.e.} the \emph{physical variable}.
Since:
\begin{equation}
    \deg_{(\vec{q},\vec{p})} A_{\pm} = 1,
    \quad
    \deg_{(\vec{q},\vec{p})} B_{\pm} = 
    \deg_{(\vec{q},\vec{p})} K = 2,
    \label{eq:deg_gen}
\end{equation}
we consider invariants of the following form:     
 \begin{equation}\label{eq:I2} 
   I_2 = 
       \beta_+ B_+ + \beta_- B_- + \varkappa K + \alpha_\pm \, A_+ A_- 
       + \alpha_{1+} A_+ 
       + \alpha_{1-} A_-
       + \alpha_{2+} A_+^2
       + \alpha_{2-} A_-^2, 
\end{equation}
where again, for the sake of brevity, we omit the explicit dependence of the
coefficients on $M$.

\begin{proposition}\label{prop:quadr-phys}
    Let us assume that the system~\eqref{system}  admits polynomial invariants, quadratic in
    the physical variables $(\vec q, \vec p) $. Then there exist: 
 \begin{enumerate}
 	\item  two non-singular solutions, given by:
\begin{subequations}\label{eq:quadratic-qp-pair1}
	\begin{align}
			V_{2,I} & =
		\frac{1 }{2}\left[ 
			 \frac{\left(\kappa  +2  \right)}{ M} A_-^2 
			- \kappa B_-
			\right], 
		\label{eq:quadratic-qp-potential-2}
		\\ 
                I_{2,I} & = 
			-  \left( 
			1
			+ \frac{\kappa }{2}  \right)
			\frac{\left(A_+^2 + A_-^2 \right)}{M}
			+ \kappa \, K 
                        + B_+ + B_-; 
                \label{eq:quadratic-qp-invariants1}
	\end{align}
\end{subequations}
and 
\begin{subequations}\label{eq:quadratic-qp-pair2}
	\begin{align}
            V_{2,II}   &=-  \frac{\eta}{2} A_-^2 - \zeta A_- 
                - \frac{\kappa}{2}  B_-, 
	    \label{eq:quadratic-qp-potential} 
	    \\ 
	    I_{2,IIa} & =  
                \kappa (M K - A_+ A_-)
		 -  (
		 A_-^{2} 
		 +A_+^{2} )  
		 + (B_-   
		 + B_+    ) M, 
                 \label{eq:I2IIa}
		 \\
		 I_{2,IIb} &= 
                 (\eta M + \kappa) A_{+} A_- 
		 + \zeta M (A_+ + A_- )
		 + A_+^2 + A_-^2,  
                 \label{eq:I2IIb}
	    \end{align}
	\end{subequations} 
        where the new parameters $\zeta$, $\eta$, and $\kappa$ are in
        one-to-one correspondence with the old ones; 
    \item two singular solutions, where the potential depends on arbitrary
        functions of their arguments, given by:   
    \begin{subequations}
        \begin{align}\label{eq:V2Is}
        V_{2,I}^{(s)} &=  - \frac{\alpha}{2 } B_- + G (M, \, A_-),
                \quad \frac{\partial^{3} G}{\partial Y^{3}}(X,Y) \neq 0,
        	\\
        	\label{eq:I2Is}
        	 I_{2,I}^{(s)}&= 
	  A_+ A_-
	+\frac{1}{\alpha} (A_-^{2} + A_+^{2}
	-   B_- M 
	-B_+ M ) 
	-   K M; 
        \end{align}
    \end{subequations}
   and 
    \begin{subequations}
        \begin{align}\label{eq:V2IIs}
            V_{2,II}^{(s)} &= 
            - \frac{\alpha}{2 } B_-  
            - \frac{\alpha_{+}}{  M} A_- + F \left(M, MB_{-}- A_-^2 \right),
                \quad 
                \frac{\partial^{2} F}{\partial Y^{2}}(X,Y) \neq 0, 
                \\  \label{eq:I2IIs}
                	I^{(s)}_{2,II} & = 
                	A_+^2  
                	 + A_-^2 
                         +\alpha A_{+}A_{-}
                         + \alpha_{+} ( A_+ +  A_-),
   		\end{align}
   \end{subequations}
    where the new parameters $\alpha $ and $\alpha_+ $ are in
    one-to-one correspondence with the old ones. 
 \end{enumerate}   
\end{proposition}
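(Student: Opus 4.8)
The plan is to impose that the general quadratic invariant \eqref{eq:I2} is conserved under the dynamical system \eqref{eq:sysfin}, exactly as was done for the linear case in \Cref{prop:linear}. First I would compute $I_2(t+h)$ by substituting the explicit time-evolution formulas \eqref{eq:Khf}--\eqref{eq:Mhf} into \eqref{eq:I2}, keeping in mind that the potential $V=V(A_-,B_-,M)$ enters through its partial derivatives $\partial V/\partial A_-$ and $\partial V/\partial B_-$. Setting $I_2(t+h)=I_2(t)$ yields a single polynomial identity in the generators. Since the potential $V$ and all coefficient functions $\{\beta_\pm,\varkappa,\alpha_\pm,\alpha_{1\pm},\alpha_{2\pm}\}$ depend only on $M$, $A_-$ and $B_-$ (recall \Cref{rem:potabstr}), while the invariance condition also involves $K$, $A_+$ and $B_+$, I would separate the identity into the coefficients of the distinct monomials in these three ``free'' generators. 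This produces an overdetermined algebro-differential system: the coefficients of terms containing $K$, $A_+$, $B_+$, $A_+^2$, $A_+A_-$, etc.\ must each vanish identically.

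The next step is to organize and solve this system. As in the linear case, the purely algebraic relations among the coefficient functions will fix several of them in terms of the others, while the remaining relations yield linear and nonlinear PDEs for $V$. I expect a natural dichotomy governed by whether the leading coefficient $\beta_+$ (equivalently, the coefficient multiplying the highest-degree generator $B_+$) vanishes. When $\beta_+\neq0$, the linear PDEs for $V$ are non-singular and determine the potential completely up to constants, giving the two non-singular families \eqref{eq:quadratic-qp-pair1} and \eqref{eq:quadratic-qp-pair2}; the splitting into the two subcases $V_{2,I}$ and $V_{2,II}$ should emerge from solving a quadratic (or factorizable) constraint on the parameters, analogous to the branching \eqref{eq:deg1sol1}--\eqref{eq:deg1sol2} seen for linear invariants. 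When $\beta_+=0$ the system degenerates: fewer constraints fix $V$, so one is left with a single linear PDE whose general solution carries an arbitrary function, producing the singular families \eqref{eq:V2Is} and \eqref{eq:V2IIs}. The nondegeneracy conditions $\partial^3 G/\partial Y^3\neq0$ and $\partial^2 F/\partial Y^2\neq0$ should appear precisely as the requirement that these singular solutions be genuinely new, \emph{i.e.}\ not reducible to the non-singular cases or to the lower-degree (linear) invariants already classified, and the consistency condition \eqref{eq:conscondpot} must be used to discard spurious branches (as happened with \eqref{eq:deg1sol2}).

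Throughout, I would enforce the constraint that $V$ must satisfy any \emph{integrability} (cross-derivative) conditions forced by the linear PDEs, and use the nonlinear PDE coming from the $p$-independent / Casimir-type term as a compatibility check, mirroring the strategy in the proof of \Cref{prop:linear}. The final normalizations---rescaling to absorb $\beta_+$ and relabeling the surviving constants as $\zeta,\eta,\kappa$ (respectively $\alpha,\alpha_+$)---put the answers into the stated one-to-one correspondence with the original coefficients.

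The main obstacle I anticipate is the bookkeeping in the nonlinear PDE: because $B_-(t+h)$ in \eqref{eq:Bmhf} is \emph{quadratic} in the derivatives of $V$, substituting it into a quadratic invariant produces terms up to fourth order in $\partial V/\partial A_-$ and $\partial V/\partial B_-$, and these must be correctly matched monomial-by-monomial against the linear terms coming from $K$, $A_\pm$ and $B_\pm$. Disentangling which combinations of coefficient functions cause the higher-order terms to cancel---and thereby correctly identifying the branch points that separate the two non-singular solutions from the two singular ones---is the delicate part; the rest is systematic but lengthy algebra that I would relegate to a direct (possibly symbolic) computation.
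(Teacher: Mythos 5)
Your overall strategy is exactly the paper's: impose $I_2(t+h)=I_2(t)$ on the system \eqref{eq:sysfin}, extract coefficients of the monomials in the generators $K$, $A_+$, $B_+$ on which $V$ does not depend, and split the resulting algebro-differential system \eqref{eq:quadratic-system} into non-singular and singular branches, using the nonlinear relation as a compatibility condition and reparametrizing at the end. However, there is a genuine gap in your case analysis of the singular branches. You assert that the dichotomy is ``governed by whether the leading coefficient $\beta_+$ vanishes,'' and that the case $\beta_+=0$ produces \emph{both} singular families \eqref{eq:V2Is} and \eqref{eq:V2IIs}. This is not what happens. In the quadratic case the linear PDE \eqref{eq:quadratic-system1} carries the coefficient $2(\beta_-+\alpha_{2-}M)$ in front of $\partial V/\partial A_-$, so non-singularity requires \emph{two} independent conditions: $\beta_+\neq 0$ \emph{and} $\alpha_{2+}M+\beta_+\neq 0$. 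The two singular families then arise from two \emph{distinct} degenerations: $V_{2,I}^{(s)}$ comes from setting $\beta_+=-\alpha_{2+}M$ with $\alpha_{2+}\neq0$ (so $\beta_+\neq 0$ there), while $V_{2,II}^{(s)}$ comes from $\beta_+=0$ (which, incidentally, also re-derives the already-known $V_1^{(s)}$ rather than a second new family). Following your plan literally, the sub-branch $\beta_+=-\alpha_{2+}M\neq0$ would be treated as non-singular; the formal solution \eqref{eq:general-soln-for-quadratic-potential} has $\alpha_{2+}M+\beta_+$ in a denominator, so you would either divide by zero or silently discard this branch, and the potential $V_{2,I}^{(s)}$ together with its invariant $I_{2,I}^{(s)}$ would be lost. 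Since the proposition claims a complete classification, this omission is fatal to the argument as planned, even though the machinery you describe is the right one.

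A minor secondary point: your anticipated ``fourth order'' terms in $\partial V/\partial A_-$, $\partial V/\partial B_-$ do not occur. Because $I_2$ is quadratic in the physical variables, the generator $B_-$ (itself of degree two) can appear in $I_2$ at most linearly, and $A_-(t+h)$ at most squared; hence the compatibility condition \eqref{eq:quadratic-system3} is only \emph{quadratic} in the first derivatives of $V$. The bookkeeping is lighter than you fear; the real delicacy lies precisely in the two-parameter singularity structure described above.
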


\begin{remark}\label{rmk:I1s}
    The potential $V_{2, I} $~\eqref{eq:quadratic-qp-potential-2} is a
    particular case of the singular potential $V_{1}^{(s)}$~\eqref{eq:V1s} for
    $F(X,Y)=\varkappa Y/2\beta_{+} $. This implies that it admits also the
    invariant $I_{1}^{(s)}=A_{+}-A_{-}$ from~\eqref{eq:I1s}.
\end{remark}
 
\begin{proof}
    Let us now assume the existence of a generic polynomial invariant
    \eqref{eq:I2}, at most quadratic in physical variables $(\vec q, \vec p)$.
    We impose that~\eqref{eq:I2} is an invariant for the system~\eqref{system},
    \emph{i.e.}\ $I_2(t+h) = I_2(t)$ on the solutions of~\eqref{system}.
    Expanding this and taking the coefficients similarly as in the proof of the
    proposition~\ref{prop:linear}, we arrive at the following
    algebro-differential system: 
    \begin{subequations}\label{eq:quadratic-system} 
   	 \begin{align}\label{eq:quadratic-system-alg-beta} 
    	 \beta_+ &= \beta_-,
    	 \\
    	 \alpha_{2+} & = \alpha_{2-},\label{eq:quadratic-system-alg-alpha2}
    	 \\ 
    	 \label{eq:quadratic-system1} 
    	 	2( \beta_-+ \alpha_{2-} M ) \frac{\partial  V(A_-, B_-, M)}{\partial A_-} 
    	 + 4 \alpha_{2-} A_- \frac{\partial  V(A_-, B_-, M)}{\partial B_-} 
         + 2 \alpha_\pm A_- + \alpha_- + \alpha_+ &=0,  
     		\\ 
     		 \label{eq:quadratic-system2} 
     		  2 \beta_- \frac{\partial  V(A_-, B_-, M)}{\partial B_-}   + \varkappa &=0. 
     \end{align} 
     \begin{equation}\label{eq:quadratic-system3}
     	\begin{aligned}
     		  M \left( \alpha_{2-} M + \beta_- \right)
     		\left( \frac{\partial V}{\partial A_-} \right)^2
     	+4(	 \alpha_{2-} A_-^2 + \beta_- B_-  )
     	 \left( \frac{\partial V}{\partial B_-} \right)^2
          		+ 4 ( \alpha_{2-}M+\beta_- ) A_-
     		 \frac{\partial V}{\partial A_-} 
     		 \frac{\partial V}{\partial B_-} &
     		 \\ 
      + (\alpha_\pm M A_- + \varkappa A_- + \alpha_{1-}M  )
      \frac{\partial V }{\partial A_-}
      + 2 (\alpha_\pm A_-^2 + \alpha_{1-} A_- + \varkappa B_- - \beta_- M ) \frac{\partial V}{\partial B_-} &
      \\ 
      + (\alpha_{2+} - \alpha_{2-} ) A_-^2  
      + (\alpha_{1+} - \alpha_{1-} ) A_- 
      + (\beta_+ - \beta_- ) B_- 
      - \varkappa M &= 0.       
     	\end{aligned}
     \end{equation} 
        \end{subequations}

    From the four relations~\eqref{eq:quadratic-system-alg-beta}-\eqref{eq:quadratic-system2} we
    have that the non-singularity conditions are:
    \begin{equation}
          \beta_+ \neq 0,
          \qquad\text{and}\qquad
          \alpha_{2+}M+\beta_{+}\neq0.
    \end{equation}
    Then, we can proceed to solve the system~\eqref{eq:quadratic-system} as we
    did in \Cref{prop:linear}. For instance, we start by solving the equalities
    \eqref{eq:quadratic-system-alg-beta}-\eqref{eq:quadratic-system-alg-alpha2}
    and putting them into the two linear PDEs
    \eqref{eq:quadratic-system1}-\eqref{eq:quadratic-system2}. So, in
    the non-singular case this yields the
    solution:
    \begin{equation}\label{eq:general-soln-for-quadratic-potential}
	V (A_-, B_-, M) = f(M) 
	- \frac{1}{2} \left[ 
	 \frac{(\alpha_\pm \beta_+ - \alpha_{2+} \varkappa) }{\beta_+ (\alpha_{2+} M+\beta_+)}A_-^2
	+ \frac{(\alpha_{1+} + \alpha_{1-} )}{ ( \alpha_{2+} M+\beta_+ )} A_- 
    + \frac{\varkappa}{ \beta_+ } B_-\right].
    \end{equation}
    By substituting \eqref{eq:general-soln-for-quadratic-potential} into
    \eqref{eq:quadratic-system3}, we obtain two different sets of solutions
    with respect to the coefficients $\{ \alpha_{1+}, \,\alpha_{1-} \,
    \alpha_{2+},  \,  \alpha_\pm, \,  \varkappa, \, \beta_+ \} $ and obtain
    the two following potentials:  
    \begin{subequations}\label{eq:quadratic:solution}
            \begin{align} \label{eq:quadratic:solutionI} 
                    V_{2,I} (A_-, B_-, M) & \coloneqq f(M) 
                            + \frac{\left(\varkappa  +2 \beta_+  \right)}{2 \beta_+ \, M} A_-^2 
                            - \frac{\varkappa  }{2 \beta_+} B_-, 
                    \\ 
                    V_{2,II} (A_-, B_-, M)  & \coloneqq f  (M)
                    - \frac{(\alpha_\pm  \,  \beta_+   - \alpha_{2+}  \varkappa    )}{2 \beta_+ ( \alpha_{2+} \, M+ \beta_+  )} A_-^2 
                    - \frac{\alpha_{1+} }{ \alpha_{2+} \, M+ \beta_+ } A_- 
                    -  \frac{\varkappa}{2\beta_+ } B_-.
                    \label{eq:quadratic:solutionII}
            \end{align}	
    \end{subequations}
    We observe that both potentials are admissible because they satisfy the
    consistency condition~\eqref{eq:conscondpot}. Moreover, as we did earlier,
    we can remove the function $f$, as it does not affect the equations of
    motion. So, from~\eqref{eq:quadratic:solutionI} we obtain the potential
    \eqref{eq:quadratic-qp-potential-2}. Substituting the associated value of
    the parameters yields the invariant~\eqref{eq:quadratic-qp-invariants1}.
    Now, to simplify the expression of the
    potential~\eqref{eq:quadratic:solutionII}, we introduce some new
    coefficients as follows: 
    \begin{equation}\label{eq:transformation-parameters}
            \eta (M) \coloneqq  \frac{\alpha_\pm - \alpha_{2+} \varkappa}{\beta_+ (\alpha_{2+} M + \beta_+)},
            \qquad 
            \zeta(M) \coloneqq \frac{\alpha_{1+}}{\alpha_{2+}M + \beta_+}, 
            \qquad 
            \kappa (M) \coloneqq \frac{\varkappa}{\beta}. 
    \end{equation}
    Note that, this is justified because the
    transformation~\eqref{eq:transformation-parameters} allows us to remove
    uniquely three parameters, say $\varkappa$, $\alpha_{1,+}$, and
    $\alpha_{2+}$, and it is not defined only for the singular cases.  In this
    new parameterisation the potential is in its final form given by
    equation~\eqref{eq:quadratic-qp-potential}, and the associated invariant
    read as: 
    \begin{equation}\label{eq:I2II-linear-combination}
	I_{2,II} = \beta_+\zeta I_{2, IIa} + \alpha_{1+} I_{2,IIb}, 
    \end{equation}
    where $I_{2, IIa}, I_{2,IIb}$ are given
    by~\eqref{eq:I2IIa}-\eqref{eq:I2IIb}.  We can directly check that these are
    \emph{bona fide} invariants, satisfying the condition $I(t+h) = I(t) $.
    Therefore, the first part of the statement is proved. 

    Let us discuss the singular solutions of the system
    \eqref{eq:quadratic-system}.  First, set $\beta_{+}=-\alpha_{2+}M$. In this
    case, the equations
    \eqref{eq:quadratic-system1}-\eqref{eq:quadratic-system2} become: 
    \begin{subequations}\label{eq:V2sing-PDE-system}
            \begin{align}\label{eq:V2sing-PDE-system1}
                    4 \alpha_{2+} A_- \frac{\partial V (A_-, B_-, M )}{\partial B_-} 
                    + 2 \alpha_\pm A_- + \alpha_{1+} + \alpha_{1-} 
                    = 0 
                    \\ \label{eq:V2sing-PDE-system2}
                    2 \alpha_{2+} M \frac{\partial V (A_-, B_-, M )}{\partial B_-}  - \varkappa = 0, 
            \end{align}
                    \begin{equation}\label{eq:V2sing-PDE-system-nonlin}
                            \begin{aligned}
                                                    4\alpha_{2+} (A_-^2 - B_- M )
                    \left(\frac{\partial V}{\partial B_-}\right)^2 
                    + \left( \alpha_\pm A_- M + \varkappa A_- + \alpha_{1-} M \right)
                    \frac{\partial V}{\partial A_- }&
                    \\ + 2 \left( 
                    \alpha_\pm A_-^2 + \alpha_{2+} M^2 
                    + \alpha_{1-} A_- + \varkappa B_- 
                    \right) 
                    \frac{\partial V}{\partial B_-}
                    + (\alpha_{1+} - \alpha_{1-}) A_- 
                    - \varkappa M &=0. 
                            \end{aligned}
                    \end{equation}
    \end{subequations}
    We solve the first PDE \eqref{eq:V2sing-PDE-system1} alone:
    \begin{equation}\label{eq:V-partial-singular1}
            V (A_-, B_-, M) = 
            \frac{\varkappa}{2\alpha_{2+}}
            \frac{B_-}{M}
            + G(M, A_-), 
    \end{equation}
    and use \eqref{eq:V2sing-PDE-system2}-\eqref{eq:V2sing-PDE-system-nonlin}
    as a compatibility conditions. Substituting \eqref{eq:V-partial-singular1}
    into  \eqref{eq:V2sing-PDE-system2}, taking the coefficients and solving
    the resulting linear system with respect to the functions $\alpha_{1-}$ and
    $\varkappa $, we obtain the following set of solutions: 
    \begin{equation}\label{eq:V2-algebr-system-singular}
              \alpha_{1-} = - \alpha_{1+} =0, \quad
                \varkappa = - \alpha_\pm M , 
                \qquad 
               \alpha_{1+},  \alpha_{2+}, \alpha_\pm  =\text{arbitrary}. 
    \end{equation}
    Inserting~\eqref{eq:V2-algebr-system-singular}
    into~\eqref{eq:V-partial-singular1}, and introducing the scaling
    \begin{equation}\label{eq:rescaling2} \alpha \coloneqq
    \frac{\alpha_\pm}{\alpha_{2+}}, \end{equation} we obtain the first singular
    solution~\eqref{eq:V2Is}, together with the corresponding
    invariant~\eqref{eq:I2Is}.  Note that, the denominator $\alpha_{2+} \ne 0$,
    otherwise from $\beta_{+}=-\alpha_{2+}M$ one obtain $\beta_+ =0 $, which is
    the second case.

    In the same fashion, inserting $\beta_+ =0  $ (together with
    \eqref{eq:quadratic-system-alg-alpha2}) into the
    \eqref{eq:quadratic-system1}-\eqref{eq:quadratic-system2}, we arrive at: 
    \begin{subequations}
        \begin{align}\label{eq:V2s-partial-system-beta-0}
                2   \alpha_{2+} M  \frac{\partial  V(A_-, B_-, M)}{\partial A_-} 
         + 4 \alpha_{2+} A_- \frac{\partial  V(A_-, B_-, M)}{\partial B_-} 
         + 2 \alpha_\pm A_- + \alpha_- + \alpha_+ &=0
                 \\
                 \varkappa &= 0, 
        \end{align}
    \end{subequations}
    The solution of \eqref{eq:V2s-partial-system-beta-0} reads as 
    \begin{equation}\label{eq:V2s-partial-solution-beta-0}
        V (A_-, B_-, M) = \frac{\alpha}{2M} A_-^2 
        + \frac{\alpha_{1-} + \alpha_{1+} }{2 \alpha_{2+} M } A_-
        + F \left(M, \, MB_- - A_-^2 \right), 
    \end{equation}
    where we used again the scaling~\eqref{eq:rescaling2}.
    Substituting~\eqref{eq:V2s-partial-solution-beta-0}
    into~\eqref{eq:quadratic-system3}, we obtain a system of algebraic equations
    for the functions $\alpha_{1-}$, $\alpha_{1+}$, $\alpha_{2+} $ and
    $\alpha_{\pm} $. The solutions of this system are s two sets of possible
    values:
    \begin{subequations}
            \begin{align}
             \label{eq:singular-alg-system2b}
                    \alpha_{1-}& =- \alpha_{1+}, \quad  
                    \beta_+ = 0, \quad 
                    \alpha_{\pm} =- 2 \alpha_{2+} , \quad 
                    \varkappa = 0, \qquad 
                    \alpha_{1+},  \, \alpha_{2+}  = \text{arbitrary}; 
            \\
            \label{eq:singular-alg-system2a}
                    \alpha_{1-}& = \alpha_{1+}, \quad 
                    \beta_+ = 0, \quad
                    \varkappa = 0, \qquad  
                    \alpha_{1+}, \, \alpha_\pm, \, \alpha_{2+}  = \text{arbitrary}.
            \end{align}
    \end{subequations}
    The \eqref{eq:singular-alg-system2b} leads again to the already studies
    potential \eqref{eq:V1s}, while the \eqref{eq:singular-alg-system2a}
    provides the second singular potential \eqref{eq:V2IIs} and its invariant
    \eqref{eq:I2IIs}, where 
    \begin{equation}
            \alpha_+ \coloneqq \frac{\alpha_{1+}}{\alpha_{2+}}. 
    \end{equation}

    Finally, we observe that we have to impose the following conditions on the
    functions $G$ and $F$:
    \begin{equation}
            \frac{\partial^{3} G}{\partial Y^{3}}(X,Y) \neq 0, 
            \qquad 
              \frac{\partial^{2} F}{\partial Y^{2}}(X,Y) \neq 0
    \end{equation}
    in order to avoid falling into the already-studied
    cases~\eqref{linear-potential-solution-united},~\eqref{eq:quadratic-qp-potential-2},
    and~\eqref{eq:quadratic-qp-potential}.
\end{proof}

\section{Study of obtained potentials}
\label{sec:sys}

In this section, we study the obtained in propositions~\ref{prop:linear}
and~\ref{prop:quadr-phys} potentials and invariants and identify the obtained
systems, relating them to what is known in the literature.  First, we will
discuss the non-singular potentials, while we will discuss the singular ones
altogether in the final subsection.

\subsection{The potential $V_{1}$}
\label{sect:V1}

Let us consider the potential $V_{1}$ from~\eqref{linear-potential-solution-united}.
In the realization~\eqref{eq:h6reaN}, it gives raise to the following Lagrangian:
\begin{equation}\label{eq:lagrangianV1}
    L_{1} = \sum_{k=1}^{N}\left[ q_{k}(t+h)q_{k}(t)  
    + \alpha_+ \lambda_{k} q_{k}(t) 
+ \frac{\varkappa}{2} q_{k}^{2}(t)\right].
\end{equation}

The associated discrete Euler--Lagrange equations are:
\begin{equation}
    q_{k}(t+h)+\varkappa q_{k}(t) + q_{k}(t-h) = - \alpha_+ \lambda_{k},
    \label{eq:dELV1} 
\end{equation}
while the canonical form of the system is the following:
\begin{subequations}\label{eq:V1-canonical-eoms}
    \begin{align}
        q_{k}(t+h) + p_{k}(t) &= -\varkappa q_{k}(t) - \alpha_+ \lambda_{k},
        \label{eq:canV1q}
        \\
        p_{k}(t+h) &= q_{k}(t).
        \label{eq:canV1p}
    \end{align}
    \label{eq:canV1}%
\end{subequations}

In the realization~\eqref{eq:h6reaN} the invariant $I_{1}$ from~\eqref{linear-invariants-two}
has the following form:
\begin{equation}
    I_{1} =
        \sum_{k=1}^{N}\left[ p_{k}^{2}(t) +q_{k}^{2}(t) 
            +\alpha_{+}\lambda_{k}(p_{k}(t)+q_{k}(t))
        +\varkappa q_{k}(t)p_{k}(t) + \varkappa\frac{\lambda_{k}^{2}}{2} \right].
    \label{eq:I1real}
\end{equation}

The system~\eqref{eq:dELV1} is related to the discrete isotropic harmonic oscillator
\begin{equation}
    q_{k}(t+h) + \varkappa q_{k}(t) + q_{k}(t-h) = 0,
    \quad
    k=1,\dots,N,
    \label{eq:diho}
\end{equation}
through the point transformation:
\begin{equation}
    q_{k}(t) \mapsto q_{k}(t) -\frac{\alpha_{+}\lambda_{k}}{\varkappa+2},
    \quad
    k=1,\dots,N.
    \label{eq:transfdiho}
\end{equation}
This implies that this system is MS, for instance with the $2N-1$ invariants,
constructed with the coalgebra symmetry with respect to the
$\mathfrak{sl}_{2}(\R)$ algebra presented in~\cite[Example 1 and Remark
4.2]{Gubbiotti_EtAl2023.Coalgebra_symmetry_discrete_systems}. Some additional
considerations, generalizations, and a different proof of superintegrability
with a discrete version of the Demkov--Fradkin
tensor~\cite{Demkov1956,Fradkin1965} and its symmetry algebra will be treated
in a subsequent work~\cite{DGL_fradkin}.

We observe that, using the coalgebra symmetry with respect to the $\h_{6}$ we
are able to prove at most that the system is QMS. This proof is a direct proof
involving the direct construction of the invariants quadratic in generators of
the algebra. Since this proof is rather long and technical, we reported it in
\cref{app:V1qms}. We observe that this is consistent with the fact that often
the last invariant of motion is not of coalgebraic origin, see for
instance~\cite{PostRiglioni2015}.

We conclude this subsection with a simple discussion of the continuum limit of
the discrete-time system~\eqref{eq:dELV1}. Note that, as $h\to0^{+}$ we have:
\begin{equation}\label{eq:qh-expansion}
    q_k (t \pm h) = Q_k (t) \pm h \, \dot Q_k(t) + \frac{h^2}{2} \ddot Q_k (t) 
    + \mathcal{O}(h^2).
\end{equation}
Therefore, the equation \eqref{eq:dELV1} takes the form
\begin{equation}\label{eq:ODE2}
    \ddot Q_k(t) + \frac{\varkappa + 2}{h^2}  Q_k(t)   = 
    - \frac{\alpha_+}{h^2} \lambda_k. 
\end{equation}
So, to balance the two sides of the equation~\eqref{eq:ODE2}, we set 
\begin{equation}
    \varkappa =  \omega^2  h^2-2,
    \quad
    \alpha_+ = \gamma h^2. 
    \label{eq:coeff_scalingV1}
\end{equation}
where $\omega $ and $\gamma$ are arbitrary functions of $\vec \lambda$.
Under such a scaling we obtain that as $h\to0^{+}$, the continuum limit 
of the difference equation~\eqref{eq:dELV1} is given by 
\begin{equation}\label{eq:cont-EL-V1}
    \ddot Q_k (t) + \omega^2 Q_k = - \gamma \lambda_k. 
\end{equation}
The same scaling~\eqref{eq:coeff_scalingV1}, applied to the 
discrete Lagrangian~\eqref{eq:lagrangianV1} yields, up to total
derivatives:
\begin{equation}
    L_{1} = -h^{2} \mathcal{L}_{1} + \mathcal{O}(h^{3}),
    \label{eq:dL1contlim}
\end{equation}
where
\begin{equation}\label{eq:cont-Lagrangian-V1}
	\mathcal{L}_1 \coloneqq \sum_{k=1}^N \left[ 
		\frac{1}{2} \dot Q_k^2 (t)  
	   - \frac{1}{2} \omega^2 Q_k^2 (t) 
	   - \gamma \lambda_k Q_k(t) 
	\right] 
\end{equation}
is the Lagrangian generating the equations of
motion~\eqref{eq:cont-EL-V1}.

Using the scaling~\eqref{eq:coeff_scalingV1}, and that the equation of momenta
imply $\dot{Q}_{k}=P_{k}$, we obtain that the continuum limit of the
invariant~\eqref{eq:I1real} is related to the Hamiltonian, associated to the
Lagragian~\eqref{eq:cont-Lagrangian-V1}: 
\begin{equation}\label{eq:cont-Hamiltonian-V1}
	\mathcal{H}_1 \coloneqq   \sum_{k=1}^N \left[
	\frac{1}{2}P_k^2(t) 
	+ \frac{1}{2}\omega^2 Q_k^2(t) 
	+ \gamma \lambda_k Q_k (t)
	\right].  
\end{equation}
through the relation:
\begin{equation}
    I_1 = \vl^{2} + h^2 \left[2 \mathcal{H}_{1} - 
    \frac{\omega^2 \vl^2 }{2}\right] + \mathcal{O}(h^{3}),
    \label{eq:I1exp}
\end{equation}
\emph{i.e.} the Hamiltonian appears at the first non-trivial
order of expansion.

In addition, under the same assumptions, the left and right Casimir functions
$\mathfrak{h}_6$~\eqref{eq:Cash6} in continuum limit have the simple expansion
$C_{[m]}(\vec{q},\vec{p})=h^{2}C_{[m]}(\vec{Q},\vec{P})+\mathcal{O}(h^{3})$,
and
$C^{[m]}(\vec{q},\vec{p})=h^{2}C^{[m]}(\vec{Q},\vec{P})+\mathcal{O}(h^{3})$,
for $m=2,\dots,N$. That is, as expected, the left and right Casimir functions
are inherited unchanged by the continuous system, see for
instance~\cite{ballesterosTwoPhotonAlgebraIntegrable2001a,Ballesteros_EtAl2009.Ndimensional_integrability_twophoton_coalgebra_symmetry}.

\subsection{The potential $V_{2, I}$ } \label{sect:si}  

Let us now consider the potential $V_{2, I} $
\eqref{eq:quadratic-qp-potential-2}, obtained as one of the solutions within
proposition~\ref{prop:quadr-phys}. In realization \eqref{eq:h6reaN}, it can be
rewritten as follows: 
\begin{equation}
	V_{2, I} = \frac{1 }{2}\left[ 
			 \frac{\left(\kappa  +2 \right)}{ \vl^2 }  ( \vl \cdot \vq )^2  
			- \kappa \vq^2 
			\right]. 
\end{equation}

It gives rise to the discrete Lagrangian 
\begin{equation}\label{eq:lagrangianV2I}
	L_{2,I} = \sum_{k=1}^N  \left[ q_k(t+h) q_k(t)  
	+ \frac{\kappa}{2}  q_k^2
		\right] 
		- \frac{(\kappa +2 )}{2 \vl^2   } 
	\left( 
	 \sum_{i=1}^N 
	\lambda_i q_i \right)^2. 
\end{equation}
	
The associated discrete Euler--Lagrange equations are: 	
\begin{equation} \label{eq:dELV2I} 
	q_k(t+h) + \kappa q_k(t) + q_k (t-h)  = 
	\frac{(\kappa+2)}{
	\vl^2 
	} 
	\left(\vl \cdot \vec q(t) \right)\lambda_k  , 
\end{equation}
while the corresponding system in canonical form is 
\begin{subequations}
	\begin{align}
		q_k(t+h) + p_k (t) &= -\kappa q_k(t)  + \frac{(\kappa+2)}{
	\vl^2 
	} 
	\left(\vl \cdot \vec q(t) \right)\lambda_k,  
		\\ 
		p_k (t+h) & = q_k (t). 
	\end{align}
\end{subequations}	

This system possesses two invariants, that can be written in the
realization~\eqref{eq:h6reaN}:
\begin{subequations}
    \begin{align}
        I_{1}^{(s)} &=
        \sum_{k=1}^{N} \lambda_{k}\left( p_{k}-q_{k} \right),
        \label{eq:I1sreal}
        \\
        I_{2,I} &= 	-  \left( 
			1
			+ \frac{\kappa }{2}  \right)
			\frac{\left( (\vl \cdot \vp)^2 + (\vl \cdot \vq)^2 \right)}{\vl^2 }
			+ \kappa \,  \left( (\vq \cdot \vp) - \frac{\vl^2}{2} \right) 
                        + \vp^2 
                        + \vq^2. 
    \end{align}
    \label{eq:invV2I}
\end{subequations}

We now prove that the system~\eqref{eq:dELV2I} is superintegrable:

\begin{theorem}  \label{th:SI}
The system \eqref{eq:dELV2I}, generated by the Lagrangian~\eqref{eq:lagrangianV2I} with the potential $V_{2,I}$, is superintegrable, because it admits the following $2N-3$ functionally independent invariants 
   \begin{equation}\label{SI1-set} 
       	 \mathcal{S}_{2, I} = \left\{ I_{2, I}, I_{1}^{(s)},  C^{[3]}, \dots, C^{[N]}, C_{[3]}, \dots, C_{[N-1]} 
       	 \right\}, 
       \end{equation}  
of which the first $N$ are in involution. 
\end{theorem}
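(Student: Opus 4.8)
The plan is to verify in turn the three assertions packaged in the statement: that every element of $\mathcal{S}_{2,I}$ is an invariant, that the first $N$ of them are in involution, and that all $2N-3$ are functionally independent. Invariance is immediate from results already in hand. The left and right Casimirs $C^{[3]},\dots,C^{[N]}$ and $C_{[3]},\dots,C_{[N-1]}$ are the universal invariants \eqref{left-and-right-casimirs}, hence invariants of any system enjoying $\h_6$ coalgebra symmetry, and by \Cref{thm:h6} the system \eqref{eq:dELV2I} does; the invariant $I_{2,I}$ is produced by \Cref{prop:quadr-phys}; and $I_1^{(s)}=A_+-A_-$ is an invariant because, as noted in \Cref{rmk:I1s}, the potential $V_{2,I}$ is a particular case of the singular potential $V_1^{(s)}$, which carries precisely the invariant \eqref{eq:I1s}.

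For the involution of the first $N$ functions $\{I_{2,I},I_1^{(s)},C^{[3]},\dots,C^{[N]}\}$ I would invoke two structural facts of the coalgebra construction. First, the left Casimirs mutually Poisson-commute, $\{C^{[m]},C^{[n]}\}=0$. Second, both $I_{2,I}$ and $I_1^{(s)}$ are functions of the total (that is, $N$-particle) generators \eqref{eq:h6reaN} alone, while each left Casimir commutes with every total generator — the universal commutation property underlying \Cref{thm:sumup}; by the Leibniz rule this gives $\{C^{[m]},I_{2,I}\}=\{C^{[m]},I_1^{(s)}\}=0$ for all $m$. It then only remains to check the single bracket $\{I_{2,I},I_1^{(s)}\}$, a short computation with the Lie--Poisson relations \eqref{eq:h6comm}: one finds $\{A_+,I_{2,I}\}=\{A_-,I_{2,I}\}=-\kappa(A_+-A_-)$, whence $\{I_1^{(s)},I_{2,I}\}=\{A_+-A_-,\,I_{2,I}\}=0$.

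The genuine obstacle is the functional independence of the full set. I would begin from the fact, recorded in \Cref{rmk-2N-3}, that the $2N-5$ universal invariants $C^{[3]},\dots,C^{[N]},C_{[3]},\dots,C_{[N-1]}$ are already functionally independent, which rests on their staircase dependence on the canonical pairs, since $C^{[m]}$ involves only $(q_1,p_1),\dots,(q_m,p_m)$ and $C_{[m]}$ only the last $m$ pairs. It then suffices to show that adjoining $I_1^{(s)}$ and $I_{2,I}$ raises the rank of the Jacobian by exactly two at a generic point. The heuristic is clean: under the scaling $(\vq,\vp)\mapsto(s\vq,s\vp)$ the Casimirs are homogeneous of degree four, the variable part of $I_{2,I}$ is quadratic, and $I_1^{(s)}=\sum_k\lambda_k(p_k-q_k)$ is linear, so the three families carry independent information; moreover $I_{2,I}$ is the only member involving the radial combinations $\vp^2,\vq^2,\vq\cdot\vp$ through $B_\pm$ and $K$, which are absent from the purely antisymmetric Casimirs. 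To turn this into a proof I would evaluate the $(2N-3)\times 2N$ Jacobian at one conveniently chosen point and exhibit a nonvanishing maximal minor, exactly as in the rank computation of \Cref{prop:rank}; since functional dependence is a closed condition, a single such point suffices. This minor computation is where the real work lies, everything else being bookkeeping.
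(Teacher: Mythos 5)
Your treatment of invariance and of the involution of the first $N$ elements is correct and essentially coincides with the paper's: invariance follows from \Cref{thm:h6}, \Cref{prop:quadr-phys} and \Cref{rmk:I1s}; involutivity splits into the mutual commutation of the left Casimirs, their commutation with any function of the total generators (both part of the general theory behind \Cref{thm:sumup}), plus the single bracket $\{I_{2,I},I_1^{(s)}\}$. Your explicit check $\{A_+,I_{2,I}\}=\{A_-,I_{2,I}\}=-\kappa(A_+-A_-)$ is correct, and is in fact more detailed than the paper's terse ``direct computation on $S\h_6$''.

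The genuine gap is in the functional independence, precisely where you stopped. You describe a strategy --- evaluate the full $(2N-3)\times 2N$ Jacobian at one well-chosen point and exhibit a nonvanishing maximal minor --- but you never produce the point or the minor, and you yourself flag this as ``where the real work lies''; as written, this is a plan, not a proof. Moreover, the plan is needlessly hard: the rows coming from the quartic Casimirs $C^{[m]}$, $C_{[m]}$ are unwieldy, and finding a point at which all $2N-5$ of them \emph{and} the two extra rows are simultaneously, verifiably independent for arbitrary $N$ is a substantial computation. The paper avoids the big minor by two reductions. First, since $C^{[m]}$ (resp.\ $C_{[m]}$) for $m<N$ depends only on the first (resp.\ last) $m$ canonical pairs, the locality/staircase structure you already noted reduces the problem to showing that the three \emph{nonlocal} quantities $I_{2,I}$, $I_1^{(s)}$ and $C=C^{[N]}$ have a rank-$3$ Jacobian, as in \eqref{eq:RankJacV2I}. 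Second --- and this is the idea missing from your proposal --- all three are functions of the five generators $(A_+,A_-,B_+,B_-,K)$, so the chain rule factors their $3\times 2N$ Jacobian as in \eqref{eq:JacDecV2I} into a $3\times 5$ ``abstract'' Jacobian times the $5\times 2N$ Jacobian \eqref{eq:jacobian-partial-generators-in-realization} of the realization; the latter has full rank $5$ for every $N$, so the rank of the product equals the rank of the $3\times5$ factor, and one only needs a single explicit $3\times 3$ minor of \eqref{Jacobian-abstractV2}, which the paper computes and which is generically nonzero. If you add this factorization step and the explicit minor, your argument closes; without them, the central claim of the theorem remains unproven.
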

 
\begin{proof}
    From \Cref{thm:h6}, \Cref{prop:quadr-phys} and \Cref{rmk:I1s} we know that
    the system~\eqref{eq:dELV2I} admits the invariants listed
    in~\eqref{SI1-set}. Involutivity of these invariants follows from direct
    computation on $S\h_{6}$ for $I_{2,I}$ and $I_{1}^{(s)}$, and from the
    general theory, see~\Cref{thm:sumup}. So, we only have to prove their
    functional independence. 
 
 First of all, the left and right Casimir functions $\{  C^{[3]}, \dots C^{[N]},C_{[1]},
    \dots, C_{[N-1]}  \}  $  are functionally independent and local except
    $C^{[N]}$, see, for
instance~\cite{Ballesteros_EtAl2009.Super_integrability_coalgebra_symmetry_Formalism_applications}.  Note that in realization \eqref{eq:h6reaN} 
	\begin{equation}
		C^{[N]} = 
		   \vl^2 \vq^2 \vp^2  
        - \vp^2  (\vl \cdot \vq)^2  
        - \vq^2 (\vl \cdot \vp )^2 
        - \vl^2 \left( \vq \cdot \vp 
        \right)^2
        + 2 (\vl \cdot \vq) (\vl \cdot \vp) \left(  \vq \cdot \vp \right) 
		\mathop{=}^{\eqref{eq:Cash6} }
		 C. 
	\end{equation} 
	    Due to the locality, we need to prove that the nonlocal invariants, $I_{2,I}$, 
    $I_{1}^{(s)}$, and $ C$ are functionally independent, that is:
    \begin{align}
\label{eq:RankJacV2I}  
        \rank\left[ \frac{\partial (I_{2,I}, \, I_{1}^{(s)},  C )}{\partial ( \vec q, \vec p)    } \right]= 3.
    \end{align}
    
    Observe that the $I_{2,I}$ and $I_{1}^{(s)}$ have coalgebraic
    origin. So, we can consider them as functions of the generators of algebra $\vec J $
    composed with the $N$-dimensional symplectic realization. 
    That is, we can write the following decomposition of the Jacobi matrix: 
	\begin{equation}
        \label{eq:JacDecV2I}
        \frac{\partial (I_{2,I} , I_{1}^{(s)},  C  )} {\partial (\vec q, \vec p)} = 
        \frac{\partial ( I_{2,I} ,I_{1}^{(s)},  C )} 
        {\partial (A_+, A_-, B_+, B_-, K )} 
        . 
        \frac{\partial(A_+, A_-, B_+, B_-, K )}{\partial (\vec q, \vec p)}, 
    \end{equation}
      where:
    \begin{equation}\label{eq:jacobian-partial-generators-in-realization}
        \frac{\partial(A_+, A_-, B_+, B_-, K ) }{\partial (\vec q, \vec p  ) }
        = 
        \begin{pmatrix}
            \lambda_1 & \lambda_2 & \dots & \lambda_N & 0 & 0 &  \dots & 0 
            \\ 
            0 & 0& \dots & 0 & \lambda_1 & \lambda_2 & \dots &\lambda_N
            \\ 
            2q_1 & 2q_2& \dots & 2q_N & 0 & 0& \dots &0 
                            \\ 
                            0 &0& \dots & 0 & 2p_1 & 2p_2 & \dots & 2 p_N 
                            \\ 
                            p_1 & p_2 & \dots & p_N & 
                            q_1 & q_2 & \dots & q_N 
            \end{pmatrix}_{5 \times 2N},
     \end{equation}
     which clearly has the rank maximal and equal to $5$ for any $N$.
 
 The other factor $ {\partial ( I_{2,I} ,I_{1}^{(s)},   C )} /
        {\partial (A_+, A_-, B_+, B_-, K )}  $ in \eqref{eq:JacDecV2I} has the form     
 	\begin{equation}
 	\label{Jacobian-abstractV2} 
	   \begin{pmatrix}
			- \dfrac{(2  + \kappa ) }{M} A_+  & 
			-\dfrac{(2 + \kappa )  }{M}A_- 
			& 1 & 1 &  \kappa  
			\\
			1 & -1 & 0 & 0 & 0 
			\\ 
			A_- (2K+M) - 2 A_+ B_-
			& A_+ (2K+M) - 2 A_- B_+ 
			& M B_- - A_-^2  
			& M B_+  - A_+^2 
			& 2 A_+ A_- - M (2K + M)
	   \end{pmatrix}. 
 	\end{equation}
 To prove that the matrix \eqref{Jacobian-abstractV2} has a maximal rank $3$, one can observe that, for instance, the following determinant does not vanish: 	
 	\begin{equation}
 		\det \begin{pmatrix}
 			- \dfrac{(2  + \kappa )}{M} A_+  & 1 & 1
 			\\ 
 			1& 0 & 0 
 			\\ 
 			A_- (2K+M) - 2 A_+ B_- &   M B_- - A_-^2  & 
 			M B_+  - A_+^2  
 		\end{pmatrix} 
 		= A_+^2 - A_-^2 + M( B_+ + B_-). 
 	\end{equation}
 	
 	    Since in general for matrix products it holds
    that $\rank (AB)= \min ( \rank(A), \rank(B))$, if $B$ has maximal rank,
    we obtain \eqref{eq:RankJacV2I}, thereby proving the statement. 
\end{proof} 

Let us now discuss the continuum limit of the system~\eqref{eq:dELV2I},
Applying a scaling, similar to the one defined by formula~\eqref{eq:coeff_scalingV1}:
\begin{equation}
    \kappa =  \omega^2  h^2-2,
    \label{eq:coeff_scalingV2I}
\end{equation}
and expanding the functions $q_{k}(t)$ as in~\eqref{eq:qh-expansion} for 
$h \to 0^+ $ to the system \eqref{eq:dELV2I}, one arrives at the following 
differential equation:
\begin{equation}\label{eq:cont-EL-eqs-V2I}
	\ddot Q_k(t) + \omega^2 Q_k(t) 
	=  \omega^2 \frac{\lambda_k}{\vl^2} 
	\sum_{i=1}^N \lambda_i Q_i.
\end{equation}
Analogously, the continuum limit of the discrete
Lagrangian~\eqref{eq:lagrangianV2I} is up to total derivative: 
\begin{equation}
    L_{2,I}  = -h^{2} \mathcal{L}_{2,I} +\mathcal{O}( h^{3} ),
    \label{eq:L2Icl}
\end{equation}
where
\begin{equation}
    \label{eq:L2Icont}
	\mathcal{L}_{2,I} \coloneqq \frac{1}{2} \sum_{k=1}^N \left[
		 \dot Q_k^2(t) 
            \right]
		-   \omega^2 Q_k^2(t) 
                +\frac{ \omega^2 }{2 \vl^2} (\vl \cdot \vq)^2,
\end{equation}
is exactly the Lagrangian of the system~\eqref{eq:cont-EL-eqs-V2I}.

The invariants $I_{2,I} $ and $I_{1}^{(s)}$ in continuum limit take the form
\begin{subequations}
    \begin{align}
        I_{2,I} & \vl^2 
            + h^2 \left[ 2 \mathcal{H}_{2,I} - \omega^2 \frac{\vl^2}{2 }\right]
            + \mathcal{O}(h^3), 
        \\ 
        I_{1}^{(s)} &=  - h \left(\vl \cdot \vP  \right) 
            + \mathcal{O}(h^2). 
    \end{align}
\end{subequations}
where 
\begin{equation}
    \label{eq:H2Icont}
	\mathcal{H}_{2,I} \coloneqq \sum_{k=1}^N \frac{1}{2}
	\left[ P_k^2 (t) +  \omega^2 Q_k^2(t) \right],
        - \frac{\omega^2}{\vl^2} (\vl \cdot \vec{Q})^2,
\end{equation}
is the Hamiltonian associated to the Lagrangian~\eqref{eq:L2Icont}.

\begin{remark}
    We observe that the canonical equations of~\eqref{eq:H2Icont}
    are given by:
    \begin{subequations}
            \begin{align}
                    \dot P_k (t) & = \omega^2 \left[ Q_k(t) - \frac{(\vl \cdot \vec{Q}(t))}{\vl^2} \lambda_k  \right], 
                    \\ 
                    \dot Q_k(t) & = P_k(t), 
            \end{align}
    \end{subequations} 
    and their explicit solution is particularly simple:
    \begin{subequations}
        \begin{align}
            Q_k(t) &= Q_{k,0} \sin \left( \omega  \sqrt{ 1 - \frac{\lambda_k}{\vl^{2}} }t +\varphi_{k}\right),
            \label{eq:QkV2Isol}
            \\
            P_k(t) &= Q_{k,0}\omega  \sqrt{ 1 - \frac{\lambda_k}{\vl^{2}} } 
            \cos \left( \omega  \sqrt{ 1 - \frac{\lambda_k}{\vl^{2}} }t +\varphi_{k}\right) 
            \label{eq:PkV2Isol}
        \end{align}
        \label{eq:V2Isol}
    \end{subequations}
    where $Q_{k,0}$ and $\varphi_k$ are integration constants. 
    \label{rmk:solution}
\end{remark}

In this case, $(\vl\cdot \vec{P})$, $\mathcal{H}_{2,I} $, and $C$ are
functionally independent. Moreover, we have 
\begin{equation}
    \{ I_{1}^{(s)}, \mathcal{H}_{2,I} \}=0. 
\end{equation}
Therefore, we can choose the set 
\begin{equation}
    \left\{  (\vl\cdot\vP), \,  \mathcal{H}_{2,I}, \, C^{[3]},\dots,C^{[N]},
        C_{[3]},\dots,C_{[N-1]}
	\right\} 
\end{equation}
as a set of functionally independent invariants for the continuous system,
of which the first $N$ are in involution.

\begin{remark}
    We remark that the continuous Hamiltonian system given by
    $\mathcal{H}_{2,I}$ is in the family of systems with invariance with
    respect to the generator $A_{+}$ studied in
    \cite{Ballesteros_EtAl2009.Ndimensional_integrability_twophoton_coalgebra_symmetry},
    see also~\cite[\S 4.5.2]{blascosanzIntegrabilidadSistemasNo2009}. Indeed,
    $I_{1}^{(s)} = -hA_{+}+\mathcal{O}(h^{2})$ at first order is a generator of
    the coalgebra. Note that in the discrete setting, no such example of
    invariance with respect to a single generator arose.
    \label{rmk:blascoV2I}
\end{remark}

\subsection{The potential $V_{2,II} $}

Let us now consider another solution, obtained in \Cref{prop:quadr-phys},
namely the  potential $V_{2,II}$. Given the
realization~\eqref{eq:quadratic-qp-potential}, it can be rewritten as 
\begin{equation}\label{eq:V2II-rescaled}
	V_{2,II} = - \frac{\eta}{2} (\vl \cdot \vq)^2 
	- \zeta (\vl \cdot \vq) 
	- \frac{\kappa}{2} \vq^2. 
\end{equation}
Therefore, it produces the following Lagrangian: 
\begin{equation}
	\label{eq:lagrangianV2II}
    L_{2,II} = \sum_{k=1}^{N}\left[ q_{k}(t+h)q_{k}(t)  
    + \zeta  \lambda_k q_k + 
    \frac{\kappa}{2} q_k^2
    \right] 
    + \frac{\eta }{2} (\vl \cdot \vq)^2. 
\end{equation} 
The associated system of discrete Euler--Lagrange equations is: 
\begin{equation}\label{eq:dELV2II} 
	q_k(t+h) + q_k(t-h) = 
	- \left[ \eta  (\vl \cdot \vq) + \zeta 
	\right] \lambda_k 
	- \kappa q_k, 
\end{equation}
which can also be rewritten in canonical form
\begin{subequations}\label{eq:can-eqs-for-V2II} 
	\begin{align}
		q_k(t+h) + p_k(t) &= 
		-  \left[ \eta  (\vl \cdot \vq) + \zeta 
	\right] \lambda_k 
	- \kappa q_k,
	\\ 
	p_k(t+h) & = q_k(t).  
\end{align}  
\end{subequations}

\begin{remark}
    We observe that the system generated by the discrete
    Lagrangian~\eqref{eq:lagrangianV2II} is a slight generalization of the
    system given in~\cite[Example
    2]{Gubbiotti_EtAl2023.Coalgebra_symmetry_discrete_systems}, which is
    obtained for $\zeta=-1$, $\kappa=-\alpha_{0}$, and $\eta=-\alpha_{1}$.
    \label{rmk:oldpotential}
\end{remark}

From~\Cref{prop:quadr-phys} we have two invariants, which written
in the realization~\eqref{eq:h6reaN} can be written as:
\begin{subequations}
    \begin{align}
        I_{2,IIa} &= -  \kappa (\vl \cdot \vq)  (\vl \cdot \vp)  
		 +   \kappa \left((\vq \cdot \vp) - \frac{\vl^2}{2} \right)  \vl^2
		 -  (
		  (\vl \cdot \vq)^{2} 
		 +(\vl \cdot \vp)^{2} )  
		 + ( \vq^2    
		 + \vp^2    ) \vl^2,  
        \label{eq:I2IIarea}
        \\
        I_{2,IIb} &= (\eta \vl^2 + \kappa) (\vl \cdot \vq) (\vl \cdot \vp) 
		 + \zeta \vl^2 \left((\vl \cdot \vp)  + (\vl \cdot \vq) \right)
		 +(\vl \cdot \vp)^2 + (\vl \cdot \vq)^2. 
        \label{eq:I2IIbrea}
    \end{align}
    \label{eq:IV2IIrea}
\end{subequations}

So, let us prove that this system~\eqref{eq:dELV2II} is superintegrable: 

\begin{theorem} \label{th:SI1}
	The system~\eqref{eq:dELV2II}, generated by the Lagrangian~\eqref{eq:lagrangianV2II}, is superintegrable because it admits a set of $2N-3$ functionally independent invariants
		   \begin{equation}\label{QI-set} 
       	 \mathcal{S}_3 = \left\{ I_{2,IIa}, I_{2,IIb}, C^{[3]}, \dots, C^{[N]}, C_{[3]}, \dots, C_{[N-1]} 
       	 \right\}, 
       \end{equation}  
	 of which the first $N$ are in involution.   
  \end{theorem}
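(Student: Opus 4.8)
The plan is to follow verbatim the strategy of the proof of \Cref{th:SI}. By \Cref{thm:h6} and \Cref{prop:quadr-phys} the system~\eqref{eq:dELV2II} already admits every invariant collected in $\mathcal{S}_3$, so only the involutivity of the first $N$ invariants and the functional independence of the whole set remain to be shown. Involutivity of $C^{[3]},\dots,C^{[N]}$ among themselves, and the vanishing of their brackets with any function of the total generators of $\h_6$, are guaranteed by the general construction recalled in \Cref{thm:sumup}; what is genuinely left is to verify that $\{I_{2,IIa},I_{2,IIb}\}=0$. Since all these quantities are of coalgebraic origin, this amounts to a direct computation on $S\h_{6}$ using the brackets~\eqref{eq:h6comm} and the explicit abstract expressions~\eqref{eq:I2IIa}--\eqref{eq:I2IIb}.

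For the functional independence I would first exploit the locality of the Casimir functions: as in \Cref{rmk-2N-3}, the invariants $C^{[3]},\dots,C^{[N-1]},C_{[3]},\dots,C_{[N-1]}$ are local and functionally independent, the only nonlocal one being $C^{[N]}=C$. Hence it is enough to establish that the three nonlocal invariants $I_{2,IIa}$, $I_{2,IIb}$ and $C$ are functionally independent, i.e.\ that $\rank\left[\partial(I_{2,IIa},I_{2,IIb},C)/\partial(\vec q,\vec p)\right]=3$. Exactly as in~\eqref{eq:JacDecV2I}, I would factor this Jacobian through the generators of $\h_{6}$, writing it as the product of the abstract $3\times5$ matrix $\partial(I_{2,IIa},I_{2,IIb},C)/\partial(A_+,A_-,B_+,B_-,K)$ with the matrix~\eqref{eq:jacobian-partial-generators-in-realization}, the latter having maximal rank $5$ for every $N$. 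Since $\rank(AB)=\min(\rank A,\rank B)$ whenever $B$ has maximal rank, it then suffices to prove that the abstract Jacobian has rank $3$.

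The main obstacle is thus to exhibit a nonvanishing $3\times3$ minor of this abstract Jacobian. The decisive simplification is that $I_{2,IIb}$ depends only on $A_{+}$ and $A_{-}$ (and on $M$), so its row has vanishing entries in the $B_{+},B_{-},K$ columns. Selecting the columns $A_{+},B_{+},B_{-}$ and expanding along the $I_{2,IIb}$ row reduces the minor to the product of its single surviving entry with the $2\times2$ determinant formed by the $B_{+},B_{-}$ entries of the $I_{2,IIa}$ and $C$ rows, giving $-\big[(\eta M+\kappa)A_{-}+\zeta M+2A_{+}\big]\,M\big[M(B_{+}-B_{-})-(A_{+}^{2}-A_{-}^{2})\big]$. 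As this expression does not vanish identically, the abstract Jacobian has rank $3$, which establishes the functional independence and hence the superintegrability claim.
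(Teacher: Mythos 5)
Your proposal is correct and takes essentially the same approach as the paper's own proof: existence and involutivity via \Cref{thm:h6}, \Cref{prop:quadr-phys} and \Cref{thm:sumup}, reduction of functional independence to the rank of the Jacobian of the nonlocal invariants $(I_{2,IIa},I_{2,IIb},C)$, and factorization of that Jacobian through the generators of $\h_{6}$. Even your choice of minor (columns $A_{+},B_{+},B_{-}$, expanded along the $I_{2,IIb}$ row, whose $B_{\pm},K$ entries vanish) coincides with the paper's, producing the same nonvanishing determinant $-M\bigl[(\eta M+\kappa)A_{-}+\zeta M+2A_{+}\bigr]\bigl(A_{-}^{2}-A_{+}^{2}-MB_{-}+MB_{+}\bigr)$.
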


\begin{proof}
    Given \Cref{thm:h6} and \Cref{prop:quadr-phys}, we have that the
    system~\eqref{eq:dELV2II} admits the set of invariants \eqref{QI-set}.
    Involutivity of these invariants follows from direct computation on
    $S\h_{6}$ for $I_{2,IIa}$ and $I_{2,IIb}$, and from the general theory,
    see~\Cref{thm:sumup}. So, we only have to prove their functional
    independence. 

As in the case of \Cref{th:SI}, it is sufficient to prove that 
\begin{equation}\label{eq:rank-Jac-for-th-SI1}
		\rank \left[	\frac{\partial ( I_{2,IIa}, I_{2,IIb} , C )}{\partial ( \vec q, \vec p)    }  
		 \right] =3. 
\end{equation}  
We use the similar \eqref{eq:JacDecV2I} to decomposition: 
\begin{equation}
        \label{eq:JacDecV2II}
        \frac{\partial  ( I_{2,IIa}, I_{2,IIb} , C ) } {\partial (\vec q, \vec p)} = 
        \frac{\partial ( I_{2,IIa}, I_{2,IIb} , C ) } 
        {\partial (A_+, A_-, B_+, B_-, K )} 
        . 
        \frac{\partial(A_+, A_-, B_+, B_-, K )}{\partial (\vec q, \vec p)}, 
    \end{equation}
Performing straightforward computations, we arrive at the following expression for $    \dfrac{\partial ( I_{2,IIa}, I_{2,IIb} , C ) } 
        {\partial (A_+, A_-, B_+, B_-, K )}  $:
\begin{equation}
	\begin{pmatrix}
		- \kappa A_- - 2 A_+  
		 & - \kappa A_+ - 2 A_- & M & M & \kappa M, 
		 \\ 
		 (\eta M +\kappa) A_- +2A_+ + \zeta M 
		 & 
		 2 A_- + (\eta M + \kappa ) A_+ + \zeta M 
		 & 0 & 0 & 0 
		 \\ 
		 2 (A_- K - A_+ B_- ) + A_- M
		 & 2 (A_+ K + A_- B_+ ) + A_+ M 
		 & B_- M - A_-^2 
		 & B_+ M - A_+^2 
		 & 2 (A_+ A_- - KM) - M^2
	\end{pmatrix}
\end{equation}
Since, for example, 
\begin{multline}
	\det 	\begin{pmatrix}
		- \kappa A_- - 2 A_+  
		  & M  & M  \\ 
		 (\eta M +\kappa) A_- +2A_+ + \zeta M 
		 & 0 & 0 
		 \\ 
		 2 (A_- K - A_+ B_- ) + A_- M
		 & B_- M - A_-^2 
		 & B_+ M - A_+^2 
	\end{pmatrix} 
	= - M \left( \eta M A_- + \kappa A_- + \zeta M + 2 A_+ \right) \times  
	\\ \times 
	\left(A_-^2 - A_+^2 - M B_- + MB_+  \right) 
	\ne 0,
\end{multline}
we can conclude that its rank is $3$. Then, applying again the formula for the rank of the matrix product to \eqref{eq:JacDecV2II}, we obtain \eqref{eq:rank-Jac-for-th-SI1} and prove the theorem. 
\end{proof}

Let us now discuss the continuum limit of the system, generated by the Lagrangian \eqref{eq:lagrangianV2II}. We use the following scaling of the coefficients: \begin{equation}
	 \kappa = \omega^2 h^2 - 2, 
	 \qquad
	 \zeta = \gamma h^2,
	 \qquad 
	 \eta = \delta h^2. 
\end{equation}
Therefore, the continuum version of the system \eqref{eq:dELV2II} is given by: 
\begin{equation}\label{eq:cont-EL-eqs-V2II}
	\ddot Q_k (t) + \omega^2 Q_k(t) 
	+ \lambda_k  \left[ 
	 \gamma  
	+ \delta 
	\left( \vl \cdot \vec{Q} (t)  \right)
	\right] 
	= 0, 
	\qquad 
	k=1,\dots,N. 
\end{equation}
The Lagrangian, of the system \eqref{eq:cont-EL-eqs-V2II},
\begin{equation}
	\mathcal{L}_{2, II} \coloneqq 
	\sum_{k=1}^N \left[ 
	 \frac{1}{2} \dot Q_k^2 (t)
	 - \frac{\omega^2}{2} Q_k^2 (t)
	 - \gamma \lambda_k Q_k (t) 
	\right] 
	- \frac{\delta}{2} \left( \vl \cdot \vQ (t) \right)^2,   
        \label{eq:L2IIcont}
\end{equation}
can be retrieved by the $h^{2}$ coefficient of the Taylor
expansion of the discrete Lagrangian $L_{2,II}$~\eqref{eq:lagrangianV2II}.

In turn, the continuum limits of the invariants $I_{2,IIa}$ and $I_{2,IIb}$
are 
\begin{subequations}
    \label{eq:I2expansion}
	\begin{align}
		I_{2,IIa} & =  \vl^4 + 
                h^2 I_{2,IIa}^{(2)} + \mathcal{O} (h^3), 
		\\ 
                I_{2,IIb} &= h^2 I_{2,IIb}^{(2)} + \mathcal{O} (h^3),
	\end{align}
\end{subequations}
where:
\begin{subequations}
	\begin{align}
            I_{2,IIa}^{(2)} &\coloneqq   
		\vl^2 \left( \vP^2(t)  + \omega^2 \vQ^2(t)  -
		\frac{\omega^2 \vl^2}{2}
		\right)
		- \omega^2 \left(\vl \cdot \vQ(t)\right)^2
		- \left( \vl \cdot \vP(t) \right)^2, 
		\\ 
                I_{2,IIb}^{(2)} &\coloneqq  
			\left(\vl \cdot \vP(t)\right)^2 
			+ \omega^2 \left(  \vl \cdot \vQ (t) \right)^2  
			+ \vl^2 \left[
			\delta \left( \vl \cdot \vQ (t) \right)^2 
			+ 2 \gamma \left( \vl \cdot \vQ (t) \right)
			\right] .
	\end{align}
\end{subequations}

Note that the Hamiltonian corresponding to the Lagrangian~\eqref{eq:L2IIcont}
\begin{equation}
	\mathcal{H}_{2,II} \coloneqq \sum_{k=1}^N \left[ 
		\frac{P_k^2(t)}{2} + \frac{\omega^2 Q_k^2(t)}{2} 
		+ \gamma \lambda_k Q_k 
	\right] 
	+ \frac{\delta}{2} \left( \vl \cdot \vQ (t) \right)^2. 
\end{equation}
is not given in the expansion~\eqref{eq:I2expansion}.
However, those invariants are not functionally independent from the Hamiltonian
because 
\begin{equation}
    {I}_{2,IIa}^{(2)} +  {I}_{2,IIb}^{(2)} = 
	\vl^2 \left[ \vP^2 (t) + \omega^2 \vQ^2 
	+ 2 \gamma \left(\vl \cdot \vQ  \right)
	+ \delta \left( \vl \cdot \vQ  \right)^2 
	- \frac{\omega^2 \vl^2 }{2}
	\right]  
        = \vl^2 \left[2 \mathcal{H}_{2,II} - \frac{\omega^2 \vl^2 }{2} \right], 
\end{equation}
that is $ \mathcal{H}_{2,II}$ is a linear combination of ${I}_{2,IIa}^{(2)}$
and ${I}_{2,IIb}^{(2)}$. Moreover, we have
\begin{equation}
    \{ {I}_{2,IIa}^{(2)} \, ,  \, \mathcal{I}_{2,IIb}^{(2)}\}=
    \{{I}_{2,IIa}^{(2)} \, , \, \mathcal{H}_{2,II} \}=
    \{ {I}_{2,IIb}^{(2)}  \, , \, \mathcal{H}_{2,II}  \}=0, 
\end{equation}
Therefore, noting that the ${I}_{2,IIa}$ and  $\mathcal{H}_{2,II}$ are also
functionally independent of $C$, we can choose the set 
\begin{equation}
    \left\{  {I}_{2,IIa}, \,  \mathcal{H}_{2,II}, \, C^{[3]},\dots,C^{[N]},
        C_{[3]},\dots,C_{[N-1]}
	\right\} 
\end{equation}
as a set of functionally independent invariants for the continuous system,
of which the first $N$ are commuting.

\subsection{Singular potentials $ V_{1}^{(s)} $, $V_{2,I}^{(s)}$ and $V_{2,II}^{(s)} $}

Let us consider the obtained singular potentials $V_{1}^{(s)}$~\eqref{eq:V1s},
$V_{2,I}^{(s)}$~\eqref{eq:V2Is}, and $V_{2,II}^{(s)}$~\eqref{eq:V2IIs}. We
recall that the system~\eqref{eq:sysfin} with those potentials become
quasi-integrable.

Let us write the Lagrangians associated to the three potentials:
\begin{subequations}
    \begin{align}
	L_{1}^{(s)} &\coloneqq 
        \sum_{k=1}^{N} \left[q_{k}(t+h)q_{k}(t)  
        - q_{k}^{2} (t) \right]
        - F \left( \vl^2, \ \sum_{1\leq i<j\leq N}\left( \lambda_{i}q_{j}(t)-\lambda_{j}q_{i}(t) \right)^{2} \right),  
    \label{eq:V1S-lagrangian}
        \\
        \label{eq:V2IS-lagrangian}
	L_{2,I}^{(s)} &\coloneqq 
        \sum_{k=1}^{N} \left[q_{k}(t+h)q_{k}(t)  
        + \frac{\alpha}{2} q_{k}^{2}(t) \right]
	- G \left(\vl^2, \ \sum_{j=1}^N \lambda_j q_j (t) \right),
        \\
	\label{eq:V2IIS-lagrangian}
	L_{2,II}^{(s)} &\coloneqq \sum_{k=1}^{N} \left[q_{k}(t+h)q_{k}(t)  
        + \frac{\alpha}{2} q_{k}^{2}(t) +\frac{\alpha_{+}}{\vl^{2}}\lambda_{k}q_{k}(t) \right]
	- F \left( \vl^2, \ \sum_{1\leq i<j\leq N}\left( \lambda_{i}q_{j}(t)-\lambda_{j}q_{i}(t) \right)^{2} \right).
    \end{align}
\end{subequations}
We omit the explicit expression of the associated Euler--Lagrange equations,
and equation in canonical form for the sake of brevity.

\begin{remark}
    The Lagrangian $L_{2,I}^{(s)}$ in formula~\eqref{eq:V2IS-lagrangian} is a
    generalization of the system presented in~\cite[Example
    3]{Gubbiotti_EtAl2023.Coalgebra_symmetry_discrete_systems} which is
    retrieved when the function $G(X,Y)$ is a third degree polynomial in $Y$.
    \label{rmk:partcase}
\end{remark}

In the realization~\eqref{eq:h6reaN}, the canonical systems associated to these
Lagrangian $L_{1}^{(s)}$~\eqref{eq:V1S-lagrangian} admits the invariant
$I_{1}^{(s)}$ presented in formula~\eqref{eq:I1sreal}, while the canonical
systems associated to the Lagrangians
$L_{2,I}^{(s)}$~\eqref{eq:V2IS-lagrangian} and
$L_{2,II}^{(s)}$~\eqref{eq:V2IIS-lagrangian} have the following invariants:
\begin{subequations}  \label{eq:singular-invariants-repeated}
\begin{align}
	  	 I_{2,I}^{(s)}&= 
	  (\vl \cdot \vp) (\vl \cdot \vq) 
	+\frac{1}{\alpha} 
	\left(
	(\vl \cdot \vq )^{2} 
	+ (\vl \cdot \vp)^{2}
	-   \vq^2 \vl^2 
	- \vp^2 \vl^2 \right) 
	-   \left( (\vq \cdot \vp) - \frac{\vl^2}{2} \right) \vl^2, 
	\\ 
		I^{(s)}_{2,II} & = (\vl \cdot \vp)^2  
                	 + (\vl \cdot \vq)^2 
                         +\alpha (\vl \cdot \vp)  (\vl \cdot \vq) 
                         + \alpha_{+} \left( (\vl \cdot \vp) +  (\vl \cdot \vq) \right).  
\end{align}
\end{subequations}
respectively.

Then, the following statement holds:

\begin{theorem}\label{th:V1s}
    The canonical systems associated to the
    Lagrangians~\eqref{eq:V1S-lagrangian}, \eqref{eq:V2IS-lagrangian}, and
    \eqref{eq:V2IIS-lagrangian} are quasi-integrable because they possess the
    following sets of $2N-4$ invariants respectively
    \begin{subequations}
            \begin{align}
                    \label{eq:V1S-set-of-invariants}
                    \mathcal{S}_1^{(s)} & \coloneqq 
                    \left\{ 
                     I_{1}^{(s)}, C^{[3]}, \dots, C^{[N]}, C_{[3]}, \dots, C_{[N-1]} 
                    \right\},
                    \\ 
                    \mathcal{S}_{2,I}^{(s)}
                    &\coloneqq 
                    \left\{ 
                     I_{2,I}^{(s)}, C^{[3]}, \dots, C^{[N]}, C_{[3]}, \dots, C_{[N-1]} 
                    \right\}, 
                            \\ 
                    \mathcal{S}_{2,II}^{(s)}
                    &\coloneqq 
                    \left\{ 
                     I_{2,II}^{(s)}, C^{[3]}, \dots, C^{[N]}, C_{[3]}, \dots, C_{[N-1]} 
                    \right\}
            \end{align}
    \end{subequations}
    of which the first $N-1$ are in involution.  
\end{theorem}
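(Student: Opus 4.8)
The plan is to treat the three Lagrangians in parallel, exactly as in the proofs of \Cref{th:SI} and \Cref{th:SI1}, since the only difference between the three sets is the single system-specific invariant attached to each, namely $I^{(s)}_{1}$, $I^{(s)}_{2,I}$, or $I^{(s)}_{2,II}$. First I would check that every listed quantity is a genuine invariant: the universal functions $C^{[3]},\dots,C^{[N]},C_{[3]},\dots,C_{[N-1]}$ are invariants by the coalgebra construction of \Cref{thm:sumup}, while $I^{(s)}_{1}$, $I^{(s)}_{2,I}$ and $I^{(s)}_{2,II}$ were already produced as invariants of the respective systems in \Cref{prop:linear} and \Cref{prop:quadr-phys}. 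A direct count gives $(N-2)+(N-3)+1 = 2N-4$ invariants in each set, as claimed.

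Second, I would prove that the first $N-1$ functions $\{I^{(s)}, C^{[3]},\dots,C^{[N]}\}$ are in involution, which by \Cref{def:quasi-integrable} is precisely the quasi-integrability requirement. The mutual vanishing $\{C^{[i]},C^{[j]}\}=0$ of the left Casimirs is part of the general coalgebra theory behind \Cref{thm:sumup}. For the brackets $\{I^{(s)}, C^{[m]}\}$ I would invoke the fundamental property of the construction that any function of the total ($N$-th coproduct) generators $A_{\pm},B_{\pm},K,M$ Poisson-commutes with every left Casimir $C^{[m]}$; since each of $I^{(s)}_{1}$, $I^{(s)}_{2,I}$, $I^{(s)}_{2,II}$ is, by \eqref{eq:I1s}, \eqref{eq:I2Is}, \eqref{eq:I2IIs}, exactly such a polynomial, these brackets vanish, a fact one may equally well confirm by a short direct computation on $S\h_{6}$. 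Note that the right Casimirs cannot be adjoined to this involutive family, since they do not commute with the left ones, which is why only $N-1$ — and not $N$ — commuting invariants are available, in contrast with the superintegrable cases of \Cref{th:SI} and \Cref{th:SI1}.

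Third comes the functional independence of the full set, i.e. proving that the Jacobian with respect to $(\vec q,\vec p)$ has rank $2N-4$. As in \Cref{th:SI} I would use that every Casimir in the set is \emph{local} except $C^{[N]}=C$ (see \Cref{rmk-2N-3}): the $2N-6$ functions $C^{[3]},\dots,C^{[N-1]},C_{[3]},\dots,C_{[N-1]}$ are functionally independent by the nested structure of their variable dependence. It then remains to show that the two nonlocal invariants $I^{(s)}$ and $C$ supply two further independent directions, which reduces to establishing $\rank \left[ \partial(I^{(s)},C)/\partial(\vec q,\vec p) \right] = 2$. Mirroring \eqref{eq:JacDecV2I}, I would factor this Jacobian through the generators and use that $\partial(A_+,A_-,B_+,B_-,K)/\partial(\vec q,\vec p)$ has full rank $5$ by \eqref{eq:jacobian-partial-generators-in-realization}; it then suffices to exhibit a nonvanishing $2\times2$ minor of the abstract Jacobian $\partial(I^{(s)},C)/\partial(A_+,A_-,B_+,B_-,K)$ in each case. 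For $L^{(s)}_{1}$, with $I^{(s)}_{1}=A_+-A_-$, the minor in the columns $A_+,B_+$ equals $\partial C/\partial B_+ = MB_- - A_-^2 = \vl^2\vq^2-(\vl\cdot\vq)^2$, which is generically nonzero; the analogous minors for $I^{(s)}_{2,I}$ and $I^{(s)}_{2,II}$, read off from \eqref{eq:I2Is} and \eqref{eq:I2IIs}, dispose of the remaining two cases.

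The step I expect to be the main obstacle is this final one, and more specifically the passage from ``$(I^{(s)},C)$ has rank $2$'' to ``the whole set has rank $2N-4$'': one must argue that the two nonlocal directions are transverse to the span of the gradients of the local Casimirs, so that no cancellation between the local and nonlocal blocks can lower the rank. I would justify this exactly as in \Cref{th:SI}, exploiting the block/locality structure of the variable dependence; the individual minor computations for the quadratic invariants $I^{(s)}_{2,I}$ and $I^{(s)}_{2,II}$ are routine but must be carried out separately.
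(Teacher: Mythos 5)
Your proposal is correct and follows essentially the same route as the paper's proof: invariance of the listed functions from the earlier propositions, involutivity of the first $N-1$ from the coalgebra theory behind \Cref{thm:sumup}, and functional independence by factoring the Jacobian through the generators $(A_+,A_-,B_+,B_-,K)$ and exhibiting the same nonvanishing minor $MB_--A_-^2$ for the case $I_1^{(s)}=A_+-A_-$, with the remaining two cases declared analogous. The paper likewise proves only the first case in detail and handles the local/nonlocal splitting of the Casimirs exactly as you do, by the locality argument inherited from \Cref{th:SI} and \Cref{th:SI1}.
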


\begin{proof}
    We will prove the statement for the first system, namely the
    one associated to the Lagrangian \eqref{eq:V1S-lagrangian},
    since the proof for the other two cases will be analogous.

    From the  \Cref{thm:h6} and the \Cref{prop:linear}, we know that the system
    associated with the Lagrangian \eqref{eq:V1S-lagrangian}, admits the set of
    invariants~\eqref{eq:V1S-set-of-invariants}. As discussed previously,
    involutivity follows from \Cref{thm:sumup}. So, as before we are left to
    prove the functional independence of the invariants.

    We have the following explicit form of the Jacobian 
	$	{\partial (I_1^{(s)}, C) }/ {\partial  (A_+, A_-, B_+, B_-, K )} $ : 
	\begin{equation}
		\begin{pmatrix}
			1 & - 1 & 0 & 0 & 0 
			\\ 
			A_-M + 2 (A_- K - A_+ B_-) 
			& 
			A_+ M + 2 (A_+ K - A_- B_+ ) 
			& 
			B_- M - A_-2 
			& 
			B_+ M - A_+^2 
			& 
			2( A_+ A_- - KM) - M^2 
		\end{pmatrix}
	\end{equation}
	Since	
	\begin{equation}
	\det 
		\begin{pmatrix}
			1  & 0 
			\\ 
			A_-M + 2 (A_- K - A_+ B_-) 
			& B_- M - A_-^2 
		\end{pmatrix}
		= B_- M - A_-^2 \ne 0,
	\end{equation}
	we can claim that 
\begin{equation}
	\rank \left[ 
	\frac{\partial (I_1^{(s)}, C) } {\partial  (A_+, A_-, B_+, B_-, K )} 
	\right]  = 2, 
\end{equation}
    and, reasoning as in the proofs of the \Cref{th:SI} and \Cref{th:SI1}, we
    obtain the statement. 
\end{proof}

Let us now conclude this section by discussing the relationship between these three QI
systems with some classes of continuous Hamiltonians considered by Blasco in his
Ph.D.\ thesis~\cite{blascosanzIntegrabilidadSistemasNo2009}.  Let us start from
$L_{1}^{(s)}$~\eqref{eq:V1S-lagrangian}.  Under the assumption that the
function $F(X,Y)$ scales as $h^{2}$, \emph{i.e.} $F(X,Y)=h^{2} f(X,Y)$ we
obtain, up to total derivatives, the following continuous Lagrangian
\begin{equation}
    \mathcal{L}_{1}^{(s)} \coloneqq 
    \frac{1}{2}\sum_{k=1}^{N} \dot{Q}_{k}^{2}(t)
    +f \left( \vl^2, \ \sum_{1\leq i<j\leq N}\left( \lambda_{i}Q_{j}(t)-\lambda_{j}Q_{i}(t) \right)^{2} \right),  
        \label{eq:L1Iscont}
\end{equation}
as the (dominant) $h^{2}$ coefficient of the Taylor expansion of the discrete
Lagrangian $L_{1}^{(s)}$~\eqref{eq:V1S-lagrangian}. With an analogous
reasoning, and the scalings $\alpha=h^{2}\omega-2$, and $G(X,Y)=h^{2}g(X,Y)$
we obtain up to total derivatives, the following continuous Lagrangian
\begin{equation}
    \mathcal{L}_{2,I}^{(s)} \coloneqq 
    \frac{1}{2}\sum_{k=1}^{N} \left[\dot{Q}_{k}^{2}(t) -\omega^{2}Q_{k}^{2}(t)\right]
    +g \left( \vl^2, \ \sum_{j=1}^{N}\lambda_{j}Q_{j}(t) \right),  
        \label{eq:L2Iscont}
\end{equation}
as the (dominant) $h^{2}$ coefficient of the Taylor expansion of the discrete
Lagrangian $L_{2,I}^{(s)}$~\eqref{eq:V2IS-lagrangian}.  Finally, with the
scalings $\alpha=h^{2}\omega-2$, $\alpha_{+}=h^{2}\gamma$, and
$F(X,Y)=h^{2}f(X,Y)$ we obtain up to total derivatives, the following
continuous Lagrangian
\begin{equation}
    \mathcal{L}_{2,II}^{(s)} \coloneqq 
    \frac{1}{2}\sum_{k=1}^{N} \left[\dot{Q}_{k}^{2}(t) -\omega^{2}Q_{k}^{2}(t)
    -\frac{\gamma}{\vl^{2}} \lambda_{k}Q_{k}(t)\right]
    +f \left( \vl^2, \ \sum_{1\leq i<j\leq N}\left( \lambda_{i}Q_{j}(t)-\lambda_{j}Q_{i}(t) \right)^{2} \right),  
        \label{eq:L2IIscont}
\end{equation}
as the (dominant) $h^{2}$ coefficient of the Taylor expansion of the discrete
Lagrangian $L_{2,I}^{(s)}$~\eqref{eq:V2IS-lagrangian}.

Now, it is evident that all these systems are particular cases of the Lagrangian
form of the Hamiltonian system 
\begin{equation}
     \mathcal{H}^{(N)} = \frac{1}{2} \sum_{i=1}^N P_i^2 
     + \delta_1 \sum_{i=1}^N Q_i^2 
     + \mathcal{F} \left(\vl^{2} \sum_{1 \leq i <j }^N 
     \left(\lambda_j Q_i - \lambda_i Q_j \right)^2
     \right)
     + \mathcal{G} \left(\vl^{2}, \sum_{i=1}^N \lambda_i Q_i  \right), 
    \label{eq:blasconew}
\end{equation}
for different values of the parameter $\delta_{1}$, and of the two arbitrary
functions $\mathcal{F}$ and $\mathcal{G}$, see \cref{tab:blasco}.
The systems~\eqref{eq:blasconew} were
considered by Blasco in~\cite[Chap. 5]{blascosanzIntegrabilidadSistemasNo2009},
where it was proved that they are Liouville integrable, and superintegrable. 
So, the continuous systems associated to the Lagrangians~\eqref{eq:L1Iscont},~\eqref{eq:L2Iscont}, and~\eqref{eq:L2IIscont} are Liouville integrable, since they admit one
additional invariant, \emph{i.e.} the Hamiltonian itself.

\begin{table}
    \centering
    \begin{equation*}
        \begin{array}{cccc}
            \toprule
            \text{Lagrangian} & \delta_{1} & \mathcal{F} & \mathcal{G}
            \\
            \midrule
            \mathcal{L}_{1}^{(s)} & 0 & \text{arbitrary} & 0
            \\ 
            \mathcal{L}_{2,I}^{(s)} & \omega^{2}/2 & 0 & \text{arbitrary}
            \\
            \mathcal{L}_{2,II}^{(s)} & \omega^{2}/2 & \text{arbitrary} & \gamma Y/X
            \\
            \bottomrule
        \end{array}
    \end{equation*}
    \caption{Identification of the continuum limits of the singular Lagrangians
    with the Hamiltonian systems~\eqref{eq:blasconew}.}
    \label{tab:blasco}
\end{table}

\section{Conclusions and outlook}
\label{sec:conclusions}

\begin{table}[h!] 
\renewcommand{\arraystretch}{3}  
    \centering
    \begin{tabular}{ccp{2.5cm}p{3.7cm}}
    	 \toprule
         \textbf{Invariant}  & \textbf{Potential} $V(A_-, B_-, M) $ & \textbf{Type} & \textbf{Notes}
        \\
        \midrule
       \makecell{linear \\ non-singular }  & $V_{1} = - \alpha_+ A_- - \dfrac{\varkappa}{2} B_- $ 
       & MS  & IHO up to linear transformations, see  \cref{sect:V1},~\cite{Gubbiotti_EtAl2023.Coalgebra_symmetry_discrete_systems,GubLat_sl2} 
         \\
         \midrule
         \makecell{linear \\ singular }  & $ V_1^{(s)}= B_{-} + F \left(M, \ M B_{-}- A_-^2 \right) $  & QI &  \Cref{th:V1s} 
         \\  
         \midrule
          \makecell{quadratic \\ in $(\vq, \vp) $ \\ non-singular  }  & $ V_{2,I}  =
		\dfrac{1 }{2}\left[ 
			 \dfrac{\left(\kappa  +2  \right)}{ M} A_-^2 
			- \kappa B_-
			\right] $ & Superintegrable  & Theorem~\ref{th:SI} 
		\\ 
                \midrule
                   \makecell{quadratic \\ in $(\vq, \vp) $ \\ non-singular  }  & $ 	V_{2,II}   =-  \dfrac{\eta}{2} A_-^2 - \zeta A_- 
- \dfrac{\kappa}{2}  B_-  $ 
& Superintegrable  & \Cref{th:SI1},~\cite{Gubbiotti_EtAl2023.Coalgebra_symmetry_discrete_systems}
		\\ 
                \midrule
                \makecell{quadratic \\ in $(\vq, \vp) $ \\ singular  }   & $ V_{2,I}^{(s)} =  - \dfrac{\alpha}{2 } B_- + G (M, A_-)  $
		& QI &   \Cref{th:V1s}
        \\
        \midrule
          \makecell{quadratic \\ in $(\vq, \vp) $ \\ singular  } & $ 	V_{2,II}^{(s)} = 
          - \dfrac{\alpha }{2} B_-
        	- \dfrac{\alpha_{+}}{  M} A_- 
                + F \left(M, M B_{-} - A_-^2 \right) $ & QI &   \Cref{th:V1s} 
                \\
                \bottomrule
    \end{tabular}
    \caption{Summary of all systems obtained}
    \label{tab:resuming}
\end{table}     

In this paper, we gave necessary and sufficient conditions for a system in
quasi-standard form to admit coalgebra symmetry with respect to the
Lie--Poisson algebra $\h_{6}$. This extends a previous work~\cite{GubLat_sl2},
where the Lie--Poisson algebra $\mathfrak{sl}_{2}(\R)$, which is a subalgebra
of $\h_{6}$, was considered. In particular, we discovered that differently from
the case of $\mathfrak{sl}_{2}(\R)$ systems in quasi-standard form admitting
such property reduces to a system in standard form, see~\Cref{thm:h6}. From the
technical point of view, the proof of~\Cref{thm:h6} is much more complicated
and long with respect to its $\mathfrak{sl}_{2}(\R)$ analog, and required the
introduction of different concepts. This confirms once again the fact that the
discrete-time setting is inherently more complicated and ``rigid'' than the
continuous-time one. So, this finding is an important step forward on the way
in adapting the coalgebra symmetry approach and its applications to discrete
integrable systems, as initiated
in~\cite{Gubbiotti_EtAl2023.Coalgebra_symmetry_discrete_systems}.

Then, we discussed the existence of coalgebraic invariants for the systems
admitting coalgebra symmetry with respect to $\h_{6}$ with a direct search of
invariants for the system on the generators of the coalgebra~\eqref{eq:sysfin}.
This approach yielded a good zoology of systems in an arbitrary number of
degrees of freedom, including a MS one, two superintegrable ones, and three QI
ones. We identified our findings with systems known in the literature, both on
the discrete and the continuous level. We remark that the three discrete-time
QI systems are, up to our knowledge, new.

This direct search provided some interesting insights into the approach to the
problem, but it is unfortunately limited by the technical difficulties arising
with the growth of the degree of the invariant. However, the existence of
classes of QI systems depending on arbitrary functions pave the way to refined
searches for integrable subcases. We underline that such search program can be
now pursued using different methods, \emph{e.g.} an
adaptation of Nevanlinna theory to systems of first-order difference
equations~\cite{Ablowitz_et_al2000}, or algebraic methods based on construction
of commuting elements in the symmetric algebra of $\h_{6}$, see for
instance~\cite{CampoamorStursberg_etal2023,CampoamorStursberg_etal2024arXiv,CampoamorStursberg_etal2023JPhysConfSer}.
Other possibilities can involve the construction of systems associated to the
Lie subalgebras of $\h_{6}$, which will naturally be QI, and then look for the
missing invariant with a different method. That is, an adaptation of the
\emph{subalgebra integrability method} proposed
in~\cite{blascosanzIntegrabilidadSistemasNo2009}. 

We conclude, by remarking that there are many other open questions on the
application of the $\h_{6}$ coalgebra symmetry to discrete integrable systems.
For instance, in~\cite{Gubbiotti_EtAl2023.Coalgebra_symmetry_discrete_systems}
it was shown that the system:
\begin{equation}
    q_{k}(t+h)+q_{k}(t)+q_{k}(t-h) = \frac{\alpha}{\vec{q}^{2}(t)}q_{k}(t)+\lambda_{k},
    \label{eq:dPIN}
\end{equation}
is (super)integrable through the coalgebra symmetry method. However, an
algebraic explanation of the origin of the two additional invariants of this
system has not been given yet.

\section*{Acknowledgements}

PD and GG acknowledge the support of the research project Mathematical Methods
in NonLinear Physics (MMNLP), Gruppo 4-Fisica Teorica of INFN.  PD also
acknowledges the support from the Ph.D. program of the Universit\`a degli Studi
di Udine. GG's research was also partially supported by GNFM of the Istituto
Nazionale di Alta Matematica (INdAM).

\appendix

\section{Proof that the system~\eqref{eq:dELV1} is QMS through coalgebraic invariants}
\label{app:V1qms}

In this appendix, we prove that the system~\eqref{eq:dELV1} is QMS using
invariants of coalgebraic origin. Our starting point is the following
proposition, which states that we can find two additional invariants through
a direct search:

\begin{proposition}\label{prop:invariantsV1}
    The system~\eqref{eq:dELV1}, generated by the
    Lagrangian~\eqref{eq:lagrangianV1}, also admits the following invariants:
    \begin{subequations}\label{Ts}
        \begin{align}
            J_{1}   &\coloneqq
            \begin{aligned}[t]
	 	  &-\alpha_{+}^{2} \left(\alpha_{+}^{2}+\varkappa -2\right) A_{-} A_{+}+2 \alpha_{+} A_{-} B_{-}-2 \alpha_{+} \left(\alpha_{+}-1\right) \left(\alpha_{+}+1\right) A_{-} B_{+}  
	\\ 
	&+2 \alpha_{+} \left(\alpha_{+}^{2}+\varkappa \right) A_{-} K-2 \alpha_{+} \left(\alpha_{+}-1\right) \left(\alpha_{+}+1\right) A_{+} B_{-}
	+2 A_{+} \alpha_{+} B_{+}
	\\ 
        &+2 \alpha_{+} \left(\alpha_{+}^{2} +\varkappa \right) A_{+} K+B_{-}^{2}
	+\left(-\varkappa  \alpha_{+}^{2} -2 \alpha_{+}^{2}+2\right) B_{+} B_{-}
	+2 \varkappa  B_{-} 
	\\
	&+B_{+}^{2} +2 \varkappa  B_{+} K+\left(\varkappa  \alpha_{+}^{2}+\varkappa^{2}+2 \alpha_{+}^{2}\right) K^{2}+\alpha_{+}^{2} \left(\alpha_{+}^{2}+\varkappa +2\right) K M, 
            \end{aligned}
            \\
             \widehat J_{1} &\coloneqq  
             \begin{aligned}[t]
                \alpha_{+}^{2} A_{+} A_{-}+2 \alpha_{+} A_{-} B_{+}-2 \alpha_{+} A_{-} K+2 \alpha_{+} A_{+}  B_{-}-2\alpha_{+}  A_{+}  K
                \\
                +\left(\varkappa +2\right) B_{-} B_{+} +B_{-} M+B_{+} M
                -\left(\varkappa +2\right) K^{2}
                -\left(\alpha_{+}^{2}+2\right) K M
             \end{aligned}
        \end{align}
    \end{subequations}
\end{proposition}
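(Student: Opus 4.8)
The plan is to verify the invariance condition \eqref{invariants-general-def-t+h} directly on the generators, i.e.\ to check that $J_{1}(t+h)=J_{1}(t)$ and $\widehat J_{1}(t+h)=\widehat J_{1}(t)$ when the left-hand sides are evaluated along the discrete evolution. The first step is to specialize the general map \eqref{eq:sysfin} of \Cref{cor:system} to the potential $V_{1}$ of \eqref{linear-potential-solution-united}. The crucial structural observation is that for this potential the two derivatives are \emph{constant} in the generators, namely $\partial V_{1}/\partial A_{-}=-\alpha_{+}$ and $\partial V_{1}/\partial B_{-}=-\varkappa/2$. Since moreover $M(t+h)=M(t)$ is invariant, I may treat $M$ (hence the $M$-dependent coefficients $\alpha_{+},\varkappa$) as a fixed parameter, and the remaining five generators then evolve by an \emph{affine} map, i.e.\ a linear map together with a constant shift proportional to $M$.

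Explicitly, inserting these derivatives into \eqref{eq:sysfin} gives the specialized map
\begin{subequations}
\begin{align}
    K(t+h)   &= -K(t) - \alpha_{+} A_{-}(t) - \varkappa B_{-}(t) - M, \\
    A_{+}(t+h) &= A_{-}(t), \\
    A_{-}(t+h) &= -A_{+}(t) - \varkappa A_{-}(t) - \alpha_{+} M, \\
    B_{+}(t+h) &= B_{-}(t), \\
    B_{-}(t+h) &= B_{+}(t) + 2\alpha_{+} A_{+}(t) + 2\varkappa(1+\alpha_{+}) A_{-}(t) + \varkappa^{2} B_{-}(t) + \alpha_{+}^{2} M, \\
    M(t+h)   &= M.
\end{align}
\end{subequations}
One checks along the way that this is consistent with \eqref{eq:dELV1}: contracting the recurrence with $\lambda_{k}$ and summing reproduces $A_{-}(t+h)+\varkappa A_{-}(t)+A_{+}(t)+\alpha_{+}M=0$, so no independent input is needed here.

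Next I would substitute this affine map into the explicit polynomials \eqref{Ts} and expand. Because $J_{1}$ and $\widehat J_{1}$ are of total degree at most two in the five dynamical generators (with $\alpha_{+},\varkappa$-dependent coefficients), and because the map is affine, each substituted expression is again a polynomial of degree at most two; invariance therefore reduces to a \emph{finite} comparison of coefficients. Concretely, I would collect in turn the purely quadratic part in $\{A_{\pm},B_{\pm},K\}$, then the linear part, then the constant part, and verify that each coefficient — a polynomial in $\alpha_{+}$ and $\varkappa$ — reduces to its value in \eqref{Ts}. Note in particular that $J_{1}$ and $\widehat J_{1}$ carry no degree-zero term, so the shift terms force the constant part produced by the substitution to cancel identically, which is a genuine (and nontrivial) constraint.

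The main obstacle is the sheer bulk of the algebra rather than any conceptual subtlety: the quadratic monomials $B_{-}^{2}$, $B_{+}B_{-}$ and $K^{2}$ generate, under substitution, a great many cross terms — for instance $B_{-}(t+h)^{2}=\bigl(B_{+}+2\alpha_{+}A_{+}+2\varkappa(1+\alpha_{+})A_{-}+\varkappa^{2}B_{-}+\alpha_{+}^{2}M\bigr)^{2}$ already expands into fifteen monomials — and invariance rests on numerous non-obvious cancellations among the $\alpha_{+},\varkappa$-dependent coefficients. In practice this verification is safest to carry out with a computer-algebra system. Finally, I would remark, to indicate how \eqref{Ts} was obtained rather than as part of the verification, that $J_{1}$ and $\widehat J_{1}$ arise by positing the most general polynomial of degree at most two in the generators, imposing \eqref{invariants-general-def-t+h}, and solving the resulting linear system for its coefficients; the two invariants listed then span the genuinely new solutions, i.e.\ those functionally independent of the already-known invariant $I_{1}$ and of the universal Casimir functions \eqref{left-and-right-casimirs}.
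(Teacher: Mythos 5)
Your overall scheme --- specialize the evolution of the generators to $V_{1}$, observe that it is affine (with $M$ treated as a parameter), and reduce invariance of $J_{1}$, $\widehat{J}_{1}$ to a finite coefficient comparison --- is exactly the ``direct construction/verification'' that the paper alludes to, and your equations for $K$, $A_{\pm}$, $B_{+}$, $M$ are correct. However, there is a genuine error that makes the verification fail: your equation for $B_{-}(t+h)$ is wrong. You took it verbatim from \eqref{eq:sysfin}, but the term $-4A_{-}(t)\,\partial V/\partial B_{-}$ in \eqref{eq:Bmhf} is itself an error in the paper: expanding $-2\vp\cdot\grad V$ in \eqref{eq:Bmhs} with $\grad V=\vl\,\partial V/\partial A_{-}+2\vq\,\partial V/\partial B_{-}$ and $\vq\cdot\vp=K+M/2$ (from \eqref{eq:h6reaN}) gives $-2A_{+}\,\partial V/\partial A_{-}-2(2K+M)\,\partial V/\partial B_{-}$, not $-2A_{+}\,\partial V/\partial A_{-}-4A_{-}\,\partial V/\partial B_{-}$. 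Deriving the map directly from the canonical equations \eqref{eq:canV1} one finds
\begin{equation*}
    B_{-}(t+h) = B_{+}(t) + 2\alpha_{+}A_{+}(t) + 2\alpha_{+}\varkappa\, A_{-}(t)
    + \varkappa^{2}B_{-}(t) + 2\varkappa\, K(t) + \left(\alpha_{+}^{2}+\varkappa\right)M,
\end{equation*}
whereas you have $2\varkappa(1+\alpha_{+})A_{-}(t)$ and no $K$ or $\varkappa M$ terms. The two maps differ whenever $\varkappa\neq0$, and the difference is fatal: for instance, at $\alpha_{+}=0$ your map produces in $\widehat{J}_{1}(t+h)$ a spurious term $2\varkappa(\varkappa+2)A_{-}B_{-}$ with no counterpart, so the coefficient comparison does not close and, carried out faithfully, your proof would ``disprove'' the proposition. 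A quick numerical check confirms this: for $N=1$, $\lambda_{1}=1$, $\varkappa=1$, $\alpha_{+}=0$, $q(0)=1$, $p(0)=2$ the true evolution gives $B_{-}(h)=9$, while your map predicts $7$.

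Note also that your consistency check --- contracting \eqref{eq:dELV1} with $\lambda_{k}$ --- only tests the $A_{-}$ equation, which is correct in both versions of the map, so it cannot detect this error; the analogous check on $\sum_{k}q_{k}(t+h)^{2}$ would have caught it immediately. Once the map is corrected, your coefficient-comparison scheme does go through (this is essentially what the paper did, by computer algebra), with one caveat you should be aware of when matching coefficients: the isolated term $2\varkappa B_{-}$ printed in $J_{1}$ must be read as $2\varkappa B_{-}K$ (symmetric to the term $2\varkappa B_{+}K$); with the literal reading, $J_{1}$ is not conserved even under the correct map, as the same numerical example shows.
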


Then, we prove the following theorem:

\begin{theorem}\label{th:QMS}
    The system~\eqref{eq:dELV1}, generated by the Lagrangian~\eqref{eq:lagrangianV1}, is quasi-maximally superintegrable
    because it admits a set of $2N-2$ functionally independent\ invariants:
    \begin{equation}\label{QMS-set} 
        \mathcal{S}_1 = \left\{ I_1, J_{1},  \widehat  {J}_{1}, C^{[3]}, \dots, C^{[N]}, C_{[3]}, \dots, C_{[N-1]} 
       	 \right\}, 
    \end{equation}  
    of which the first $N$ are in involution.
\end{theorem}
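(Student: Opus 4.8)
The plan is to verify, in turn, the three properties that together give quasi-maximal superintegrability: that every element of the set $\mathcal{S}_1$ in \eqref{QMS-set} is an invariant, that the $2N-2$ elements are functionally independent, and that the first $N$ listed are in involution. Invariance requires no new work: $I_1$ is an invariant by \Cref{prop:linear}, $J_1$ and $\widehat{J}_1$ are invariants by \Cref{prop:invariantsV1}, and the left and right Casimirs $C^{[3]},\dots,C^{[N]},C_{[3]},\dots,C_{[N-1]}$ are preserved by $\Phi_h$ through the general coalgebra construction, see \Cref{thm:sumup}. Hence the substance of the proof lies in functional independence and involutivity, both of which I would handle by mimicking the structure of the proofs of \Cref{th:SI} and \Cref{th:SI1}.

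For functional independence I would first invoke the locality described in \Cref{rmk-2N-3}: the invariants $C^{[3]},\dots,C^{[N-1]}$ and $C_{[3]},\dots,C_{[N-1]}$ depend only on proper blocks of the canonical variables and constitute $2N-6$ functionally independent functions. It therefore suffices to show that the four non-local invariants $I_1$, $J_1$, $\widehat{J}_1$ and $C^{[N]}=C$ supply four additional independent directions, i.e.
\begin{equation}
    \rank\left[\frac{\partial(I_1,J_1,\widehat{J}_1,C)}{\partial(\vq,\vp)}\right]=4.
\end{equation}
Since all four are polynomials in the generators $A_+,A_-,B_+,B_-,K$, I would factor the Jacobian exactly as in \eqref{eq:JacDecV2I},
\begin{equation}
    \frac{\partial(I_1,J_1,\widehat{J}_1,C)}{\partial(\vq,\vp)}
    =
    \frac{\partial(I_1,J_1,\widehat{J}_1,C)}{\partial(A_+,A_-,B_+,B_-,K)}\,
    \frac{\partial(A_+,A_-,B_+,B_-,K)}{\partial(\vq,\vp)},
\end{equation}
where the second factor is the $5\times 2N$ matrix \eqref{eq:jacobian-partial-generators-in-realization} of maximal rank $5$. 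The problem then reduces to showing that the abstract $4\times 5$ Jacobian on the right has rank $4$, which I would establish by producing one non-vanishing $4\times4$ minor; the rank of the product is then $4$ by the product rule for ranks used at the end of \Cref{th:SI}.

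Involutivity must be checked only among the first $N$ functions $\{I_1,J_1,\widehat{J}_1,C^{[3]},\dots,C^{[N-1]}\}$, and I would split the pairwise brackets into two families. The brackets between any of $I_1,J_1,\widehat{J}_1$ and any left Casimir $C^{[m]}$, together with the brackets $\{C^{[m]},C^{[n]}\}$, all vanish by the defining property of the coalgebra symmetry method recalled in \Cref{thm:sumup}: the coproduct Casimirs mutually commute and commute with every function of the total generators, and $I_1,J_1,\widehat{J}_1$ are precisely such functions. The only genuinely new computation is the vanishing of the three brackets $\{I_1,J_1\}$, $\{I_1,\widehat{J}_1\}$ and $\{J_1,\widehat{J}_1\}$, which I would compute directly in $S\h_6$ from the commutation rules \eqref{eq:h6comm} and the explicit forms \eqref{Ts}. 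Combining the two families gives that the first $N$ invariants Poisson-commute, which together with functional independence establishes that the system is quasi-maximally superintegrable.

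The main obstacle is purely computational, and it sits in two places. First, because $J_1$ and $\widehat{J}_1$ are long polynomials in the generators, assembling the abstract $4\times5$ Jacobian and extracting a $4\times4$ minor that is manifestly non-zero is heavy; one must also confirm that the selected minor does not vanish identically on the image of the realization \eqref{eq:h6reaN}, where the generators are constrained by the Casimir relation \eqref{eq:Cash6}. Second, verifying the three involutivity brackets requires careful bookkeeping with the $\h_6$ brackets. Both tasks are routine in principle and, as for \Cref{th:SI} and \Cref{th:SI1}, are best carried out with a computer algebra system.
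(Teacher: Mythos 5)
Your proposal is correct, and its skeleton — invariance from \Cref{prop:linear}, \Cref{prop:invariantsV1} and \Cref{thm:sumup}; reduction to the four non-local invariants via locality of the lower Casimirs; factorization of the Jacobian through the generators; and the rank-of-a-product rule — is exactly that of the paper's proof in \cref{app:V1qms}.

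The one place where you genuinely diverge is the final rank computation. You propose to treat all four non-local invariants at once and exhibit a non-vanishing $4\times4$ minor of the $4\times5$ abstract Jacobian of $(I_1,J_1,\widehat{J}_1,C)$ with respect to $(A_+,A_-,B_+,B_-,K)$, in strict analogy with \cref{th:SI,th:SI1}. The paper deliberately avoids this: it first proves
\begin{equation*}
    \rank\left[\frac{\partial(I_1,J_1,\widehat{J}_1)}{\partial(\vq,\vp)}\right]=3
\end{equation*}
via a single $3\times3$ minor of the abstract $3\times5$ Jacobian~\eqref{Jacobian-abstract}, evaluated at $\varkappa=0$ by continuity of the determinant, and then appends the row $\grad C$ through a parameter argument: every entry of the first three rows of~\eqref{jacobian-realization-zeta} is a polynomial in $\alpha_+$ with non-trivial dependence, whereas $C$ — and hence its gradient — does not involve $\alpha_+$ at all, so the fourth row cannot be a linear combination of the first three. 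Your route is viable; indeed, since the full Jacobian has rank $4$ and the realization Jacobian~\eqref{eq:jacobian-partial-generators-in-realization} has full row rank, the abstract $4\times5$ matrix must have rank $4$, so the minor you seek exists. What the paper's two-step argument buys is precisely the avoidance of $4\times4$ determinants built from the heavy entries $\Theta_\pm,\Xi_\pm,\Omega,\Psi_\pm,\Pi_\pm,\Upsilon$ together with $\grad C$; what your route buys is uniformity with the proofs of \cref{th:SI,th:SI1}. Two further points in your favor: your caveat that the chosen minor must not vanish identically on the image of the realization~\eqref{eq:h6reaN} is a genuine requirement that both arguments need and the paper handles only implicitly; and your treatment of involutivity — Casimir-related brackets from the general theory, plus direct verification of $\{I_1,J_1\}=\{I_1,\widehat{J}_1\}=\{J_1,\widehat{J}_1\}=0$ in $S\h_6$ — is the correct split and is spelled out more explicitly than in the paper, whose proof of \cref{th:QMS} leaves that step tacit.
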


\begin{proof}
    From \Cref{thm:h6}, \Cref{prop:linear}, and \Cref{prop:invariantsV1}, we
    have that the system~\eqref{eq:dELV1} admits the invariants listed in
    equation~\eqref{QMS-set}.
  
    Similarly to the proofs of \Cref{th:SI} and \Cref{th:SI1} it is enough to show
    that:
    \begin{align}
        \rank\left[ \frac{\partial (I_1, J_{1},  \widehat {J}_{1}, C )}{\partial ( \vec q, \vec p)    } \right]= 4.
    \end{align}

    Let us start by proving that:
    \begin{align}
        \rank\left[ \frac{\partial (I_1, J_{1},  \widehat {J}_{1} )}{\partial ( \vec q, \vec p)    } \right]= 3.
    \end{align}
    Observe that the $I_{1}$, $J_{1}$, and $\widehat {J}_{1}$ have coalgebraic
    origin. So, we can consider them as functions of the generators of algebra
    composed with the $N$-dimensional symplectic realization. 
    That is, we can consider the following decomposition of the Jacobian:
    \begin{equation}
        \label{eq:JacDecV1}
        \frac{\partial (I_1, J_{1},  \widehat {J}_{1}  )} {\partial (\vec q, \vec p)} = 
        \frac{\partial (I_1, J_{1},  \widehat {J}_{1}  )} 
        {\partial (A_+, A_-, B_+, B_-, K )} 
        . 
        \frac{\partial(A_+, A_-, B_+, B_-, K )}{\partial (\vec q, \vec p)},
    \end{equation}
    where the second factor is given by \eqref{eq:jacobian-partial-generators-in-realization}. 
    
     The other factor in~\eqref{eq:JacDecV1} has the form 
\begin{equation} \label{Jacobian-abstract} 
	   \frac{\partial (I_1,  J_{1},  \widehat {J}_{1}  )} 
	   {\partial (A_+, A_-, B_+, B_-, K )}  = 
	   \begin{pmatrix}
	   	\alpha_+ & \alpha_+ & 1 & 1 & \varkappa 
	   	\\
	   	\Theta_{-} & \Theta_{+} & \Xi_- & \Xi_+ & \Omega
	   		\\
	   	\Psi_- & \Psi_+  & \Pi_{-} & \Pi_+ & \Upsilon 
	   \end{pmatrix}_{3 \times 5} , 
    \end{equation}
    where 
    \begin{subequations}\label{matrix-coefficients}
    \begin{align}
            \Theta_{\pm} & = 2 \alpha_{+} \left(\alpha_{+}^{2}+\varkappa \right) K-\alpha_{+}^{2} \left(\alpha_{+}^{2}+\varkappa -2\right) A_{\pm}+2 \alpha_{+} B_{\mp}-2 \alpha_{+} \left(\alpha_{+}-1\right) \left(\alpha_{+}+1\right) B_{\pm},
            \\
            \Xi_{\pm} &=  2 \varkappa K  +2 \alpha_{+} A_{\mp} -2 \alpha_{+} \left(\alpha_{+}-1\right) \left(\alpha_{+}+1\right) A_{\pm}+2 B_{\mp}+\left(-\varkappa  \alpha_{+}^{2}-2 \alpha_{+}^{2}+2\right) B_{\pm},
            \\
            \Omega & 
            \begin{aligned}[t]
                &=\alpha_{+}^{2} \left(\alpha_{+}^{2}+\varkappa +2\right) M+\left(2 \varkappa  \alpha_{+}^{2}+2 \varkappa^{2}+4 \alpha_{+}^{2}\right) K+2 \alpha_{+} \left(\alpha_{+}^{2}+\varkappa \right) A_{-}
                \\
                &+2 \alpha_{+} \left(\alpha_{+}^{2}+\varkappa \right) A_{+}+2 \varkappa ( B_{-}+ B_{+}),
            \end{aligned}
    \end{align}
    and 
    \begin{align}
            \Psi_\pm &=  \alpha_{+}^{2} A_{\pm } -2\alpha_{+} K +2 \alpha_{+} B_{\pm}, 
            \\
            \Pi_\pm & = M+2 A_{\pm } \alpha_{+}+\left(\varkappa +2\right) B_{\pm},
            \\ 
            \Upsilon & =  -\left(\alpha_{+}^{2}+2\right) M
            -2\left( \varkappa +2\right) K
            -2 \alpha_+ ( A_+  + A_-
            ). 
    \end{align}
    \end{subequations}
    To prove that the matrix~\eqref{Jacobian-abstract} has a maximal rank $3$, it is sufficient to prove that, for instance, the following determinant does not vanish:
    \begin{equation}
    	\Delta \coloneqq \det \begin{pmatrix}
		\alpha_+ & 1 & \varkappa 
		\\ 
		\Theta_+ & \Xi_+ & \Omega 
		\\ 
		\Psi_+ & \Pi_+ & \Upsilon 
	\end{pmatrix} \ne 0. 
    \end{equation}
    By continuity of the determinant, we can prove this fact for $\varkappa=0$
    without loss of generality. Expanding the calculations, one arrives at 
    \begin{equation}
            \Delta \big|_{\varkappa = 0} =  
             - \alpha_+^2 (2 A_- + \alpha_+ M) 
             \left( \alpha_+^2 M 
             + 2 \alpha_+ (A_+ + A_-) + 2M + 4K
             \right), 
    \end{equation}
    which is not zero for non-vanishing $\alpha_+$. Therefore, 
    \begin{equation}
            \rank \left[ \frac{\partial ( I_1, J_{1},  \widehat {J}_{1}  )} 
               {\partial (A_+, A_-, B_+, B_-, K )}  \right] =3.
    \end{equation}

    Since in general for matrix products it holds
    $\rank (AB)= \min ( \rank(A), \rank(B))$, if $B$ has maximal rank,
    we can conclude that
    \begin{equation}\label{full-rank}
        \rank \left(   \frac{\partial ( I_1, J_{1},  \widehat {J}_{1}  )} {\partial (\vec q, \vec p)} \right)  
        =3.  
    \end{equation}
    Finally, the ``full'' Jacobi matrix $\partial (I_1, J_1, \widehat J_1, C^{[N]} )/ \partial (\vec q, \vec p) $  in the
    symplectic realization~\eqref{eq:h6reaN} can be rewritten as 
\begin{equation}\label{jacobian-realization-zeta}
		\frac{\partial ( I_1, J_{1},  \widehat {J}_{1} , C  )}{\partial ( \vec q, \vec p)    }  
		= \begin{pmatrix}
			 \vartheta_{1,1} & \dots &   \vartheta_{1,2N} 
			 \\ 
			  \vartheta_{2, 1} & \dots &  \vartheta_{2, 2N}
			 \\ 
			  \vartheta_{3,1} & \dots &  \vartheta_{3, 2N} 
			 \\ 
			   \frac{\partial C  }{\partial q_1 }  & \dots  &
  	 		  \frac{\partial  C }{\partial p_N } &
		\end{pmatrix}, 
\end{equation}
where the matrix elements are defined to be  
\begin{equation}
	 \vartheta_{i,j} \coloneqq 
	\left[    \frac{\partial ( I_1,  J_{1},  \widehat {J}_{1} )} 
	   {\partial (A_+, A_-, B_+, B_-, K )} 
	   . 
	   \frac{\partial(A_+, A_-, B_+, B_-, K )}{\partial (\vec q, \vec p)} \right]_{i,j}, 
	   \qquad 
	   i=1,2,3; \quad j =1,\dots, 2N, 
\end{equation}
All the matrix elements $ \vartheta_{i,j} $ are again polynomials with respect to
$\alpha_+$ of some degree by construction. In contrast, the left Casimir (and
hence the components of its gradient $	\nabla_{ ( \vec q, \vec p)} C^{[N]}$ )
do not depend on $\alpha_+$, hence all the elements $    {\partial C^{[N]}
}/{\partial q_i }  $ and $    {\partial C^{[N]} }/{\partial p_j}  $ are
polynomials of degree $0$ in $\alpha_+$. Therefore, the 4th row of
\eqref{jacobian-realization-zeta} is linearly independent of all other rows.
However, the linear independence of rows 1-3 from each other was already
established, since \eqref{full-rank} holds.  Therefore,  \begin{equation} \rank 	 
\left( \frac{\partial ( I_1, J_{1},  \widehat {J}_{1}, C^{[N]} )}{\partial ( \vec q, \vec p)    }  \right) =4,
\end{equation}
which ends the proof.
\end{proof}

\begin{remark}
    We remark that under the scaling \eqref{eq:coeff_scalingV1} the invariants
    $J_{1}$ and $\widehat{J}_{1}$ have the following continuum limits:
    \begin{subequations}\label{eq:ioms-cont-V1}
        \begin{align}
            J_{1} &=\vl^{4} +  \vl^2  \mathcal{H}_{1} + \mathcal{O}(h^{3}), 
            \\ 
            \widehat J_1  & =
            \vl^{4} + h^{2} \vl^2 \left[ 2 \mathcal{H}_{1}
            - \frac{\omega^2 \vl^2 }{4}
            \right]
            + \mathcal{O}(h^{3}). 
        \end{align}
    \end{subequations}
    That is, in the continuum limit, they collapse to the Hamiltonian of the
    system~\eqref{eq:cont-Hamiltonian-V1}. This is not surprising, as it
    happens in many cases, see for instance the examples in~\cite[\S
    4]{Gubbiotti_lagr_add}. In general, this does not give any precise insight
    on the integrability properties of the continuum limits, which must be
    discussed separately.
    \label{rmk:contlimI2J2}
\end{remark}

\printbibliography

\end{document}